\PassOptionsToPackage{noend}{algorithmic}
\documentclass[lettersize,journal]{IEEEtran}
\usepackage{amsmath,amsfonts,amssymb}
\usepackage{algorithmic}
\usepackage{algorithm}
\usepackage{array}
\usepackage{textcomp}
\usepackage{stfloats}
\usepackage{url}
\usepackage{verbatim}
\usepackage{graphicx}
\usepackage{cite}
\usepackage{xcolor}
\usepackage{enumitem}
\usepackage{amsthm}
\usepackage{multicol}
\usepackage{calc}
\usepackage{epsfig}
\usepackage{subcaption}
\usepackage[font=footnotesize]{caption}

\theoremstyle{plain}
\newtheorem{theorem}{Theorem}

\newtheorem{corollary}[theorem]{Corollary}

\theoremstyle{definition}

\newtheorem{remark}{Remark}

\newcommand{\cC}{\mathcal{C}}
\newcommand{\cN}{\mathcal{N}}

\newcommand{\cR}{\mathcal{R}}
\newcommand{\sfH}{\mathsf{H}}
\newcommand{\sfT}{\mathsf{T}}
\newcommand{\bC}{\mathbb{C}}
\newcommand{\bR}{\mathbb{R}}

\newcommand{\trace}{\mathrm{tr}}

\newcommand{\abs}[1]{{\left\lvert #1 \right\rvert}}
\newcommand{\bd}[1]{\boldsymbol{#1}}

\newcommand*\diff{\mathop{}\!\mathrm{d}}
\newcommand{\cond}{\mathchoice{\,\vert\,}{\mspace{2mu}\vert\mspace{2mu}}{\vert}{\vert}}
\def\rank{\mathop{\rm rank}\nolimits}%
\mathchardef\Re="023C
\mathchardef\Im="023D

\DeclareMathOperator\E{\mathbb{E}}

\newcommand\blfootnote[1]{%
\begingroup
\renewcommand\thefootnote{}\footnote{#1}%
\addtocounter{footnote}{-1}%
\endgroup
}

\begin{document}

\title{Active MIMO Sensing With Exploration-Exploitation Tradeoff}

\author{Nadim Ghaddar,~\IEEEmembership{Member,~IEEE,} Kareem M. Attiah,~\IEEEmembership{Member,~IEEE,} and Wei Yu,~\IEEEmembership{Fellow,~IEEE}
}

\maketitle

\begin{abstract}
\blfootnote{Manuscript submitted to IEEE Transactions on Information Theory on April 19, 2026. Nadim Ghaddar is with the Electrical and Computer Engineering Department, University of British Columbia, BC V6T 1Z4, Canada. Kareem M. Attiah and Wei Yu are with the Electrical and Computer Engineering Department, University of Toronto, Toronto, ON M5S 3G4, Canada. Emails: nadim.ghaddar@ece.ubc.ca, kareem.attiah@mail.utoronto.ca, weiyu@ece.utoronto.ca. Materials in this paper have been presented in part at the IEEE International Conference on Communications (ICC), Montreal, Quebec, Canada, June 2025~\cite{Ghaddar2025ICC}. This work was supported by Natural Science and Engineering Research Council (NSERC) via a Discovery Grant and via the Canada Research Chairs program.}
This paper develops an active sensing framework for designing the transmit and receive beamformers of a multiple-input multiple-output (MIMO) radar system. In the proposed technique, the beamformers are adaptively designed in each sensing stage based on the measurements made in the previous sensing stages. The beamformers are determined by minimizing the Bayesian Cram{\'e}r-Rao bound (BCRB) for the estimation of the unknown sensing parameters at each stage via Lagrangian dual optimization. To address the exploration-exploitation tradeoff that is inherent to such an adaptive design, this paper proposes two variants of the BCRB optimization problem: an exploration-centric variant, that ensures that multiple orthogonal beamforming directions are probed in each sensing stage, and an exploitation-centric variant, that does not restrict the number of optimal beamformers. Each variant of the optimization problem is solved via an alternating optimization algorithm that alternates between solving for the transmit beamformers and solving for the receive beamformers. The algorithm is shown to converge to a stationary point provided that each optimization problem is solved to global optimality. Moreover, this paper studies each of the two BCRB optimization sub-problems in the Lagrangian dual domain and shows that despite the non-convexity, global optimality is guaranteed provided that certain sufficient conditions hold. The conditions pertain to the multiplicity of the eigenvalues of a specific direction matrix that can be analytically written in terms of the optimal dual variables. These conditions further imply the tightness of the semidefinite relaxation of the optimization problems. The analysis of the dual problems also provides insights on the number of transmit beamformers that are sufficient to achieve the optimal BCRB. Simulation results demonstrate the benefits of the proposed BCRB-based design compared to state-of-the-art adaptive beamforming strategies.
\end{abstract}

\begin{IEEEkeywords}
MIMO radar, active sensing, beamforming, Bayesian Cram{\'e}r-Rao Bound, exploration-exploitation tradeoff, Lagrangian dual optimization.
\end{IEEEkeywords}

\section{Introduction}

Multiple-input multiple-output (MIMO) radar systems probe sensing targets by simultaneously sending multiple orthogonal waveforms, whose reflections are processed through multiple receive antennas. The primary advantage of MIMO radars is their ability to provide higher spatial resolution, better interference suppression and enhanced parameter estimation compared to conventional phased-array radars~\cite{Stoica2008}. Harnessing these benefits necessitates the use of highly-directional signaling (i.e., beamforming) in order to effectively steer the energy in the direction of the targets while nulling the interference from other sources. This need is further compelled by practical hardware constraints that often limit the number of radio frequency (RF) chains at the transmit and receive antenna arrays of the radars, and hence the number of beamformers that can be used.


This paper considers an active sensing process in which a MIMO radar system aims to sense multiple targets by sequentially acquiring measurements across multiple sensing stages. In each sensing stage, the radar system can adaptively design the transmit and receive beamformers based on the available historical measurements obtained from previous sensing stages. After a given number of sensing stages, the radar system uses all the acquired measurements to compute an estimate of the desired parameters. This work focuses on the task of adaptively designing the beamformers in order to estimate the angles and the reflection coefficients corresponding to the sensing targets, which are referred to hereafter as \emph{sensing parameters}.

In each sensing stage, the beamformers are designed by optimizing the Bayesian Cram{\'e}r-Rao bound (BCRB), which is a lower bound on the mean-squared error (MSE) of any unbiased estimator of the sensing parameters~\cite{VanTrees1968}. This bound has been extensively used as a performance metric for sensing applications (e.g.,~\cite{Bekkerman2006,Bliss2008,Huleihel2013,FanLiu2022_2,FanLiu2022_3,Zhu2023,Shuowen2024,Kareem2025_2,Shuowen2024_isit,Kareem2025,Shuowen2026}). The main advantage of adopting a Bayesian approach is that non-Bayesian bounds, in general, may depend on the set of unknown parameters that are to be estimated, in which case the optimal beamformers would also depend on those parameters. Furthermore, in many applications such as target tracking (e.g.,~\cite{Hurtado2008}), some prior statistical information on the sensing parameters may be available, which makes the use of Bayesian bounds natural for such applications. 

This work provides new insights into the BCRB optimization problem in the active sensing setting. In particular, this paper shows that the performance of an active sensing strategy via BCRB optimization is governed by an \emph{exploration-exploitation tradeoff}. This tradeoff is a central theme in sequential decision making problems~\cite{Sutton2018}. In the context of MIMO radar sensing, the exploration-exploitation tradeoff is manifested by the radar's need to balance between \emph{exploring} the environment in order to acquire new information about the targets and between \emph{exploiting} the current knowledge in order to narrow down the search. This paper proposes to approach this tradeoff through a two-stage process: in the first stage, an exploration-centric variant of the BCRB optimization problem is considered, in which the transmit beamformers are constrained to be orthogonal, and in the second stage, an exploitation-centric variant of the BCRB optimization is considered, in which no such restriction is made. The exploration-centric variant ensures that multiple orthogonal beamforming directions are probed, whereas the exploitation-centric variant typically steers beamforming toward a single direction. The results of this paper demonstrate that using an initial exploration phase improves the sensing performance in the active sensing setting, particularly in the low signal-to-noise-ratio (SNR) regime.


Furthermore, this paper makes some observations about the single-stage BCRB optimization problem (i.e., the passive sensing setting). In general, when the number of beamformers to be optimized is strictly less than the number of antennas, the minimization of the BCRB metric is a non-convex optimization problem, which makes it challenging to solve in practical systems. By analyzing the optimization problem in the Lagrangian dual domain, this paper derives conditions under which the BCRB optimization can be solved to global optimality, despite the non-convex constraints. More specifically, this paper shows that the BCRB optimization problem reduces to that of maximizing the beamforming gain with respect to a specific direction matrix that can be derived analytically in terms of the optimal dual variables. When this direction matrix satisfies certain conditions pertaining to the multiplicity of its eigenvalues, a globally optimal solution of the BCRB optimization problem can be found. The derived conditions provide a \emph{certificate of optimality} for the proposed beamformer design under the BCRB metric. The analysis of the dual problem also gives insights on the number of transmit beamformers needed to optimize the BCRB metric.

\subsection{Related Work}

The design of beamforming strategies for MIMO radar sensing has been extensively investigated in the literature. A design of the transmit beamformers by optimizing a mutual-information-based metric and the mean-squared error has been considered in~\cite{Yang2007,Lops2007}. In~\cite{Stoica2007}, a design based on approximating a given transmit beampattern is proposed, and it is shown that the desired beampattern should focus in the directions of the sensing targets. The use of the Cram{\'e}r-Rao bound as a metric for waveform optimization in MIMO radar systems has been initially considered in~\cite{Bliss2005} in the case of a single target and later extended to multiple targets in~\cite{Bliss2008}. More recently, the design of MIMO radars that are capable of simultaneously serving communication users within an integrated sensing and communication (ISAC) system has been an active area of research, with a plethora of literature that focuses on beamformer optimization, e.g.,~\cite{FanLiu2020,Yonina2020,FanLiu2022,FanLiu2022_2,FanLiu2022_3,Davidson2023,Yonina2023_2,Han2023,Zhu2023,Shuowen2024,Shuowen2024_isit,Kareem2025,Kareem2025_2,Shuowen2026}. 


Adaptive beamforming strategies have been applied for MIMO radar sensing, especially in the context of cognitive radars~\cite{Haykin2006}, and are shown to significantly improve the sensing performance compared to their non-adaptive counterparts~\cite{Goodman2007, Haykin2008, Huleihel2013}. Such adaptive designs are particularly useful for target tracking applications, in which the transmitted waveforms have to be adapted based on the previous information about the moving targets. Common metrics used for adaptive beamformer design include the mean-squared tracking error~\cite{Kershaw1994,Kershaw1997}, the Bayesian Cram{\'e}r-Rao bound (BCRB)~\cite{Hurtado2008,Huleihel2013}, and the mutual information between the target impulse response and the observations~\cite{Sen2010}.



Similar adaptive beamforming techniques have been proposed in the context of wireless communication systems in order to perform sensing-like functionalities (e.g., channel estimation)~\cite{Alkhateeb2014, Javidi2019, Sohrabi2021, Sohrabi2022, Tao2024, Han2025, Ghaddar2025ICC}. In~\cite{Alkhateeb2014, Javidi2019}, the beamforming vectors in each sensing stage are selected from a pre-designed beamforming codebook e.g., based on bisection search over the desired spatial range. However, it has been observed that the performance of codebook-based techniques is limited by the quality of the codebook~\cite{Sohrabi2021}. Alternatively, deep-learning-based solutions for active sensing have been proposed in the literature, e.g.~\cite{Sohrabi2021, Sohrabi2022, Tao2024, Han2025}. Despite being codebook-free, these solutions require training new learning models for each specific target channel response, and their generalizability to varying channel conditions or varying number of sensing stages is questionable~\cite{Ghaddar2025ICC}. 

This paper proposes to jointly design the transmit and receive beamformers of a MIMO radar system by optimizing the BCRB metric. The adaptive design of beamformers using BCRB optimization has been considered in~\cite{Huleihel2013,FanLiu2022_2,FanLiu2022_3,Zhu2023,Shuowen2024,Kareem2025_2}. While some of these works are intended for ISAC system models, their results can be specialized to the sensing-only model (i.e., with no communication users) when designated beamformers are used for sensing. These works, however, focus on the design of the transmit beamformers only. Moreover, when specialized to the sensing-only model, these works assume that the number of sensing beamformers to be optimized is equal to the number of transmit antennas. This assumption makes the BCRB optimization problem (over the sensing beamformers) convex. Alternatively, this paper considers the challenging case where the number of beamformers is limited (e.g., due to a limited number of RF chains) and strictly less than the number of antennas, in which case the BCRB optimization problem is non-convex. Recently, the work~\cite{Shuowen2026} considers the single-stage BCRB optimization problem in an RF-chain-limited ISAC system for the case of a single sensing target and a single parameter to be estimated (i.e., the angle parameter). Instead, this paper considers the general setting of multiple sensing targets, and thus multiple parameters to be estimated, and derives sufficient conditions for the global optimality of the BCRB optimization problem. 

For the problem of designing the transmit beamformers, theoretical bounds on the number of beamformers needed to optimize the BCRB metric are provided in~\cite{Shuowen2024_isit} and~\cite{Kareem2025}. The work~\cite{Shuowen2024_isit} shows that, in the case of a single sensing target and a single parameter to be estimated (i.e., the angle parameter), a single transmit beamformer is sufficient to optimize the BCRB metric. In the case of multiple sensing targets, the work in~\cite{Kareem2025} uses a rank reduction method for semidefinite programs~\cite{Palomar2010,Pataki1998} to derive upper bounds on the number of transmit beamformers that are sufficient to optimize the BCRB metric. In this paper, the analysis of the Lagrangian dual of the BCRB optimization problem reveals that the number of transmit beamformers needed is at most equal to the multiplicity of the largest eigenvalue of a specific direction matrix (that is derived in terms of the optimal dual variables). While the exact dependence of this multiplicity on the number of sensing targets remains unresolved, the analysis provided in this paper allows to identify the eigenspace (corresponding to the largest eigenvalue) which contains the optimal set of beamformers, thus reducing the search space.

\subsection{Main Contributions}
This paper develops an active sensing framework for designing the transmit and receive beamformers of a MIMO radar system via BCRB optimization. The main contributions of this paper can be summarized as follows:
\begin{enumerate}
\item To design the beamformers adaptively, this paper optimizes the BCRB metric in each sensing stage. Specifically, the posterior distribution of the sensing parameters is tracked and updated across the sensing stages, and used as the prior distribution for the design of the beamformers in the subsequent sensing stages. To address the exploration-exploitation tradeoff that governs the search over the sensing parameter space, this paper proposes a two-stage process in which two variants of the BCRB optimization problem are solved: an exploration-centric variant that constrains the transmit beamformers to be orthogonal, and an exploitation-centric variant that does not restrict the number of optimal transmit beamformers. The simulation results reported in this paper demonstrate that the use of an initial exploration phase significantly improves the sensing performance in the low-to-moderate SNR regime.
\item For each of the two variants, the transmit and receive beamformers are optimized jointly via an alternating optimization algorithm. That is, for a fixed number of iterations, the algorithm alternates between solving for the transmit beamformers and solving for the receive beamformers. The proposed algorithm is guaranteed to converge to a stationary point as the number of iterations increases, provided that each sub-problem is solved to global optimality.
\item Each optimization sub-problem (over the transmit/receive beamformers) is analyzed in the Lagrangian dual domain and is shown to simplify to that of maximizing the beamforming gain with respect to a direction matrix that can be analytically written in terms of the optimal dual variables. Despite the non-convexity of each optimization problem, this paper derives sufficient conditions that guarantee the global optimality of the proposed solution. The sufficient conditions pertain to the multiplicity of the eigenvalues of the direction matrix. Furthermore, it is shown that these conditions imply that the semidefinite relaxation of each optimization problem is tight. This analysis gives an alternative approach to find the globally optimal solution of the BCRB optimization problem via Lagrangian duality (provided that the sufficient conditions hold).
\item In case the sufficient conditions do not hold, the analysis of the Lagrangian dual problem allows to identify the eigenspaces (corresponding to the direction matrix) which necessarily contain the globally optimal set of beamformers. Hence, the analysis allows to reduce the search space of the optimal beamformers. In the case of optimizing over the transmit beamformers, this analysis further implies that the minimum number of transmit beamformers that can achieve the optimal BCRB metric is at most equal to the multiplicity of the largest eigenvalue of the direction matrix. Collectively, these contributions provide a fairly comprehensive understanding of the optimal solutions of the BCRB optimization problem in the context of MIMO radar sensing.
\end{enumerate}

%
%
%

\subsection{Paper Organization and Notation}
The remaining of the paper is organized as follows. Section~\ref{sec:model} presents the system model considered in this paper and the formulation of the two variants of the BCRB optimization problem. Section~\ref{sec:proposed-P1} addresses the exploitation-centric variant, whereas Section~\ref{sec:proposed-P2} focuses on the exploration-centric variant. For both variants, an alternating optimization algorithm is described, and sufficient conditions for global optimality are derived using Lagrangian duality. Section~\ref{sec:proposed} describes the proposed active sensing strategy by utilizing the framework developed in the previous sections. In Section~\ref{sec:simulations}, the performance of the proposed design is evaluated and compared to existing beamforming strategies. Finally, Section~\ref{sec:conclusion} concludes the paper.

Notation: The following notation will be used. Scalars, vectors and matrices are denoted using lowercase, boldface lowercase and boldface uppercase letters respectively. The space of $m\times n$ real (complex) matrices is denoted by $\mathbb{R}^{m\times n}$ ($\mathbb{C}^{m\times n}$). ${\bf I}_n$ is used to denote the $n\times n$ identity matrix. The notation $(\cdot)^\sfT$, $(\cdot)^\sfH$, $(\cdot)^{-1}$ and $\trace(\cdot)$ is used to represent the transpose, conjugate transpose, inverse and trace of a matrix respectively. For a given square matrix ${\bf A}$, $\mu_i({\bf A})$ denotes the $i$-th largest eigenvalue of ${\bf A}$. The operators $\otimes$ and $*$ denote the Kronecker product and the column-wise Khatri-Rao product of two matrices respectively. The notation $\Re\{\cdot\}$ and $\Im\{\cdot\}$ denote the real and imaginary parts of a complex number. $\mathbb{E}[\cdot]$ denotes the expectation operator of a random variable, and $\mathcal{CN}(\bd{\mu}, {\bf \Sigma})$ denotes a circularly-symmetric complex Gaussian distribution with mean $\bd{\mu}$ and covariance matrix ${\bf \Sigma}$.

\section{System Model and Problem Formulation} \label{sec:model}
\subsection{System Model}
This paper considers a narrowband monostatic MIMO radar consisting of two co-located arrays equipped with $N_T$ transmit antennas and $N_R$ receive antennas that aims to adaptively estimate the sensing parameters of $L$ point targets. The transmit and receive arrays employ $M_T$ and $M_R$ RF chains respectively, with $M_T < N_T$ and $M_R < N_R$. In particular, in the $k$-th sensing stage, the transmitted waveform is constructed as a linear combination of a set of $M_T$ temporally orthogonal waveforms to be transmitted over a pulse duration $T_p$, i.e.,
\begin{equation} \label{eqn:transmit}
{\bf x}_k(t) = {\bf V}_k{\bf s}_k(t),
\end{equation}
where ${\bf s}_k(t) = \begin{bmatrix}
s_{k,1}(t) &\cdots & s_{k,M_T}(t)
\end{bmatrix}^\sfT$ are temporally-orthogonal waveforms with $\int_0^{T_p} s_{k,i}(t)s_{k,j}^{*}(t)\diff t = \delta_{ij}$ (e.g., chirp signals with appropriate frequency shifts), and ${\bf V}_k \in \mathbb{C}^{N_T \times M_T}$ is the transmit analog beamforming matrix used in the $k$-th sensing stage. The waveform design given in~(\ref{eqn:transmit}) allows to decouple the temporal and spatial components of the transmitted signal, where the orthogonal signals ${\bf s}_k(t)$ are designed based on their Delay-Doppler properties (subject to the orthogonality constraint), while the transmit beamforming matrix ${\bf V}_k$ is designed to achieve a specific beam pattern~\cite{Friedlander2012}. The transmitted waveform ${\bf x}_k(t)$ is subject to a total transmit power constraint $P$. Since the orthogonal waveform vector ${\bf s}_k(t)$ is normalized to have unit power, this implies that the transmit beamforming matrix satisfies
\begin{equation}
\trace({\bf V}_k{\bf V}_k^\sfH) \leq P.
\end{equation}

The radar system aims to sense $L$ point targets that are situated in the far field (e.g., the targets can be unmanned aerial vehicles (UAVs)), with each target having a single line-of-sight path to the radar\footnote{This paper implicitly assumes that the number of sensing targets $L$ is known to the radar prior to the parameter estimation phase. In practice, this knowledge can be obtained through an earlier beam sweeping stage.}. The target response matrix can be expressed as~\cite[Chapter 4]{Stoica2008},~\cite{Tabrikian2006}
\begin{equation} \label{eqn:H_theta}
{\bf H}(\bd{\theta}) = \sum_{i = 1}^{L} \alpha_i {\bf a}_{\rm R}(\phi_i){\bf a}_{\rm T}^\sfH (\phi_i),
\end{equation}
where:
\begin{itemize}
\item $\alpha_i \sim \cC\cN(0,1)$ is the reflection coefficient of the $i$-th target, which	contains both the round-trip path-loss and the radar cross-section (RCS) of the target,
\item $\phi_i$ is the azimuth angle the $i$-th target relative to the radar\footnote{Note that in a monostatic radar setting, the direction of arrival (DoA) and the direction of departure (DoD) of each target are the same.}, and is assumed to be uniformly distributed over a range of interest $[\phi_{\mathrm{min}}, \phi_{\mathrm{max}}]$,
\item ${\bf a}_{\rm T}(\cdot)$ and ${\bf a}_{\rm R}(\cdot)$ are the transmit and receive array response vectors respectively, each corresponding to a uniform linear array with half-wavelength antenna spacing\footnote{While uniform linear arrays are considered for simplicity, the results of this paper apply with more generality to arbitrary non-uniform and sparse array configurations.}, i.e.,
\begin{equation}
{\bf a}_{\rm S}(\phi) = \begin{bmatrix}
	1 & e^{\imath \pi\sin \phi} & \cdots & e^{\imath (N_{\rm S}-1)\pi\sin \phi}
\end{bmatrix}^\sfT,
\end{equation}
for ${\rm S} \in \{{\rm R}, {\rm T}\}$, where $\imath =\sqrt{-1}$ denotes the complex imaginary unit,
\item $\bd{\theta} = [\bd{\phi}^\sfT, \Re\{\bd{\alpha}^\sfT\}, \Im\{\bd{\alpha}^\sfT\}]^\sfT \in \bR^{3L}$ is the set of sensing parameters with $\bd{\phi} = [\phi_1, \ldots, \phi_L]^\sfT$ and $\bd{\alpha} =[\alpha_1, \ldots, \alpha_L]^\sfT$.
\end{itemize}

At the receiving end, the radar employs a receive beamforming matrix (i.e., combiner) consisting of $M_R$ beamformers. In the presence of the $L$ targets in the same delay-Doppler bin, the baseband received signal in the $k$-th sensing stage can be expressed as
\begin{equation} \label{eqn:receive}
{\bf y}_k(t) = {\bf W}_k^\sfH {\bf H}(\bd{\theta}) {\bf V}_k{\bf s}_k(t - \tau)e^{j\omega_D t} + {\bf W}_k^\sfH {\bf n}_k(t),
\end{equation}
for $t \in [0,T_p]$, where ${\bf W}_k \in \mathbb{C}^{N_R\times M_R}$ is the receive analog beamforming matrix used in the $k$-th sensing stage, $\tau$ and $\omega_D$ are respectively the propagation delay and the Doppler frequency shift of the echo signals reflected from the targets, and ${\bf n}_k(t) \in \mathbb{C}^{N_R\times 1}$ is a white Gaussian noise vector, which is assumed to be temporally and spatially circularly symmetric complex Gaussian with unit-variance components.

The received waveform ${\bf y}_k(t)$ is processed through a bank of $M_T$ matched filters, each matched to one of the waveforms in ${\bf s}_k(t)$ and tuned to the delay and Doppler offsets of the echo signals. The discrete-time received signal after matched filtering can be expressed as
\begin{equation} \label{eqn:model}
{\bf Y}_k = {\bf W}_k^\sfH {\bf H}(\bd{\theta}) {\bf V}_k + {\bf W}_k^\sfH {\bf Z}_k,
\end{equation}
where ${\bf Z}_k \in \mathbb{C}^{N_R \times M_T}$ is a matrix of independent, zero-mean, circularly-symmetric complex Gaussian random variables, each with unit variance. Finally, we point out that this paper does not impose any unit modulus constraints on the analog beamforming matrices ${\bf V}_k$ and ${\bf W}_k$. Note that analog beamformers without modulus constraints can be implemented using vector modulators~\cite{Ellinger2001} or double phase shifters~\cite{Letaief2016}.

\subsection{Active Sensing}
This paper considers the active sensing setup of cognitive radars~\cite{Haykin2006}, in which a radar system adaptively interrogates the propagation channel using the available information from previous measurements. More specifically, in this active sensing setting, the radar system adaptively designs the transmit and receive beamforming matrices ${\bf V}_k$ and ${\bf W}_k$ in the $k$-th sensing stage based on the received signals from previous sensing stages, where the end goal is to estimate the set of sensing parameters given by $\bd{\theta}$ after a total of $T$ sensing stages. In other words, the radar system sets the beamforming matrices in the $k$-th sensing stage to
\begin{equation}
({\bf V}_{k}, {\bf W}_{k}) = g_k({\bf Y}_{1:k-1}, {\bf V}_{1:k-1}, {\bf W}_{1:k-1})
\end{equation}
for some function $g_k$. After making $T$ measurements ${\bf Y}_{1:T} = ({\bf Y}_1,\ldots, {\bf Y}_T)$ across all sensing stages, the radar computes an estimate of the sensing parameters $\boldsymbol{\hat \theta}$ as a function of all received measurements and beamforming matrices, 
\begin{equation}
\boldsymbol{\hat\theta} = f({\bf Y}_{1:T}, {\bf V}_{1:T}, {\bf W}_{1:T}),
\end{equation}
for some estimator $f$. Note that when $N_R = M_R$ and ${\bf W}_k = {\bf I}_{N_R}$ $\forall \, k$, the problem boils down to adaptively designing the transmit beamforming matrix ${\bf V}_k$ only, which has been previously considered in~\cite{Huleihel2013}, whereas when $N_T = M_T = 1$, the problem reduces to designing the receive beamforming matrix ${\bf W}_k$ only, which is equivalent to the problem considered in~\cite{Ghaddar2025ICC}. In this paper, we consider the general setting where both transmit and receiver beamformers are to be adaptively and jointly designed for accurate radar sensing. 


The performance of a beamforming strategy is measured by the weighted mean-squared error (WMSE), defined as
\begin{equation}
\mathrm{WMSE} = \sum_{i = 1}^{3L}q_i\mathbb{E}\left[\abs{\theta_i - {\hat\theta}_i}^2 \right], \\
\end{equation}
where $(q_1, \ldots, q_{3L})$ are given non-negative weighting coefficients such that $\sum_{i = 1}^{3L}q_i = 1$, and the expectation is over the distribution of all stochastic parameters of the model, i.e., $\bd{\phi}$, $\boldsymbol{\alpha}$, and ${\bf Z}_{1:T}$. The use of the weighting coefficients allows to weight the mean-squared error of each parameter in $\bd{\theta}$ differently (e.g., different weights for $\bd{\phi}$ and $\bd{\alpha}$), or to discard some of the sensing parameters if there is no interest in estimating them (i.e., nuisance parameters) by setting their corresponding coefficients to zero.

\subsection{Problem Formulation}
Based on the above performance metric, the active sensing problem can be formulated as:
\begin{subequations} \label{eqn:active-sensing}
\begin{align}
\underset{\{g_k\}_{k=1}^{T}, f}{\text{minimize}} \quad &\sum_{i = 1}^{3L}q_i\mathbb{E}\left[\|\theta_i - {\hat\theta}_i\|^2 \right]\\
\text{subject to} \quad &({\bf V}_k, {\bf W}_k) = g_k({\bf Y}_{1:k-1}, {\bf V}_{1:k-1}, {\bf W}_{1:k-1}) \,\, \forall k,\\
&{\boldsymbol{\hat\theta}} = f({\bf Y}_{1:T}, {\bf V}_{1:T}, {\bf W}_{1:T}).
\end{align}
\end{subequations}
Problem~(\ref{eqn:active-sensing}) is a joint optimization over the adaptive sensing strategies $\{g_k\}_{k=1}^{T}$ and the angle estimator function $f$. In general, solving such an optimization over functional expressions is quite challenging. In practice, the common approach is to choose the beamforming matrices from a pre-designed codebook. For example, some existing strategies employ a hierarchical beamforming codebook~\cite{Alkhateeb2014,Javidi2019}, in which an adaptive bisection search of the targets over the angular space is done. The performance of such schemes is governed by the quality of the hierarchical codebook. Instead, it has been shown that the performance can be improved by a codebook-free design, in which problem~(\ref{eqn:active-sensing}) is solved in a data-driven fashion~\cite{Sohrabi2021,Sohrabi2022}. However, the learning-based solutions require training a new model for each specific set of channel parameters (e.g., signal-to-noise ratio, number of sensing stages, desired angle resolution, etc.).


This paper designs a codebook-free beamforming strategy that does not require re-training new models for varying channel conditions. In particular, we consider a Bayesian formulation in which a prior distribution of $\bd{\theta}$ is assumed at the beginning, and then the prior is updated in each subsequent sensing stage and set equal to the posterior distribution of $\bd{\theta}$ given the previous measurements. Indeed, the posterior distribution of $\bd{\theta}$ is a sufficient statistic for the estimation problem. In this Bayesian framework, we adopt sensing strategies $\{g_k^{\mathrm{BCRB}}\}_{k=1}^T$ that minimize the Bayesian Cram{\'e}r--Rao bound (BCRB) as a function of the beamforming matrices at each sensing stage, then finally a minimum mean-squared error (MMSE) estimator $f^{\mathrm{MMSE}}$ to estimate the sensing parameters.

Unlike the classical CRB, which depends on the unknown parameters $\bd{\theta}$ (e.g.,~\cite{Bekkerman2006,Bliss2008}), the BCRB provides a lower bound on the MSE averaged over the prior distribution of $\bd{\theta}$~\cite{VanTrees1968}. Specifically, when applied to the active sensing problem after observing the first $k-1$ measurements, any unbiased estimator $\bd{\hat\theta}_k$ must satisfy 
\begin{equation} \label{eqn:conditional_mse}
\E\left[(\bd{\theta}-\bd{\hat\theta}_k)(\bd{\theta} - \bd{\hat\theta}_k)^\sfT \, \Big| \, {\bf Y}_{1:k-1}\right] \succeq {\bf J}_k^{-1}({\bf V}_k, {\bf W}_k),
\end{equation}
where $\succeq$ denotes inequality with respect to the positive semidefinite (PSD) cone, ${\bf J}_k({\bf V}_k,{\bf W}_k) \in \mathbb{C}^{3L\times 3L}$ is the Bayesian Fisher information matrix (BFIM) computed over the prior distribution of $\bd{\theta}$, and the expectation is taken over the joint distribution of $({\bf Y}_k, \bd{\theta})$ given the measurements ${\bf Y}_{1:k-1}$ observed so far\footnote{Note that the BFIM ${\bf J}_k({\bf V}_k,{\bf W}_k)$ also depends on the previous measurements ${\bf Y}_{1:k-1}$ through the posterior distribution of $\bd{\theta}$. However, since these measurements are assumed to be known in the $k$-th sensing stage, we do not explicitly show that dependence for notational convenience.}. This gives a lower bound on the WMSE of estimating ${\bd{\theta}}$ in the $k$-th sensing stage as
\begin{equation} \label{eqn:mse_lb}
\mathrm{WMSE}_{k} \, \geq \, \trace\left({\bf Q}{\bf J}_k^{-1}({\bf V}_k,{\bf W}_k)\right),
\end{equation}
where ${\bf Q} = \mathrm{diag}\left(\begin{bmatrix}
q_1, \ldots, q_{3L}
\end{bmatrix}\right)$ is the weighting matrix.

This paper proposes to minimize the lower bound given in~(\ref{eqn:mse_lb}) in each sensing stage. We consider two variants of this minimization. In the first variant, we design the beamforming matrices in the $k$-th stage by solving the following optimization problem:
\begin{subequations} \label{eqn:bcrb}
\begin{align}
\mathsf{(P_1)}: \quad \,\, \underset{{\bf V}_k, {\bf W}_k}{\text{minimize}} \quad &\trace\left({\bf Q}{\bf J}_k^{-1}({\bf V}_k,{\bf W}_k)\right)\\
\text{subject to} \quad &\trace({\bf V}_k{\bf V}_k^\sfH) \leq P, \label{eqn:bcrb-1}
\end{align}
\end{subequations}
where the only constraint is that the transmit beamforming matrix satisfies the power constraint\footnote{We remark that the problem formulation considered in this paper assumes constant power for each sensing stage. Alternatively, one can consider a setting with a total power constraint across all sensing stages, in which the power allocation among the sensing stages is to be optimized. Such a setting is left for future work and consideration.}. We remark that this problem formulation implicitly assumes that the sensing parameters are identifiable in each sensing stage, in which case there exists a set of transmit and receive beamformers ${\bf V}_k$ and ${\bf W}_k$ such that the BFIM ${\bf J}_k({\bf V}_k,{\bf W}_k)$ is invertible.

As we shall see later, the adaptive sensing strategy that optimizes~(\ref{eqn:bcrb}) in each sensing stage has superior performance in the high SNR regime specifically, whereas its performance degrades at low and moderate SNRs. In particular, while the solution ${\bf V}_k^*$ of the optimization problem~(\ref{eqn:bcrb}) is typically rank-deficient~\cite{Kareem2025}, we observe through simulations that utilizing a full-rank transmit beamforming matrix in an active setting improves the sensing performance in the low to moderate SNR regime\footnote{We note that another reason for the degradation of the performance in the low to moderate SNR regime is the fact that the BCRB lower bound tends to be a looser bound on the MSE in that regime.}. The use of a full-rank transmit beamforming matrix allows for further exploration in the spatial domain, which is particularly useful in the low SNR regime, where the measurements are dominated by additive noise. For this reason, we consider a second variant of the BCRB optimization problem:
\begin{subequations} \label{eqn:bcrb-var}
\begin{align}
\mathsf{(P_2)}: \quad \,\, \underset{{\bf V}_k, {\bf W}_k}{\text{minimize}} \quad &\trace\left({\bf Q}{\bf J}_k^{-1}({\bf V}_k,{\bf W}_k)\right)\\
\text{subject to} \quad &{\bf V}_k^\sfH{\bf V}_k = \frac{P}{M_T}{\bf I}_{M_T}, \label{eqn:bcrb-var-1}
\end{align}
\end{subequations}
where the constraint~(\ref{eqn:bcrb-var-1}), in addition to satisfying the power constraint, also ensures that the transmit beamformers are mutually orthogonal. Hence, in this variant, the BCRB lower bound is minimized while ensuring that multiple orthogonal directions are explored. Following this exposition, we refer to $\mathsf{(P_1)}$ as the \emph{exploitation-centric} variant of the BCRB optimization problem and to $\mathsf{(P_2)}$ as the \emph{exploration-centric} variant of the BCRB optimization problem.

Ultimately, the proposed sensing strategy combines both variants: in the first $T_{\mathrm{explore}}$ sensing stages ($0 \leq T_{\mathrm{explore}} \leq T$), the beamforming matrices are designed to optimize $\mathsf{(P_2)}$, thus allowing for some \emph{exploration} in the spatial domain, and in the remaining $T - T_{\mathrm{explore}}$ sensing stages, the beamformers are designed to optimize $\mathsf{(P_1)}$, thus allowing to \emph{exploit} the accumulated knowledge about the target locations. Indeed, the formulation of the active sensing problem touches upon the well-known exploration-exploitation tradeoff, which is a fundamental dilemma in sequential decision making~\cite{Sutton2018}. In this context, this tradeoff is reflected by the radar's need to balance between exploring different spatial directions in the search space and between exploiting the accumulated knowledge to locate the targets. Note that if $T_{\mathrm{explore}} = 0$, then the sensing strategy optimizes $\mathsf{(P_1)}$ in all sensing stages, and if $T_{\mathrm{explore}}  = T$, the sensing strategy optimizes $\mathsf{(P_2)}$ in all the sensing stages. Moreover, we note that, when the radar uses a single transmit beamformer (i.e., $M_T = 1$), the two optimization problem $\mathsf{(P_1)}$ and $\mathsf{(P_2)}$ are equivalent.

For both variants, the beamforming matrices ${\bf V}_k$ and ${\bf W}_k$ are adapted based on the previous measurements ${\bf Y}_{1:k-1}$. After the receiver makes a new measurement ${\bf Y}_k$ in the $k$-th sensing stage, the posterior distribution of $\bd{\theta}$ given ${\bf Y}_{1:k}$ is computed and used as the prior distribution for the design of $\mathbf{V}_{k+1}$ and $\mathbf{W}_{k+1}$ in the next sensing stage. After $T$ sensing stages, an estimate $\bd{\hat\theta}$ of the sensing parameters is computed via the MMSE estimator, i.e.,
\begin{equation} \label{eqn:estimator}
\bd{\hat\theta} = \E\left[\bd{\theta} \cond {\bf Y}_{1:T}\right].
\end{equation}

In addition to the BCRB, we remark that there exist other bounds on the mean-squared error that have been considered in the literature, e.g., the Ziv-Zakai bound~\cite{Ziv1969}, the Weiss-Weinstein bound~\cite{Weinstein1985}, and the Reuven-Messer bound~\cite{Reuven1997}, which may be tighter bounds compared to the BCRB. Even within the Cram{\'e}r-Rao bound family, the Miller-Chang bound~\cite{Miller1978} is reported to be tighter in certain scenarios. However, these bounds take complicated forms and are significantly more difficult to optimize. In this paper, we restrict our attention to the BCRB given in~(\ref{eqn:mse_lb}).

In the following two sections, we consider the exploitation-centric and the exploration-centric variants of the BCRB optimization problem separately. We reformulate the two problems in the covariance domain and approach each one using an alternating optimization algorithm based on Lagrangian duality. We begin the exposition with the exploitation-centric variant in the next section.

\section{Exploitation-Centric BCRB Optimization} \label{sec:proposed-P1}
In this section, we focus on the exploitation-centric variant $\mathsf{(P_1)}$ of the BCRB optimization problem, which is given in~(\ref{eqn:bcrb}). This variant of the problem optimizes the BCRB, where the only constraint on the transmit beamformers is that they satisfy the power constraint. This version of the problem is the one typically considered in the literature (e.g.,~\cite{Bekkerman2006, Bliss2008, Huleihel2013, FanLiu2022_2, Zhu2023, Shuowen2024_isit, Kareem2025, Kareem2025_2, Shuowen2024, Shuowen2026}). This setting is exploitative because it does not restrict the number of transmit beamforming directions that are probed (e.g., the optimizer of $\mathsf{(P_1)}$ may consist of a single transmit beamformer).

In the following, we provide a comprehensive characterization of the optimal solution of the optimization problem $\mathsf{(P_1)}$. First, we show that the problem can be equivalently written in the covariance domain as a rank-constrained semidefinite program (SDP). Given this reformulation, the problem is approached via an alternating optimization algorithm which alternates between solving for the transmit beamformers and solving for the receive beamformers, where each sub-problem is also a rank-constrained SDP. We show that this algorithm converges to a stationary point, provided that each sub-problem is solved to global optimality. By analyzing each sub-problem in the Lagrangian dual domain, we provide sufficient conditions under which global optimality is guaranteed. Under these conditions, it is shown that the semidefinite relaxation (SDR) of each sub-problem is tight. 

\subsection{Derivation of BFIM and Problem Reformulation}
The BCRB has been used in prior work as a metric for designing the transmit beamformers assuming a fully-digital receiver architecture (e.g.,~\cite{Huleihel2013}) and for designing the receive beamformers assuming a single-antenna transmitter (e.g.,~\cite{Ghaddar2025ICC}). Here, we derive the BCRB in terms of both the transmit and receive beamformers. Following similar derivations as in~\cite[Appendices A and B]{Huleihel2013}, the entries of the BFIM for the radar channel model~(\ref{eqn:model}) can be expressed as
\begin{align} \label{eqn:fisher}
\left[{\bf J}_k({\bf V}_k,{\bf W}_k)\right]_{i,j} &= -\E\left[\frac{\partial^2\log f({\bf Y}_k, \bd{\theta}\cond  {\bf Y}_{1:k-1})}{\partial\theta_i \partial\theta_j} \Big| {\bf Y}_{1:k-1}\right] \nonumber \\
&\triangleq \left[{\bf J}_k^{({\mathsf D})}({\bf V}_k,{\bf W}_k)\right]_{i,j} + \left[{\bf J}_{k-1}^{({\mathsf P})}\right]_{i,j},
\end{align}
where ${\bf J}_k^{(\mathsf{D})}({\bf V}_k,{\bf W}_k)$ is the part of the BFIM corresponding to the data measurement model,
\begin{align}\label{eqn:fisher_data}
&\left[{\bf J}_k^{({\mathsf D})}({\bf V}_k,{\bf W}_k)\right]_{i,j} = -\E\left[\frac{\partial^2\log f({\bf Y}_k\cond \bd{\theta}, {\bf Y}_{1:k-1})}{\partial\theta_i \partial\theta_j}\Big| {\bf Y}_{1:k-1}\right] \nonumber \\
&= 2\Re\left\{\trace\left(\E\left[{\bf R}_{{\bf W}_k}\dot{{\bf H}}_i(\bd{\theta}) {\bf R}_{{\bf V}_k}\dot{{\bf H}}_j^{\mathsf H}(\bd{\theta})\Big| {\bf Y}_{1:k-1}\right]\right) \right\},
\end{align}
where
\begin{align}
\dot{{\bf H}}_{i}(\bd{\theta}) &= \frac{\partial {\bf H}(\bd{\theta})}{\partial \theta_i}, \label{eqn:channel-derivative-1}\\
{\bf R}_{{\bf W}_k} &= {\bf W}_k({\bf W}_k^{\mathsf H}{\bf W}_k)^{-1}{\bf W}_k^{\mathsf H}, \label{eqn:R_W}\\
{\bf R}_{{\bf V}_k} &= {\bf V}_k{\bf V}_k^{\mathsf H}.
\end{align}
Note that ${\bf R}_{{\bf W}_k}$ is the orthogonal projection matrix on the range space of ${\bf W}_k$. On the other hand, ${\bf J}_{k-1}^{({\mathsf P})}$ is the part of the BFIM corresponding to the posterior distribution, expressed analytically in equation~(\ref{eqn:fisher_prior}) at the bottom of the page, in which we used
\begin{figure*}[b]
\hrulefill
\begin{align} 
\left[{\bf J}_{k-1}^{({\mathsf P})}\right]_{i,j} &= -\E\left[\frac{\partial^2\log f(\bd{\theta}\cond  {\bf Y}_{1:k-1})}{\partial\theta_i \partial\theta_j}\Big| {\bf Y}_{1:k-1}\right]  \nonumber \\
&=2\sum_{\ell=1}^{k-1}\Re\left\{\trace\left(\E\left[{\bf R}_{{\bf W}_\ell}\ddot{{\bf H}}_{ij}(\bd{\theta}){\bf R}_{{\bf V}_\ell}{\bf H}^{\mathsf H}(\bd{\theta})\Big| {\bf Y}_{1:k-1}\right]\right)\right\} +2\sum_{\ell=1}^{k-1}\Re\left\{\trace\left(\E\left[{\bf R}_{{\bf W}_\ell}\dot{{\bf H}}_{i}(\bd{\theta}){\bf R}_{{\bf V}_\ell}\dot{{\bf H}}_{j}^{\mathsf H}(\bd{\theta})\Big| {\bf Y}_{1:k-1}\right]\right)\right\} \nonumber \\
&\quad -2\sum_{\ell=1}^{k-1}\Re\left\{\trace\left({\bf Y}_\ell^{\mathsf H}({\bf W}_\ell^{\mathsf H}{\bf W}_\ell)^{-1}{\bf W}_\ell^{\mathsf H} \E\left[\ddot{{\bf H}}_{ij}(\bd{\theta})\Big| {\bf Y}_{1:k-1}\right]{\bf V}_\ell\right)\right\} -\E\left[\frac{\partial^2\log f(\bd{\theta})}{\partial\theta_i \partial\theta_j}\Big| {\bf Y}_{1:k-1}\right], \label{eqn:fisher_prior}
\end{align}
\end{figure*}
\begin{equation} \label{eqn:channel-derivative-2}
\ddot{{\bf H}}_{ij}(\bd{\theta}) = \frac{\partial^2 {\bf H}(\bd{\theta})}{\partial \theta_i\partial \theta_j}.
\end{equation}
Note that ${\bf J}_{k-1}^{({\mathsf P})}$ does not depend on $({\bf V}_k,{\bf W}_k)$, and hence, can be viewed as a constant matrix in the optimization problem~(\ref{eqn:bcrb}). For completeness, the derivations of equations~(\ref{eqn:fisher_data}) and~(\ref{eqn:fisher_prior}) are provided in Appendix~\ref{appx:BFIM}. 

We now make some observations about the optimization problem~(\ref{eqn:bcrb}). First, the BFIM depends on $({\bf V}_k, {\bf W}_k)$ only through ${\bf R}_{{\bf V}_k}$ and ${\bf R}_{{\bf W}_k}$. In fact, when expressed as ${\bf J}_k({\bf R}_{{\bf V}_k},{\bf R}_{{\bf W}_k})$, we observe that the BFIM is a biaffine function of its arguments. In particular, when ${\bf R}_{{\bf V}_k}$ is held constant, the function $\tilde{{\bf J}}_{{\bf R}_{{\bf V}_k}}^{(1)}({\bf R}_{{\bf W}_k}) \triangleq {\bf J}_k({\bf R}_{{\bf V}_k},{\bf R}_{{\bf W}_k})$ is an affine function of ${\bf R}_{{\bf W}_k}$, and when ${\bf R}_{{\bf W}_k}$ is held constant, the function $\tilde{{\bf J}}_{{\bf R}_{{\bf W}_k}}^{(2)}({\bf R}_{{\bf V}_k}) \triangleq {\bf J}_k({\bf R}_{{\bf V}_k},{\bf R}_{{\bf W}_k})$ is an affine function of ${\bf R}_{{\bf V}_k}$. This allows to rewrite the optimization problem~(\ref{eqn:bcrb}) in terms of ${\bf R}_{{\bf V}_k}$ and ${\bf R}_{{\bf W}_k}$ as
\begin{subequations} \label{eqn:bcrb-corr}
\begin{align}
\underset{{\bf R}_{{\bf V}_k},{\bf R}_{{\bf W}_k}}{\text{minimize}} \quad &\trace\left({\bf Q}{\bf J}_k^{-1}({\bf R}_{{\bf V}_k},{\bf R}_{{\bf W}_k})\right)\\
\text{subject to} \quad &\trace({\bf R}_{{\bf V}_k}) \leq P, \label{eqn:bcrb-corr-1}\\
&\rank({\bf R}_{{\bf V}_k}) \leq M_T, \,\, {\bf R}_{{\bf V}_k} \succeq 0, \label{eqn:bcrb-corr-2}\\
&{\bf R}_{{\bf W}_k} \text{ is an orthogonal projection matrix}, \label{eqn:bcrb-corr-3}\\
&\rank({\bf R}_{{\bf W}_k}) = M_R, \,\, {\bf R}_{{\bf W}_k} \succeq 0, \label{eqn:bcrb-corr-4}
\end{align}
\end{subequations}
where the constraints~(\ref{eqn:bcrb-corr-1}) and~(\ref{eqn:bcrb-corr-2}) are both necessary and sufficient to construct a matrix ${\bf V}_k$ satisfying~(\ref{eqn:bcrb-1}), and the constraints~(\ref{eqn:bcrb-corr-3}) and~(\ref{eqn:bcrb-corr-4}) are both sufficient and necessary to construct a matrix ${\bf W}_k$ such that ${\bf R}_{{\bf W}_k} = {\bf W}_k({\bf W}_k^{\mathsf H}{\bf W}_k)^{-1}{\bf W}_k^{\mathsf H}$. 

Second, the minimization of the objective function in (\ref{eqn:bcrb-corr}) has a convenient reformulation using the following Schur complement relation
\begin{equation} \label{eqn:schur}
\begin{bmatrix}
{\bf J}_k & \sqrt{q_\ell}{\bf e}_\ell \\
\sqrt{q_\ell}{\bf e}_\ell^\sfT & d_\ell
\end{bmatrix} \succeq 0 \quad \Leftrightarrow \quad d_\ell \geq q_\ell{\bf e}_\ell^\sfT {\bf J}_k^{-1}{\bf e}_\ell \geq 0,
\end{equation}
where ${\bf e}_\ell$ denotes the $\ell$-th column of the $3L\times 3L$ identity matrix. In particular, using the relation~(\ref{eqn:schur}), we can write $\trace\left({\bf Q}{\bf J}_k^{-1}\right)$ as the following minimization problem:
\begin{subequations}
\begin{align}
\trace\left({\bf Q}{\bf J}_k^{-1}\right) = \underset{d_1, \ldots, d_{3L}}{\text{minimize}} \quad &\sum_{\ell = 1}^{3L}d_\ell\\
\text{subject to} \quad &\begin{bmatrix}
	{\bf J}_k &  \sqrt{q_\ell}{\bf e}_\ell \\
	\sqrt{q_\ell}{\bf e}_\ell^\sfT &  d_\ell
\end{bmatrix} \succeq 0, \quad \forall \ell.
\end{align}
\end{subequations}
Hence, the optimization problem~(\ref{eqn:bcrb-corr}) can be rewritten as the following rank-constrained SDP:
\begin{subequations} \label{eqn:bcrb-corr-sdp}
\begin{align}
\underset{\substack{{\bf R}_{{\bf V}_k},{\bf R}_{{\bf W}_k}, \\ d_1, \ldots, d_{3L}}}{\text{minimize}} \quad &\sum_{\ell = 1}^{3L} d_\ell\\
\text{subject to} \quad &\begin{bmatrix}
	{\bf J}_k({\bf R}_{{\bf V}_k},{\bf R}_{{\bf W}_k}) &  \sqrt{q_\ell}{\bf e}_\ell \\
	\sqrt{q_\ell}{\bf e}_\ell^\sfT &  d_\ell
\end{bmatrix} \succeq 0, \quad \forall \ell, \label{eqn:bcrb-corr-sdp-1}\\
&\trace({\bf R}_{{\bf V}_k}) \leq P,\\
&\rank({\bf R}_{{\bf V}_k}) \leq M_T,\,\, {\bf R}_{{\bf V}_k} \succeq 0, \\
&{\bf R}_{{\bf W}_k} \text{ is an orthogonal projection matrix}, \label{eqn:bcrb-corr-sdp-3}\\
&\rank({\bf R}_{{\bf W}_k}) = M_R, \,\, {\bf R}_{{\bf W}_k} \succeq 0.
\end{align}
\end{subequations}
This reformulation has been extensively used in MIMO radar sensing (e.g.,~\cite{Bliss2008,Huleihel2013}), and is crucial for obtaining an efficient numerical algorithm for solving~(\ref{eqn:bcrb}). 

It is worthwhile to point out some key distinctions between~(\ref{eqn:bcrb-corr-sdp}) and other BCRB-based optimization problems considered in the literature. For example, in the sensing-only literature (e.g., \cite{Bliss2008,Huleihel2013}), the goal is typically to optimize the transmitted waveform, and hence, it is common to assume a fully-digital receiver (i.e., $M_R = N_R$), so that the BCRB optimization is only over the transmit beamformers (i.e., only over ${\bf R}_{{\bf V}_k}$). Moreover, the works in~\cite{Bliss2008,Huleihel2013} implicitly assume that the number of beamformers to be optimized is equal to the number of antennas (i.e., $M_T = N_T$), in which case the BCRB optimization does not have a rank constraint on ${\bf R}_{{\bf V}_k}$, thus making the optimization problem over ${\bf R}_{{\bf V}_k}$ convex. 

Non-convex versions of the BCRB optimization problem appear in the ISAC literature (e.g.,~\cite{FanLiu2022_2,Zhu2023,Shuowen2024,Kareem2025_2,FanLiu2022_3}). However, the non-convexity in these formulations is only due to the rank-1 constraints that are imposed on the auto-correlation matrices corresponding to the communication beamformers (i.e., a full set of $M_T=N_T$ sensing beamformers is used in these papers). Hence, when specialized to the sensing-only model (i.e., with no communication users), the optimization problems in these works become convex.

An ISAC system that uses a limited number of sensing beamformers has been considered in~\cite{Kareem2025}. Specifically, the goal of~\cite{Kareem2025} is to derive upper bounds on the minimum number of sensing beamformers that are needed to optimize the BCRB metric. Since this number is to be minimized, the BCRB optimization problem in this case imposes a rank constraint on ${\bf R}_{{\bf V}_k}$, even when the problem is specialized to the sensing-only model. In~\cite{Kareem2025}, the semidefinite relaxation method is adopted, in which the optimization problem is first solved without the rank constraint on ${\bf R}_{{\bf V}_k}$, and then a rank reduction method~\cite{Palomar2010,Pataki1998} is used to find a lower-rank solution with the same objective value. We note that the optimization considered in~\cite{Kareem2025} is only over the transmit beamformers.

In the following, we focus on the non-convex optimization problem~(\ref{eqn:bcrb-corr-sdp}), in which both the transmit and receive beamformers are optimized. We approach this problem via an alternating optimization algorithm that alternates between solving for ${\bf R}_{{\bf V}_k}$ and solving for ${\bf R}_{{\bf W}_k}$, where each optimization is studied in the Lagrangian dual domain, and sufficient conditions for global optimality are derived.

\subsection{Alternating BCRB Optimization}
Motivated by the biaffine structure of ${\bf J}_k({\bf R}_{{\bf V}_k},{\bf R}_{{\bf W}_k})$, this paper proposes to optimize the transmit and receive matrices ${\bf R}_{{\bf V}_k}$ and ${\bf R}_{{\bf W}_k}$ in~(\ref{eqn:bcrb-corr-sdp}) via an alternating optimization algorithm. That is, the algorithm iterates between solving for ${\bf R}_{{\bf W}_k}$ for a fixed ${\bf R}_{{\bf V}_k}$, and solving for ${\bf R}_{{\bf V}_k}$ for a fixed ${\bf R}_{{\bf W}_k}$. Specifically, for a fixed ${\bf R}_{{\bf V}_k}$, the optimization problem over ${\bf R}_{{\bf W}_k}$ can be written as
\begin{subequations} \label{eqn:bcrb-corr-W}
\begin{align}
\underset{{\bf R}_{{\bf W}_k}, d_1, \ldots, d_{3L}}{\text{minimize}} \quad &\sum_{\ell = 1}^{3L} d_\ell\\
\text{subject to} \,\,\,\,\quad &\begin{bmatrix}
	{\bf J}_k({\bf R}_{{\bf V}_k},{\bf R}_{{\bf W}_k}) & \sqrt{q_\ell}{\bf e}_\ell \\
	\sqrt{q_\ell}{\bf e}_\ell^\sfT &  d_\ell
\end{bmatrix} \succeq 0, \,\, \forall \ell,\label{eqn:bcrb-corr-W-1}\\
&{\bf R}_{{\bf W}_k} \text{ is an orthogonal projection matrix},\label{eqn:bcrb-corr-W-2}\\
&\rank({\bf R}_{{\bf W}_k}) = M_R, \,\, {\bf R}_{{\bf W}_k} \succeq 0, \label{eqn:bcrb-corr-W-3}
\end{align}
\end{subequations}
and for a fixed ${\bf R}_{{\bf W}_k}$, the optimization problem over ${\bf R}_{{\bf V}_k}$ can be written as
\begin{subequations} \label{eqn:bcrb-corr-V}
\begin{align}
\underset{{\bf R}_{{\bf V}_k}, d_1, \ldots, d_{3L}}{\text{minimize}} \quad &\sum_{\ell = 1}^{3L} d_\ell\\
\text{subject to} \,\,\,\,\quad &\begin{bmatrix}
	{\bf J}_k({\bf R}_{{\bf V}_k},{\bf R}_{{\bf W}_k}) & \sqrt{q_\ell}{\bf e}_\ell \\
	\sqrt{q_\ell}{\bf e}_\ell^\sfT &  d_\ell
\end{bmatrix} \succeq 0, \,\, \forall \ell,\\
&\trace({\bf R}_{{\bf V}_k}) \leq P,\label{eqn:bcrb-corr-V-2} \\
&\rank({\bf R}_{{\bf V}_k}) \leq M_T,\,\, {\bf R}_{{\bf V}_k} \succeq 0. \label{eqn:bcrb-corr-V-3}
\end{align}
\end{subequations}
We propose to solve~(\ref{eqn:bcrb-corr-sdp}) using the following iterative algorithm:
\begin{enumerate}[leftmargin=1.5em]
\item Initialize a random ${\bf V}_{k}^{(0)}$ such that $\trace\left({\bf V}_k^{(0)}({\bf V}_k^{(0)})^{\mathsf H}\right) \leq P$.
\item For iteration index $i =1,2,\ldots, I_{\mathrm{max}}$, do:
\begin{enumerate}[label=\roman*)]
\item Find the solution ${\bf W}_k^{(i)}$ of~(\ref{eqn:bcrb-corr-W}) when ${\bf R}_{{\bf V}_k} = {\bf R}_{{\bf V}_k^{(i-1)}}$.
\item Find the solution ${\bf V}_k^{(i)}$ of~(\ref{eqn:bcrb-corr-V}) when ${\bf R}_{{\bf W}_k} = {\bf R}_{{\bf W}_k^{(i)}}$.
\end{enumerate}
\end{enumerate}
This algorithm has the property that the BCRB metric does not increase across iterations. In particular, if we denote
\begin{equation}
\beta_i \triangleq \trace\left({\bf Q}{\bf J}_k^{-1}\big({\bf R}_{{\bf V}_k^{(i)}},{\bf R}_{{\bf W}_k^{(i)}}\big)\right)
\end{equation}
as the value of the objective function at the end of the $i$-th iteration, then the sequence $\{\beta_i\}_{i=1}^{I_{\mathrm{max}}}$ is guaranteed to be non-negative and non-increasing. This implies that the alternating optimization algorithm converges to a limit point. Since the two constraint sets over ${\bf R}_{{\bf V}_k}$ and ${\bf R}_{{\bf W}_k}$ are disjoint, the limit point is a stationary point of the optimization problem $\mathsf{(P_1)}$, provided that each sub-problem is solved to global optimality~\cite{Sciandrone2000}.

The optimization problems~(\ref{eqn:bcrb-corr-W}) and~(\ref{eqn:bcrb-corr-V}) are non-convex due to the orthogonal projection constraint~(\ref{eqn:bcrb-corr-W-2}) and the rank constraints~(\ref{eqn:bcrb-corr-W-3}) and~(\ref{eqn:bcrb-corr-V-3}). In the following, we show that despite the non-convex constraints, the optimization problems~(\ref{eqn:bcrb-corr-W}) and~(\ref{eqn:bcrb-corr-V}) can be solved to global optimality, provided that certain sufficient conditions hold. To derive these conditions, we approach the two optimization problems in the Lagrangian dual domain.

\subsection{Receive Beamforming Via Lagrangian Duality} \label{sec:dual-W}
In this part, we focus on the optimization problem~(\ref{eqn:bcrb-corr-W}). We derive its dual problem, and show that the semidefinite relaxation of~(\ref{eqn:bcrb-corr-W}) is tight, under a specific multiplicity condition. In particular, we prove the following theorem.

\begin{theorem} \label{thm:dual-W}
The dual problem of~(\ref{eqn:bcrb-corr-W}) can be expressed as
\begin{equation} \label{eqn:bcrb-dual-W}
\underset{{\bf \Lambda} \in \bC^{3L\times 3L}}{\mathrm{maximize}} \quad 2\trace({\bf \Lambda}{\bf Q}^{1/2}) - \trace\left({\bf \Lambda}^{\mathsf T}{\bf J}_{k-1}^{({\mathsf P})}{\bf \Lambda}\right) - 2\sum_{i=1}^{M_R} \mu_{i}\left({\bf P}_{{\bf \Lambda}  {\bf \Lambda}^{\mathsf T}}\right),
\end{equation}
where $\mu_i(\cdot)$ denotes the $i$-th largest eigenvalue, and
\begin{equation} \label{eqn:P_Lambda-W}
{\bf P}_{{\bf \Lambda}  {\bf \Lambda}^{\mathsf T}} \triangleq \sum_{m=1}^{M_T}\E\left[\dot{{\bf G}}_{k,m}(\bd{\theta}){\bf \Lambda} {\bf \Lambda}^{\mathsf T}\dot{{\bf G}}_{k,m}^{\mathsf H}(\bd{\theta}) \,\big|\, {\bf Y}_{1:k-1}\right],
\end{equation}
in which $\dot{{\bf G}}_{k,m}(\bd{\theta}) = \begin{bmatrix}
\dot{{\bf H}}_{1}(\bd{\theta}){\bf v}_{k,m} & \cdots & \dot{{\bf H}}_{3L}(\bd{\theta}){\bf v}_{k,m}
\end{bmatrix}$ and ${\bf v}_{k,m}$ is the $m$-th beamformer of ${\bf V}_k$. The dual problem~(\ref{eqn:bcrb-dual-W}) is unconstrained and convex. Let ${\bf \Lambda}^{*}$ denote the solution of~(\ref{eqn:bcrb-dual-W}). If
\begin{equation} \label{eqn:spectral-W}
\mu_{M_R}\left({\bf P}_{{\bf \Lambda}^{*}({\bf \Lambda}^{*})^{\mathsf T}}\right) > \mu_{M_R+1}\left({\bf P}_{{\bf \Lambda}^{*}({\bf \Lambda}^{*})^{\mathsf T}}\right),
\end{equation}
then
\begin{equation}  \label{eqn:R_star-W}
{\bf R}_{{\bf W}_k}^{*} = {\bf W}_k^{*}({\bf W}_k^{*})^{\mathsf H},
\end{equation}
is the globally optimal solution of~(\ref{eqn:bcrb-corr-W}), where
\begin{equation} \label{eqn:W_star}
{\bf W}_k^{*} = \begin{bmatrix}
	{\bf w}_{k,1}^{*} & \cdots & {\bf w}_{k,M_R}^{*}
\end{bmatrix}
\end{equation}
and ${\bf w}_{k,1}^{*},\ldots, {\bf w}_{k,M_R}^{*}$ are the eigenvectors of ${\bf P}_{{\bf \Lambda}^{*}({\bf \Lambda}^{*})^{\mathsf T}}$ (with unit norm) corresponding to the $M_R$ largest eigenvalues.
\end{theorem}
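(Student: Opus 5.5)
The plan is to derive the dual through a variational (minimax) characterization of the BCRB objective rather than by grinding through Lagrange multipliers for each of the $3L$ LMIs. The first step is the identity, valid for any ${\bf J}\succ 0$,
\begin{equation*}
\trace\left({\bf Q}{\bf J}^{-1}\right)=\max_{{\bf \Lambda}}\; 2\trace({\bf \Lambda}{\bf Q}^{1/2})-\trace({\bf \Lambda}^{\mathsf T}{\bf J}{\bf \Lambda}).
\end{equation*}
The maximizer is ${\bf \Lambda}={\bf J}^{-1}{\bf Q}^{1/2}$, obtained by completing the square. The multipliers of the Schur-complement LMIs~(\ref{eqn:bcrb-corr-W-1}) give the same object: eliminating the $d_\ell$ leaves exactly this quadratic form, with the columns of ${\bf \Lambda}$ playing the role of the off-diagonal multiplier blocks.

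Next I write ${\bf J}_k={\bf J}_{k-1}^{(\mathsf P)}+{\bf J}_k^{(\mathsf D)}({\bf R}_{{\bf W}_k})$ and expand $\trace({\bf \Lambda}^{\mathsf T}{\bf J}_k^{(\mathsf D)}{\bf \Lambda})$ using~(\ref{eqn:fisher_data}) together with ${\bf R}_{{\bf V}_k}=\sum_m {\bf v}_{k,m}{\bf v}_{k,m}^{\mathsf H}$. A direct computation should give
\begin{equation*}
\trace({\bf \Lambda}^{\mathsf T}{\bf J}_k^{(\mathsf D)}{\bf \Lambda})=2\trace\left({\bf R}_{{\bf W}_k}{\bf P}_{{\bf \Lambda}{\bf \Lambda}^{\mathsf T}}\right),
\end{equation*}
where the $\Re\{\cdot\}$ disappears because the resulting trace is real and nonnegative. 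The primal problem~(\ref{eqn:bcrb-corr-W}) is therefore $\min_{{\bf R}_{{\bf W}_k}}\max_{{\bf \Lambda}} f({\bf R}_{{\bf W}_k},{\bf \Lambda})$, where
\begin{equation*}
f({\bf R},{\bf \Lambda})=2\trace({\bf \Lambda}{\bf Q}^{1/2})-\trace({\bf \Lambda}^{\mathsf T}{\bf J}_{k-1}^{(\mathsf P)}{\bf \Lambda})-2\trace({\bf R}{\bf P}_{{\bf \Lambda}{\bf \Lambda}^{\mathsf T}}).
\end{equation*}
Swapping min and max gives the dual. The inner minimization over rank-$M_R$ orthogonal projections equals $-2\max_{{\bf R}}\trace({\bf R}{\bf P}_{{\bf \Lambda}{\bf \Lambda}^{\mathsf T}})=-2\sum_{i\le M_R}\mu_i({\bf P}_{{\bf \Lambda}{\bf \Lambda}^{\mathsf T}})$ by Ky Fan's maximum principle, which is~(\ref{eqn:bcrb-dual-W}).

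For concavity of the dual objective, the term $2\trace({\bf \Lambda}{\bf Q}^{1/2})$ is linear and $\trace({\bf \Lambda}^{\mathsf T}{\bf J}_{k-1}^{(\mathsf P)}{\bf \Lambda})$ is a convex quadratic since ${\bf J}_{k-1}^{(\mathsf P)}\succeq 0$. The last term is a pointwise supremum over ${\bf R}$ of $\trace({\bf R}{\bf P}_{{\bf \Lambda}{\bf \Lambda}^{\mathsf T}})$, and each of these is a sum of squared norms, hence convex in ${\bf \Lambda}$. The problem is unconstrained because no constraints on ${\bf \Lambda}$ arise.

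For global optimality, I relax the constraint set to its convex hull, the Fantope $\mathcal F=\{0\preceq{\bf R}\preceq{\bf I},\ \trace({\bf R})=M_R\}$. On $\mathcal F$, the function $f$ is affine in ${\bf R}$ and concave in ${\bf \Lambda}$, and $\mathcal F$ is compact and convex, so Sion's minimax theorem gives
\begin{equation*}
\min_{\mathcal F}\max_{{\bf \Lambda}} f=\max_{{\bf \Lambda}}\min_{\mathcal F} f.
\end{equation*}
The right-hand side is the dual value, since the Ky Fan maximum over $\mathcal F$ equals that over projections. Let $\bar{\bf R}$ attain the relaxed primal minimum (by compactness and lower semicontinuity of $\trace({\bf Q}{\bf J}^{-1})$, extended by $+\infty$), and let ${\bf \Lambda}^*$ attain the dual maximum. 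Equality of the two values makes $(\bar{\bf R},{\bf \Lambda}^*)$ a saddle point, so $\bar{\bf R}\in\arg\max_{\mathcal F}\trace({\bf R}{\bf P}_{{\bf \Lambda}^*{\bf \Lambda}^{*\mathsf T}})$. Under the gap~(\ref{eqn:spectral-W}), this maximizer is unique: it is the projection onto the top-$M_R$ eigenspace, namely ${\bf W}_k^*{\bf W}_k^{*\mathsf H}$. Hence the relaxed optimum is itself a rank-$M_R$ projection, so it is feasible and globally optimal for~(\ref{eqn:bcrb-corr-W}), and the semidefinite relaxation is tight.

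I expect the main obstacle to be the technical points around the saddle-point step. First, the identification of ${\bf \Lambda}$ as real versus complex must be handled so that the transposes in the statement are consistent. Second, one must justify attainment of ${\bf \Lambda}^*$ and handle points where ${\bf J}_k$ is singular, which is where the identifiability assumption (some feasible ${\bf R}$ with ${\bf J}_k\succ 0$) will be invoked to ensure the dual is bounded and its maximizer exists.
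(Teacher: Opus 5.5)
Your proposal is correct. The dual derivation matches the paper's in substance. The paper dualizes the LMIs~(\ref{eqn:bcrb-corr-W-1}) with block multipliers, gets $\nu_\ell=1$ from the $d_\ell$, and then argues that $\mathbf{\Gamma}_\ell=\bd{\lambda}_\ell\bd{\lambda}_\ell^{\mathsf T}$ at the optimum because $\mathbf{J}_k\succeq 0$. That last reduction is what your ``eliminating the $d_\ell$'' sentence skips, and you need it if your min--max dual is to be literally the Lagrangian dual of~(\ref{eqn:bcrb-corr-W}). Your reading of $\mathbf{\Lambda}$ as real is the consistent one.

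For global optimality you take a genuinely different route. The paper stays with the nonconvex problem. It uses the gap~(\ref{eqn:spectral-W}) to make $\sum_{i\le M_R}\mu_i(\mathbf{P}_{\mathbf{\Lambda}\mathbf{\Lambda}^{\mathsf T}})$ differentiable at $\mathbf{\Lambda}^*$, computes the gradient by the chain rule, and reads off $\mathbf{J}_k(\mathbf{R}_{\mathbf{V}_k},\mathbf{R}^*_{\mathbf{W}_k})\mathbf{\Lambda}^*=\mathbf{Q}^{1/2}$ from stationarity. It then checks feasibility, Lagrangian minimization and complementary slackness, and invokes the saddle-point sufficiency result of~\cite{Bertsekas2009}.

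You instead convexify to the Fantope, obtain strong duality and a saddle point from Sion's theorem, and use the gap only to make the linear maximizer over the Fantope unique. This avoids eigenvalue-perturbation calculus and proves Corollary~\ref{cor:dual-W} at the same time. It also yields more than the paper's argument:
\begin{itemize}
\item Under~(\ref{eqn:spectral-W}), \emph{every} minimizer of the relaxation~(\ref{eqn:bcrb-corr-W-sdr}) equals $\mathbf{W}_k^*(\mathbf{W}_k^*)^{\mathsf H}$.
\item Applying the saddle-point property to any relaxed minimizer $\tilde{\mathbf{R}}^*_{\mathbf{W}_k}$ with $\mathbf{J}_k\succ 0$ gives $\mathbf{\Lambda}^*=\mathbf{J}_k^{-1}(\mathbf{R}_{\mathbf{V}_k},\tilde{\mathbf{R}}^*_{\mathbf{W}_k})\mathbf{Q}^{1/2}$. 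This is exactly the recovery rule~(\ref{eqn:opt-dual-variable-W}) the algorithm relies on.
\end{itemize}

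The price is twofold. Sion's theorem needs $f(\mathbf{R},\cdot)$ concave on the whole Fantope; this holds because $\mathbf{J}_k^{(\mathsf D)}(\mathbf{R})\succeq 0$ for $\mathbf{R}\succeq 0$ and $\mathbf{J}_{k-1}^{(\mathsf P)}\succeq 0$ is assumed. You also need identifiability to keep the min--max value finite. The paper's argument, by contrast, is local at $\mathbf{\Lambda}^*$ and never touches the convex hull.

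One small fix: take the relaxed objective to be $g(\mathbf{R})=\sup_{\mathbf{\Lambda}}f(\mathbf{R},\mathbf{\Lambda})$ itself, which is lower semicontinuous as a supremum of affine functions. The function ``$\trace(\mathbf{Q}\mathbf{J}^{-1})$ extended by $+\infty$'' is not lower semicontinuous when $\mathbf{Q}$ is singular, as in the paper's simulations, because $g$ can then be finite at singular $\mathbf{J}_k$.
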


Theorem~\ref{thm:dual-W} gives the representation of the optimization problem~(\ref{eqn:bcrb-corr-W}) in the Lagrangian dual domain. It shows that the dual problem is \emph{unconstrained} and \emph{convex}. Moreover, a globally optimal solution of the primal problem~(\ref{eqn:bcrb-corr-W}) can be derived from the optimal solution of the dual problem, provided that the condition~(\ref{eqn:spectral-W}) holds. This condition pertains to the multiplicity of the eigenvalues of ${\bf P}_{{\bf \Lambda}^{*}({\bf \Lambda}^{*})^{\mathsf T}}$: if there is a strictly positive gap between the $M_R$-th and $(M_R+1)$-th eigenvalues (counting multiplicities), then global optimality is guaranteed. Hence, the condition~(\ref{eqn:spectral-W}) serves as a certificate of optimality of the solution given in~(\ref{eqn:R_star-W}). In other words, once a solution ${\bf \Lambda}^*$ of the dual problem~(\ref{eqn:bcrb-dual-W}) is obtained, one can check the global optimality of the solution given in~(\ref{eqn:R_star-W}) by examining the multiplicity of the eigenvalues of ${\bf P}_{{\bf \Lambda}^{*}({\bf \Lambda}^{*})^{\mathsf T}}$. The proof of Theorem~\ref{thm:dual-W} is provided in Appendix~\ref{appx:dual-proof-W}.

\begin{remark} \label{remark:spectral-W}
The proof of Theorem~\ref{thm:dual-W} implies that the optimal receive beamformers necessarily belong to the union of the eigenspaces corresponding to the $M_R$ largest eigenvalues of ${\bf P}_{{\bf \Lambda}^{*}({\bf \Lambda}^{*})^{\mathsf T}}$. If condition~(\ref{eqn:spectral-W}) holds, this union of eigenspaces is completely characterized by the optimal solution~(\ref{eqn:R_star-W}), and thus, global optimality is guaranteed. If condition~(\ref{eqn:spectral-W}) does not hold (i.e., $\mu_{M_R}({\bf P}_{{\bf \Lambda}^{*}({\bf \Lambda}^{*})^{\mathsf T}}) = \mu_{M_R+1}\left({\bf P}_{{\bf \Lambda}^{*}({\bf \Lambda}^{*})^{\mathsf T}}\right)$), there is some ambiguity about the optimal choice of the $M_R$-th beamformer. While the optimal choice is guaranteed to belong to the eigenspace corresponding to $\mu_{M_R}({\bf P}_{{\bf \Lambda}^{*}({\bf \Lambda}^{*})^{\mathsf T}})$, the exact optimizer within the eigenspace is not known, in which case global optimality cannot be guaranteed. 
\end{remark}


\begin{corollary} \label{cor:dual-W}
If condition~(\ref{eqn:spectral-W}) holds, the semidefinite relaxation of the optimization problem~(\ref{eqn:bcrb-corr-W}), given by
\begin{subequations} \label{eqn:bcrb-corr-W-sdr}
\begin{align}
	\underset{{\bf R}_{{\bf W}_k}, d_1, \ldots, d_{3L}}{\text{minimize}} \quad &\sum_{\ell = 1}^{3L} d_\ell\\
	\text{subject to} \,\,\,\,\quad &\begin{bmatrix}
		{\bf J}_k({\bf R}_{{\bf V}_k},{\bf R}_{{\bf W}_k}) & \sqrt{q_\ell}{\bf e}_\ell \\
		\sqrt{q_\ell}{\bf e}_\ell^\sfT &  d_\ell
	\end{bmatrix} \succeq 0, \,\, \forall \ell, \label{eqn:bcrb-corr-W-sdr-1}\\
	&{\bf 0} \preceq {\bf R}_{{\bf W}_k} \preceq {\bf I}_{N_R},  \label{eqn:bcrb-corr-W-sdr-2}\\
	&\trace({\bf R}_{{\bf W}_k}) = M_R,  \label{eqn:bcrb-corr-W-sdr-3}
\end{align}
\end{subequations}
is tight.
\end{corollary}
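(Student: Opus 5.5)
We prove the corollary by sandwiching: the optimal value of the relaxation (\ref{eqn:bcrb-corr-W-sdr}) lies between the dual value (\ref{eqn:bcrb-dual-W}) and the primal value (\ref{eqn:bcrb-corr-W}). Under (\ref{eqn:spectral-W}), Theorem~\ref{thm:dual-W} says these two coincide and are attained by ${\bf R}_{{\bf W}_k}^*$.

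\textbf{Upper bound.} The feasible set of (\ref{eqn:bcrb-corr-W}) is contained in that of (\ref{eqn:bcrb-corr-W-sdr}). Indeed, a rank-$M_R$ orthogonal projection satisfies ${\bf 0}\preceq{\bf R}_{{\bf W}_k}\preceq{\bf I}_{N_R}$ and $\trace({\bf R}_{{\bf W}_k})=M_R$. Hence the value of (\ref{eqn:bcrb-corr-W-sdr}) is at most the value of (\ref{eqn:bcrb-corr-W}).

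\textbf{Lower bound.} I will show that the Lagrangian dual of the relaxation is exactly the dual problem (\ref{eqn:bcrb-dual-W}). Since (\ref{eqn:bcrb-corr-W-sdr}) is a convex SDP, weak duality then bounds its value below by the optimum of (\ref{eqn:bcrb-dual-W}).

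To get the dual, I repeat the derivation from the proof of Theorem~\ref{thm:dual-W}. Introduce multipliers for the LMIs (\ref{eqn:bcrb-corr-W-sdr-1}) and minimize out the $d_\ell$; this produces ${\bf \Lambda}$ and the terms $2\trace({\bf \Lambda}{\bf Q}^{1/2})-\trace({\bf \Lambda}^\sfT{\bf J}_{k-1}^{(\mathsf P)}{\bf \Lambda})$. What remains is
\[
-2\max\Big\{\trace\big({\bf R}_{{\bf W}_k}{\bf P}_{{\bf \Lambda}{\bf \Lambda}^\sfT}\big)\ :\ {\bf 0}\preceq{\bf R}_{{\bf W}_k}\preceq{\bf I}_{N_R},\ \trace({\bf R}_{{\bf W}_k})=M_R\Big\},
\]
which comes from the affine dependence of ${\bf J}_k^{(\mathsf D)}$ on ${\bf R}_{{\bf W}_k}$ in (\ref{eqn:fisher_data}).

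By Ky Fan's maximum principle, the maximum of $\trace({\bf R}{\bf P})$ over this set equals $\sum_{i=1}^{M_R}\mu_i({\bf P})$. This is exactly the value obtained over rank-$M_R$ projections, because the set is the convex hull of those projections. So the relaxation has the same dual function as the original problem, and its value is at least the optimum of (\ref{eqn:bcrb-dual-W}).

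\textbf{Conclusion.} Under (\ref{eqn:spectral-W}), Theorem~\ref{thm:dual-W} shows that ${\bf R}_{{\bf W}_k}^*$ is optimal for (\ref{eqn:bcrb-corr-W}). Its proof also establishes a zero duality gap: the primal objective at ${\bf R}_{{\bf W}_k}^*$ equals the dual optimum. Chaining the two bounds gives
\[
\text{dual value}\le\text{SDR value}\le\text{primal value}=\text{dual value},
\]
so all three are equal and the rank-$M_R$ projection ${\bf R}_{{\bf W}_k}^*$ attains the SDR value. Hence the relaxation is tight.

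The main obstacle is confirming that the dual-function computation really goes through with the relaxed constraint set, in particular the Ky Fan step. If the proof of Theorem~\ref{thm:dual-W} states zero duality gap only implicitly, I would re-derive it. One route is to note that the primal objective at ${\bf R}_{{\bf W}_k}^*$ with the optimal $d_\ell$ matches the dual value at ${\bf \Lambda}^*$. Another is to check the KKT conditions of the SDR at ${\bf R}_{{\bf W}_k}^*$ directly. There the spectral gap (\ref{eqn:spectral-W}) makes ${\bf R}_{{\bf W}_k}^*$ the unique Ky Fan maximizer, so it is consistent with stationarity of the Lagrangian at ${\bf \Lambda}^*$.
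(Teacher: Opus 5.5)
Your proof is correct. It shares the paper's key ingredient: the dual of (\ref{eqn:bcrb-corr-W-sdr}) taken with respect to (\ref{eqn:bcrb-corr-W-sdr-1}) coincides with (\ref{eqn:bcrb-dual-W}). This holds because $\max\{\trace({\bf R}{\bf P}_{{\bf \Lambda}{\bf \Lambda}^{\mathsf T}}) : {\bf 0}\preceq{\bf R}\preceq{\bf I}_{N_R},\ \trace({\bf R})=M_R\}$ is still the Ky Fan sum $\sum_{i=1}^{M_R}\mu_i({\bf P}_{{\bf \Lambda}{\bf \Lambda}^{\mathsf T}})$.

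Where you differ is in how the relaxation's value is tied to the dual value. The paper uses the identifiability assumption to get a point with ${\bf J}_k \succ 0$, hence a strictly feasible point for the LMIs (taking $d_\ell$ large). It then invokes Slater's condition to obtain strong duality for the relaxation, and equates this with the zero gap from Theorem~\ref{thm:dual-W}. You instead use only weak duality for the relaxation, plus the inclusion of the rank-$M_R$ projections in the relaxed feasible set, giving dual $\le$ SDR $\le$ primal $=$ dual.

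Your route is more elementary. It needs no constraint qualification for the relaxation, and weak duality holds regardless of convexity, so you don't need to appeal to the relaxation being a convex SDP. It also yields strong duality of the relaxation as a by-product under (\ref{eqn:spectral-W}).

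The paper's route buys something in return: strong duality of (\ref{eqn:bcrb-corr-W-sdr}) \emph{independently} of (\ref{eqn:spectral-W}). Through complementary slackness, this is what justifies recovering ${\bf \Lambda}^*$ from an arbitrary relaxation solution via (\ref{eqn:opt-dual-variable-W}), which the paper needs because that recovery happens \emph{before} the certificate (\ref{eqn:spectral-W}) can be checked. For the corollary itself, your sandwich argument is sufficient.
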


\begin{proof}
Following similar steps as in the proof of Theorem~\ref{thm:dual-W}, it can be shown that the dual of the semidefinite relaxation~(\ref{eqn:bcrb-corr-W-sdr}) with respect to the constraint~(\ref{eqn:bcrb-corr-W-sdr-1}) is equivalent to~(\ref{eqn:bcrb-dual-W}). Moreover, since the sensing parameters are assumed to be identifiable (i.e., the BFIM is invertible for some ${\bf R}_{{\bf W}_k}$), Slater's condition is satisfied for the constraint~(\ref{eqn:bcrb-corr-W-sdr-1}), and thus, strong duality holds for the semidefinite relaxation~(\ref{eqn:bcrb-corr-W-sdr}). On the other hand, we know from Theorem~\ref{thm:dual-W} that when condition~(\ref{eqn:spectral-W}) holds, strong duality is satisfied for the optimization problem~(\ref{eqn:bcrb-corr-W}). This implies the tightness of the semidefinite relaxation~(\ref{eqn:bcrb-corr-W-sdr}) under condition~(\ref{eqn:spectral-W}).
\end{proof}

Corollary~\ref{cor:dual-W} shows the tightness of the semidefinite relaxation~(\ref{eqn:bcrb-corr-W-sdr}) under the condition~(\ref{eqn:spectral-W}). Note that the constraints~(\ref{eqn:bcrb-corr-W-sdr-2}) and~(\ref{eqn:bcrb-corr-W-sdr-3}) in the semidefinite relaxation represent the convex hull of the constraints~(\ref{eqn:bcrb-corr-W-2}) and~(\ref{eqn:bcrb-corr-W-3}) in the primal problem. In fact, the set of matrices satisfying~(\ref{eqn:bcrb-corr-W-2}) and~(\ref{eqn:bcrb-corr-W-3}) are the extreme points of this convex hull. 


Together, Theorem~\ref{thm:dual-W} and Corollary~\ref{cor:dual-W} suggest the following approach to solving the optimization problem~(\ref{eqn:bcrb-corr-W}). First, we solve the semidefinite relaxation~(\ref{eqn:bcrb-corr-W-sdr}) and denote its solution by $\tilde{{\bf R}}_{{\bf W}_k}^*$. Note that the optimal solution of~(\ref{eqn:bcrb-corr-W-sdr}) is not necessarily unique (even if the semidefinite relaxation is tight), and thus, $\tilde{{\bf R}}_{{\bf W}_k}^*$ is not guaranteed to have the desired low rank. To get the desired low-rank solution, the proof of Theorem~\ref{thm:dual-W} shows that the optimal solution ${\bf \Lambda}^*$ of the dual problem can be recovered from $\tilde{{\bf R}}_{{\bf W}_k}^*$ as
\begin{equation} \label{eqn:opt-dual-variable-W}
{\bf \Lambda}^* = {\bf J}_k^{-1}({\bf R}_{{\bf V}_k}, \tilde{{\bf R}}_{{\bf W}_k}^*){\bf Q}^{1/2}.
\end{equation}
Given ${\bf \Lambda}^*$, a low-rank solution of the optimization problem~(\ref{eqn:bcrb-corr-W}) can be computed as in~(\ref{eqn:R_star-W}), with a guarantee of global optimality when the sufficient condition~(\ref{eqn:spectral-W}) holds. Note that if condition~(\ref{eqn:spectral-W}) does not hold, the $M_R$-th beamformer can be set to a random unit-norm vector in the eigenspace corresponding to $\mu_{M_R}({\bf P}_{{\bf \Lambda}^{*}({\bf \Lambda}^{*})^{\mathsf T}})$ to obtain a feasible solution for~(\ref{eqn:bcrb-corr-W}). In our simulations, we observe that condition~(\ref{eqn:spectral-W}) holds most of the time.

\begin{remark} \label{remark:alternative-W}
Alternatively, one can directly solve the dual problem~(\ref{eqn:bcrb-dual-W}) to get its optimal solution ${\bf \Lambda}^*$. One way of solving the dual problem is by formulating it as an SDP. However, it can be shown that the SDP formulation of~(\ref{eqn:bcrb-dual-W}) imposes a semidefinite constraint on a matrix whose dimension is quadratic in the number of receive antennas, which renders its complexity too high for practical implementation, especially when $N_R$ is large. Another approach is to solve~(\ref{eqn:bcrb-dual-W}) using a gradient-descent-based algorithm,  which is guaranteed to converge to a globally optimal solution since the dual problem~(\ref{eqn:bcrb-dual-W}) is unconstrained and convex. However, it is not straightforward to choose an appropriate step size that guarantees a reasonable convergence speed. For these reasons, in this paper, we adopt the semidefinite relaxation approach which first solves~(\ref{eqn:bcrb-corr-W-sdr}) and then retrieves ${\bf \Lambda}^*$ using~(\ref{eqn:opt-dual-variable-W}), as described in the previous paragraph.
\end{remark}


\subsection{Transmit Beamforming Via Lagrangian Duality} \label{sec:dual-V}
In this part, we consider the optimization problem~(\ref{eqn:bcrb-corr-V}) for the design of the transmit beamformers. As in the previous section, we show that the dual problem of~(\ref{eqn:bcrb-corr-V}) can be expressed as an unconstrained convex problem, and we derive a sufficient condition under which the semidefinite relaxation of~(\ref{eqn:bcrb-corr-V}) is tight. The following theorem holds for the optimal design of the transmit beamformers.

\begin{theorem} \label{thm:dual-V}
The dual problem of~(\ref{eqn:bcrb-corr-V}) can be expressed as
\begin{equation} \label{eqn:bcrb-dual-V}
\underset{{\bf \Lambda} \in \bC^{3L\times 3L}}{\mathrm{maximize}} \quad 2\trace({\bf \Lambda}{\bf Q}^{1/2}) - \trace\left({\bf \Lambda}^{\mathsf T}{\bf J}_{k-1}^{({\mathsf P})}{\bf \Lambda}\right) - 2P\mu_{1}\left(\tilde{{\bf P}}_{{\bf \Lambda}  {\bf \Lambda}^{\mathsf T}}\right),
\end{equation}
where $\mu_1(\cdot)$ denotes the largest eigenvalue, and
\begin{equation} \label{eqn:P_Lambda-V}
\tilde{{\bf P}}_{{\bf \Lambda}  {\bf \Lambda}^{\mathsf T}} \triangleq \sum_{m=1}^{M_R}\E\left[\dot{\tilde{{\bf G}}}_{k,m}(\bd{\theta}){\bf \Lambda} {\bf \Lambda}^{\mathsf T}\dot{\tilde{{\bf G}}}_{k,m}^{\mathsf H}(\bd{\theta}) \,\big|\, {\bf Y}_{1:k-1}\right],
\end{equation}
in which $\dot{\tilde{{\bf G}}}_{k,m}(\bd{\theta}) = \begin{bmatrix}
\dot{{\bf H}}_{1}^\sfH(\bd{\theta}){\bf w}_{k,m} & \cdots & \dot{{\bf H}}_{3L}^\sfH(\bd{\theta}){\bf w}_{k,m}
\end{bmatrix}$ and ${\bf w}_{k,m}$ is the $m$-th beamformer of ${\bf W}_k$. The dual problem~(\ref{eqn:bcrb-dual-V}) is unconstrained and convex. Let $\tilde{{\bf \Lambda}}^{*}$ denote the solution of~(\ref{eqn:bcrb-dual-V}). If
\begin{equation} \label{eqn:spectral-V}
\mu_{1}\left(\tilde{{\bf P}}_{\tilde{{\bf \Lambda}}^{*}(\tilde{{\bf \Lambda}}^{*})^{\mathsf T}}\right) > \mu_{2}\left(\tilde{{\bf P}}_{\tilde{{\bf \Lambda}}^{*}(\tilde{{\bf \Lambda}}^{*})^{\mathsf T}}\right),
\end{equation}
then
\begin{equation}  \label{eqn:R_star-V}
{\bf R}_{{\bf V}_k}^{*} = {\bf V}_k^{*}({\bf V}_k^{*})^{\mathsf H},
\end{equation}
is the globally optimal solution of~(\ref{eqn:bcrb-corr-V}), where
\begin{equation} \label{eqn:V_star}
{\bf V}_k^{*} = \sqrt{\frac{P}{M_T}}\begin{bmatrix}
	{\bf v}_{k,1}^{*} & \cdots & {\bf v}_{k,1}^{*}
\end{bmatrix}
\end{equation}
and ${\bf v}_{k,1}^{*}$ is the eigenvector of $\tilde{{\bf P}}_{\tilde{{\bf \Lambda}}^{*}(\tilde{{\bf \Lambda}}^{*})^{\mathsf T}}$ (with unit norm) corresponding to the largest eigenvalue.
\end{theorem}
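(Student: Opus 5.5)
The plan is to dualize only the LMI constraints of~(\ref{eqn:bcrb-corr-V}) and to keep the feasible set $\mathcal{S}_V=\{{\bf R}_{{\bf V}_k}\succeq 0:\trace({\bf R}_{{\bf V}_k})\le P,\ \rank({\bf R}_{{\bf V}_k})\le M_T\}$ as an implicit domain, as in the proof of Theorem~\ref{thm:dual-W}.

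\textbf{Lagrangian and dual function.} Assign a PSD multiplier $\begin{bmatrix}{\bf A}_\ell & {\bf b}_\ell\\ {\bf b}_\ell^\sfH & c_\ell\end{bmatrix}$ to the $\ell$-th LMI. Minimizing over $d_\ell$ forces $c_\ell=1$. Minimizing over the off-diagonal data and using the Schur complement then gives ${\bf A}_\ell\succeq{\bf b}_\ell{\bf b}_\ell^\sfH$, and at optimality ${\bf A}_\ell={\bf b}_\ell{\bf b}_\ell^\sfH$ because ${\bf J}_k\succeq0$. Stack the ${\bf b}_\ell$ as the columns of ${\bf \Lambda}$ (with the transpose convention of the statement). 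The Lagrangian becomes $2\trace({\bf \Lambda}{\bf Q}^{1/2})-\trace({\bf \Lambda}^\sfT{\bf J}_k({\bf R}_{{\bf V}_k},{\bf R}_{{\bf W}_k}){\bf \Lambda})$, with the sign chosen so that the dual is a maximization. Split ${\bf J}_k={\bf J}^{(\mathsf P)}_{k-1}+{\bf J}^{(\mathsf D)}_k$ and use~(\ref{eqn:fisher_data}) with ${\bf R}_{{\bf W}_k}=\sum_m{\bf w}_{k,m}{\bf w}_{k,m}^\sfH$, taking ${\bf W}_k$ with orthonormal columns. Cyclic trace manipulations then give
\[\trace({\bf \Lambda}^\sfT{\bf J}^{(\mathsf D)}_k{\bf \Lambda})=2\trace\big({\bf R}_{{\bf V}_k}\tilde{\bf P}_{{\bf \Lambda}{\bf \Lambda}^\sfT}\big),\]
where the factor $2\Re$ collapses because the expression is real (Hermitian symmetry of the sum over $i,j$). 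This bookkeeping, namely checking that the real part and the conjugation pattern produce exactly $\tilde{\bf P}$ with columns $\dot{\bf H}_i^\sfH{\bf w}_{k,m}$, is the step I expect to be most error-prone. I would mirror the computation in the proof of Theorem~\ref{thm:dual-W} with the roles of ${\bf V}$ and ${\bf W}$ exchanged.

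\textbf{Inner minimization.} It remains to minimize $-2\trace({\bf R}\tilde{\bf P})$ over ${\bf R}\in\mathcal{S}_V$, i.e., to maximize $\trace({\bf R}\tilde{\bf P})$ with $\tilde{\bf P}\succeq0$. This is a linear function over a set whose convex hull is $\{{\bf R}\succeq0:\trace({\bf R})\le P\}$, because the rank-one matrices already lie in $\mathcal{S}_V$. The maximum is therefore $P\mu_1(\tilde{\bf P})$, attained at ${\bf R}=P{\bf v}_1{\bf v}_1^\sfH$, which is what gives~(\ref{eqn:bcrb-dual-V}). For convexity of the dual (as a maximization problem): the first term is linear; $\trace({\bf \Lambda}^\sfT{\bf J}^{(\mathsf P)}_{k-1}{\bf \Lambda})$ is a convex quadratic, since ${\bf J}^{(\mathsf P)}_{k-1}\succeq0$; and $\mu_1(\tilde{\bf P}_{{\bf \Lambda}{\bf \Lambda}^\sfT})=\max_{\|{\bf v}\|=1}{\bf v}^\sfH\tilde{\bf P}{\bf v}$ is a pointwise maximum of convex quadratics in ${\bf \Lambda}$. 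The objective is thus concave, and it is unconstrained because the multipliers have been eliminated.

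\textbf{Strong duality under~(\ref{eqn:spectral-V}).} First establish strong duality for the SDR, obtained by dropping the rank constraint. Slater's condition holds by identifiability, and the SDR has the same dual by the convex-hull argument above. Let $\tilde{\bf \Lambda}^*$ be dual optimal and $({\bf R}^\dagger,d^\dagger)$ primal optimal for the SDR. Complementary slackness and the saddle-point property then give:
\begin{enumerate}[label=(\roman*)]
\item $\tilde{\bf \Lambda}^*={\bf J}_k^{-1}({\bf R}^\dagger){\bf Q}^{1/2}$;
\item ${\bf R}^\dagger$ maximizes $\trace({\bf R}\tilde{\bf P}_{\tilde{\bf \Lambda}^*\tilde{\bf \Lambda}^{*\sfT}})$ over the trace-constrained PSD cone.
\end{enumerate}
When $\mu_1>\mu_2$, this maximizer is unique and equals $P{\bf v}^*_{k,1}{\bf v}^{*\sfH}_{k,1}$. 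The maximizer is rank one, hence feasible for~(\ref{eqn:bcrb-corr-V}) for any $M_T\ge1$. This yields a zero duality gap for the nonconvex problem and establishes global optimality of~(\ref{eqn:R_star-V}), realized by repeating $\sqrt{P/M_T}\,{\bf v}^*_{k,1}$ across all $M_T$ columns.

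One subtle point needs care. Besides attaining the dual value, the candidate ${\bf R}^*$ must give ${\bf J}_k({\bf R}^*)$ with ${\bf J}_k^{-1}{\bf Q}^{1/2}=\tilde{\bf \Lambda}^*$. Uniqueness of the Lagrangian minimizer in~(ii) handles this: any SDR optimum must equal $P{\bf v}^*_{k,1}{\bf v}^{*\sfH}_{k,1}$, so ${\bf R}^\dagger$ itself is this rank-one matrix.
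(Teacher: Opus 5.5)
Your proof is correct, but for the global-optimality step it takes a different route from the paper.

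\textbf{The paper's route.} Appendix~\ref{appx:dual-proof-V}, mirroring Appendix~\ref{appx:dual-proof-W}, works directly on the non-convex problem. It checks that the pair $({\bf R}_{{\bf V}_k}^*,\tilde{\bf \Lambda}^*)$ satisfies primal and dual feasibility, Lagrangian optimality and complementary slackness, which suffice for optimality without convexity. Complementary slackness, i.e.\ $\bd{\lambda}_\ell^*=\sqrt{q_\ell}{\bf J}_k^{-1}({\bf R}_{{\bf V}_k}^*,{\bf R}_{{\bf W}_k}){\bf e}_\ell$, is obtained from stationarity of the concave dual objective at $\tilde{\bf \Lambda}^*$. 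This requires $\mu_1(\tilde{\bf P}_{{\bf \Lambda}{\bf \Lambda}^\sfT})$ to be differentiable, which is exactly what the simple top eigenvalue provides. SDR tightness (Corollary~\ref{cor:dual-V}) is then deduced afterwards.

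\textbf{Your route.} You pass through the convex relaxation instead. Slater gives strong duality for the SDR. The saddle-point property forces every SDR optimum to maximize $\trace({\bf R}\tilde{\bf P}_{\tilde{\bf \Lambda}^*(\tilde{\bf \Lambda}^*)^\sfT})$ over $\{{\bf R}\succeq 0,\ \trace({\bf R})\le P\}$. Under (\ref{eqn:spectral-V}) that maximizer is unique and rank one, so it is feasible for the rank-constrained problem.

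\textbf{What each buys.} Your argument avoids eigenvalue-derivative calculus and yields tightness at the same time. It also gives a bit more: under (\ref{eqn:spectral-V}) the SDR solution is unique, so solving the SDR already returns $P{\bf v}^*_{k,1}({\bf v}^*_{k,1})^\sfH$. The paper's argument, by contrast, needs no constraint qualification and no primal attainment. It only needs a dual optimum and invertibility of ${\bf J}_k$ at ${\bf R}_{{\bf V}_k}^*$. Your extra requirements are harmless here:
\begin{itemize}
\item Slater follows from the identifiability assumption.
\item Primal attainment is automatic, because the ${\bf R}$-set is compact and the LMIs force $d_\ell\ge 0$, so $d$ may be restricted to a compact set.
\end{itemize}

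\textbf{Minor points.} Your item (i) is not actually needed, since (ii) plus uniqueness already closes the argument. The condition ${\bf A}_\ell\succeq{\bf b}_\ell{\bf b}_\ell^\sfH$ comes from positive semidefiniteness of the multiplier once $c_\ell=1$, not from a minimization. Making explicit that ${\bf W}_k$ has orthonormal columns, so that $\sum_m{\bf w}_{k,m}{\bf w}_{k,m}^\sfH={\bf R}_{{\bf W}_k}$, is a point the paper leaves implicit.
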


\begin{proof}
The proof of Theorem~\ref{thm:dual-V} follows similar steps as that of Theorem~\ref{thm:dual-W}. A proof sketch is given in Appendix~\ref{appx:dual-proof-V}.
\end{proof}

Notice the difference between the two dual problems~(\ref{eqn:bcrb-dual-W}) and~(\ref{eqn:bcrb-dual-V}): while~(\ref{eqn:bcrb-dual-W}) depends on the $M_R$ largest eigenvalues of a specific direction matrix,~(\ref{eqn:bcrb-dual-V}) depends only on the single largest eigenvalue. Also, by comparing the solutions~(\ref{eqn:W_star}) and~(\ref{eqn:V_star}) of the two problems (which are optimal if the conditions~(\ref{eqn:spectral-W}) and~(\ref{eqn:spectral-V}) hold), one can observe that~(\ref{eqn:W_star}) consists of $M_R$ orthogonal beamformers, whereas~(\ref{eqn:V_star}) consists of a single beamformer. These differences follow from the fact that the choice of the receive beamforming matrix ${\bf W}_k$ in the channel model~(\ref{eqn:model}) affects the noise covariance matrix, which ultimately results in imposing the orthogonal projection constraint~(\ref{eqn:bcrb-corr-W-2}). 


\begin{remark} \label{remark:spectral-V}
The proof of Theorem~\ref{thm:dual-V} implies that the optimal transmit beamformers necessarily belong to the eigenspace corresponding to the largest eigenvalue of $\tilde{{\bf P}}_{\tilde{{\bf \Lambda}}^{*}(\tilde{{\bf \Lambda}}^{*})^{\mathsf T}}$. If the multiplicity of the largest eigenvalue is one (i.e., condition~(\ref{eqn:spectral-V}) holds), the eigenspace is completely characterized by the unique largest eigenvector (i.e., the optimal solution~(\ref{eqn:V_star})). If the multiplicity of the largest eigenvalue is larger than one (i.e., condition~(\ref{eqn:spectral-V}) does not hold), there is some ambiguity about the optimal choice of the beamformers within the eigenspace, in which case global optimality cannot be guaranteed. 
\end{remark}

\begin{remark} \label{remark:number-of-beamformers}
The fact that the optimal transmit beamformers belong to the eigenspace corresponding to the largest eigenvalue of $\tilde{{\bf P}}_{\tilde{{\bf \Lambda}}^{*}(\tilde{{\bf \Lambda}}^{*})^{\mathsf T}}$ implies that the minimum number of transmit beamformers that is sufficient to optimize the BCRB metric is at most equal to the multiplicity of the largest eigenvalue. While the exact dependence of the minimum number of beamformers on the system parameters (e.g., the prior distribution, the number of sensing targets, the number of transmit/receive antennas) is complicated~\cite{Kareem2025}, the identification of the optimal eigenspace allows to reduce the search space.
\end{remark}


\begin{corollary} \label{cor:dual-V}
If condition~(\ref{eqn:spectral-V}) holds, the semidefinite relaxation of the optimization problem~(\ref{eqn:bcrb-corr-V}), given by
\begin{subequations} \label{eqn:bcrb-corr-V-sdr}
\begin{align}
	\underset{{\bf R}_{{\bf V}_k}, d_1, \ldots, d_{3L}}{\text{minimize}} \quad &\sum_{\ell = 1}^{3L} d_\ell\\
	\text{subject to} \,\,\,\,\quad &\begin{bmatrix}
		{\bf J}_k({\bf R}_{{\bf V}_k},{\bf R}_{{\bf W}_k}) & \sqrt{q_\ell}{\bf e}_\ell \\
		\sqrt{q_\ell}{\bf e}_\ell^\sfT &  d_\ell
	\end{bmatrix} \succeq 0, \,\, \forall \ell, \label{eqn:bcrb-corr-V-sdr-1}\\
	&\trace({\bf R}_{{\bf V}_k}) \leq P, \\
	&{\bf R}_{{\bf V}_k} \succeq 0,
\end{align}
\end{subequations}
is tight.
\end{corollary}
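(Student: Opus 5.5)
We will mirror the proof of Corollary~\ref{cor:dual-W}, using Theorem~\ref{thm:dual-V} in place of Theorem~\ref{thm:dual-W}. The argument has three steps.

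\emph{Step 1: the relaxation has the same dual as~(\ref{eqn:bcrb-corr-V}).} We dualize the relaxation~(\ref{eqn:bcrb-corr-V-sdr}) only with respect to the linear matrix inequalities~(\ref{eqn:bcrb-corr-V-sdr-1}). We keep the set $\{{\bf R}_{{\bf V}_k}\succeq 0,\ \trace({\bf R}_{{\bf V}_k})\le P\}$ as the domain of the inner minimization. For each $\ell$ we introduce a PSD multiplier and partially minimize over $d_\ell$. Exactly as in Appendix~\ref{appx:dual-proof-V}, the $d_\ell$-terms and the affine part ${\bf J}_{k-1}^{(\mathsf P)}$ combine into $2\trace({\bf \Lambda}{\bf Q}^{1/2})-\trace({\bf \Lambda}^\sfT{\bf J}_{k-1}^{(\mathsf P)}{\bf \Lambda})$, with ${\bf \Lambda}$ collecting the multiplier blocks. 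The term linear in ${\bf R}_{{\bf V}_k}$ takes the form $-2\trace({\bf R}_{{\bf V}_k}\tilde{{\bf P}}_{{\bf \Lambda}{\bf \Lambda}^\sfT})$. Over the relaxed set its minimum is $-2P\mu_1(\tilde{{\bf P}}_{{\bf \Lambda}{\bf \Lambda}^\sfT})$, since a linear function over the trace-bounded PSD cone is optimized at $P$ times a rank-one top-eigenvector projector. This is the same value obtained in Theorem~\ref{thm:dual-V} for the rank-constrained set, because that minimizer already has rank one $\le M_T$. Hence the dual of~(\ref{eqn:bcrb-corr-V-sdr}) coincides with~(\ref{eqn:bcrb-dual-V}).

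\emph{Step 2: strong duality for the relaxation.} Problem~(\ref{eqn:bcrb-corr-V-sdr}) is convex. By the identifiability assumption, some feasible ${\bf R}_{{\bf V}_k}$ makes ${\bf J}_k$ invertible; we may scale it into the interior of the trace constraint if needed. Choosing each $d_\ell$ large then makes the LMIs strictly feasible, so Slater's condition holds. Therefore the optimal value of the SDR equals the optimal value of~(\ref{eqn:bcrb-dual-V}).

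\emph{Step 3: chaining the inequalities.} Theorem~\ref{thm:dual-V} states that, under~(\ref{eqn:spectral-V}), the feasible point~(\ref{eqn:R_star-V}) is globally optimal for~(\ref{eqn:bcrb-corr-V}) and attains the dual value. We then have
\[
\text{primal}(\ref{eqn:bcrb-corr-V}) \ge \text{SDR} = \text{dual} = \text{primal}(\ref{eqn:bcrb-corr-V}).
\]
The first inequality holds because the SDR enlarges the feasible set. This gives tightness.

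The main obstacle is Step 1: checking that the partial Lagrangian minimization over ${\bf R}_{{\bf V}_k}$ is genuinely unaffected by dropping the rank constraint. The key observation is that the linear minimization already has a rank-one optimizer, so the rank constraint is inactive in the dual function.
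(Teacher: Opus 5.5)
Your proof is correct, but it reuses the template of the paper's proof of Corollary~\ref{cor:dual-W}: you show the dual functions coincide, invoke Slater, and chain with Theorem~\ref{thm:dual-V}. The paper's own proof of this corollary is a single sentence: (\ref{eqn:bcrb-corr-V-sdr}) is (\ref{eqn:bcrb-corr-V}) with the rank constraint dropped, and under (\ref{eqn:spectral-V}) Theorem~\ref{thm:dual-V} gives a rank-one optimizer, ``in which case'' the relaxation is tight.

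Taken literally, that sentence leaves implicit why a rank-one optimizer over the smaller feasible set stays optimal over the enlarged one. For $M_T=1$, for instance, every feasible point of (\ref{eqn:bcrb-corr-V}) already has rank at most one, so knowing that the optimizer has rank one says nothing about the relaxation. The missing link is exactly your Step~1, which the paper records only as a remark after the corollary. The Lagrangian minimization over $\{{\bf R}_{{\bf V}_k}\succeq 0,\ \trace({\bf R}_{{\bf V}_k})\le P\}$ is already attained at the rank-one matrix $P\tilde{{\bf v}}_1\tilde{{\bf v}}_1^\sfH$. Hence the primal--dual pair $({\bf R}_{{\bf V}_k}^*,\tilde{{\bf \Lambda}}^*)$ built in the proof of Theorem~\ref{thm:dual-V} satisfies all four optimality conditions for the relaxation as well. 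So your route is the fully justified one, while the paper's is shorter but relies on that unstated step.

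You can drop Step~2 entirely. Weak duality for (\ref{eqn:bcrb-corr-V-sdr}), combined with Step~1, already gives a chain: the optimal value of (\ref{eqn:bcrb-corr-V-sdr}) is at least that of (\ref{eqn:bcrb-dual-V}), which equals that of (\ref{eqn:bcrb-corr-V}), which is at least that of (\ref{eqn:bcrb-corr-V-sdr}). Identifiability and Slater are therefore not needed. This also spares you two details your Slater check glosses over: scaling ${\bf R}_{{\bf V}_k}$ down uses ${\bf J}_{k-1}^{(\mathsf P)}\succeq 0$, and a relative-interior point of the retained set needs ${\bf R}_{{\bf V}_k}\succ 0$.

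Finally, the fact that ${\bf R}_{{\bf V}_k}^*$ ``attains the dual value'' comes from the zero-gap primal--dual pair constructed in the proof of Theorem~\ref{thm:dual-V}, not from its statement, which asserts only global optimality.
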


\begin{proof}
The semidefinite relaxation~(\ref{eqn:bcrb-corr-V-sdr}) can be obtained by dropping the rank constraint from the optimization problem~(\ref{eqn:bcrb-corr-V}). When condition~(\ref{eqn:spectral-V}) holds, Theorem~\ref{thm:dual-V} asserts that the optimization problem~(\ref{eqn:bcrb-corr-V}) has a rank-1 solution, in which case the semidefinite relaxation~(\ref{eqn:bcrb-corr-V-sdr}) is tight.
\end{proof}

\begin{remark}
Note that the dual of the semidefinite relaxation~(\ref{eqn:bcrb-corr-V-sdr}) with respect to the constraint~(\ref{eqn:bcrb-corr-V-sdr-1}) is equivalent to the dual problem~(\ref{eqn:bcrb-dual-V}).
\end{remark}

Theorem~\ref{thm:dual-V} and Corollary~\ref{cor:dual-V} imply that the BCRB optimization of the transmit beamformers can be conducted by solving the semidefinite relaxation~(\ref{eqn:bcrb-corr-V-sdr}). In particular, denoting the solution of~(\ref{eqn:bcrb-corr-V-sdr}) by $\tilde{{\bf R}}_{{\bf V}_k}^*$, the optimal solution $\tilde{{\bf \Lambda}}^{*}$ of the dual problem can be computed as
\begin{equation} \label{eqn:opt-dual-variable-V}
\tilde{{\bf \Lambda}}^{*} = {\bf J}_k^{-1}(\tilde{{\bf R}}_{{\bf V}_k}^*, {\bf R}_{{\bf W}_k}){\bf Q}^{1/2},
\end{equation}
which can be used to find a low-rank solution according to~(\ref{eqn:R_star-V}). This solution is optimal if the multiplicity of the largest eigenvalue of $\tilde{{\bf P}}_{\tilde{{\bf \Lambda}}^{*}(\tilde{{\bf \Lambda}}^{*})^{\mathsf T}}$ is one (i.e., condition~(\ref{eqn:spectral-V}) holds). If the multiplicity is larger than one, we set the transmit beamformers in that case to random vectors in the corresponding eigenspace, while ensuring that the power constraint is satisfied. 

\begin{remark}
Similar to the observation made in Remark~\ref{remark:alternative-W}, one can alternatively find the optimal dual variable $\tilde{{\bf \Lambda}}^{*}$ by directly solving the dual problem~(\ref{eqn:bcrb-dual-V}) (e.g., using its SDP formulation or using gradient descent). However, this approach tends to have a higher implementation complexity.
\end{remark}

\section{Exploration-Centric BCRB Optimization} \label{sec:proposed-P2}
This section considers the exploration-centric variant $\mathsf{(P_2)}$ of the BCRB optimization problem, which is given in~(\ref{eqn:bcrb-var}). This variant optimizes the BCRB metric, while restricting the transmit beamformers to be mutually orthogonal. Hence, the optimal solution is guaranteed to explore multiple orthogonal beamforming directions, which makes it suitable for an initial exploration phase. To solve the optimization problem $\mathsf{(P_2)}$, we proceed in a similar way as our approach to $\mathsf{(P_1)}$ in Section~\ref{sec:proposed-P1}. In particular, the problem is reformulated in the covariance domain and then approached via alternating optimization, while taking into account the orthogonal constraint on the transmit beamformers. Each sub-problem is then analyzed in the Lagrangian dual domain to derive sufficient conditions for global optimality.

The technical steps that are involved are relatively similar to those followed in Section~\ref{sec:proposed-P1}. Thus, we focus here on the key difference with respect to the optimization $\mathsf{(P_1)}$, which is the orthogonal constraint~(\ref{eqn:bcrb-var-1}) on the transmit beamformers. More specifically, when the problem is expressed in terms of the correlation matrix ${\bf R}_{{\bf V}_k} = {\bf V}_k{\bf V}_k^{\mathsf H}$, the orthogonal constraint can be equivalently written through two constraints: 1) $\frac{M_T}{P}{\bf R}_{{\bf V}_k}$ is an orthogonal projection matrix, and 2) that $\rank({\bf R}_{{\bf V}_k}) = M_T$. These two constraints are both necessary and sufficient to construct a matrix ${\bf V}_k$ satisfying the orthogonal constraint~(\ref{eqn:bcrb-var-1}). Hence, following similar steps that lead to~(\ref{eqn:bcrb-corr-sdp}) in Section~\ref{sec:proposed-P1}, the optimization problem $\mathsf{(P_2)}$ can be expressed in the covariance domain as
\begin{subequations} \label{eqn:bcrb-var-corr-sdp}
\begin{align}
\underset{\substack{{\bf R}_{{\bf V}_k},{\bf R}_{{\bf W}_k}, \\ d_1, \ldots, d_{3L}}}{\text{minimize}} \quad &\sum_{\ell = 1}^{3L} d_\ell\\
\text{subject to} \quad &\begin{bmatrix}
	{\bf J}_k({\bf R}_{{\bf V}_k},{\bf R}_{{\bf W}_k}) &  \sqrt{q_\ell}{\bf e}_\ell \\
	\sqrt{q_\ell}{\bf e}_\ell^\sfT &  d_\ell
\end{bmatrix} \succeq 0, \quad \forall \ell, \\
&\frac{M_T}{P}{\bf R}_{{\bf V}_k} \text{ is an orthogonal projection matrix}, \label{eqn:bcrb-var-corr-sdp-2}\\
&\rank({\bf R}_{{\bf V}_k}) = M_T,\,\, {\bf R}_{{\bf V}_k} \succeq 0, \label{eqn:bcrb-var-corr-sdp-3} \\
&{\bf R}_{{\bf W}_k} \text{ is an orthogonal projection matrix},\\
&\rank({\bf R}_{{\bf W}_k}) = M_R, \,\, {\bf R}_{{\bf W}_k} \succeq 0.
\end{align}
\end{subequations}
Notice that the only difference with~(\ref{eqn:bcrb-corr-sdp}) is the additional orthogonal projection constraint~(\ref{eqn:bcrb-var-corr-sdp-2}), whereas in~(\ref{eqn:bcrb-corr-sdp}), only the trace of ${\bf R}_{{\bf V}_k}$ is constrained.

The optimization problem~(\ref{eqn:bcrb-var-corr-sdp}) can be approached via the same alternating optimization algorithm discussed in Section~\ref{sec:proposed-P1}. In particular, for a fixed ${\bf R}_{{\bf V}_k}$, the optimization problem over ${\bf R}_{{\bf W}_k}$ takes the same form as before (i.e., the optimization problem~(\ref{eqn:bcrb-corr-W})), and for a fixed ${\bf R}_{{\bf W}_k}$, the optimization problem over ${\bf R}_{{\bf V}_k}$ can be written as
\begin{subequations} \label{eqn:bcrb-var-V}
\begin{align}
\underset{{\bf R}_{{\bf V}_k}, d_1, \ldots, d_{3L}}{\text{minimize}} \quad &\sum_{\ell = 1}^{3L} d_\ell\\
\text{subject to} \,\,\,\,\quad &\begin{bmatrix}
	{\bf J}_k({\bf R}_{{\bf V}_k},{\bf R}_{{\bf W}_k}) & \sqrt{q_\ell}{\bf e}_\ell \\
	\sqrt{q_\ell}{\bf e}_\ell^\sfT &  d_\ell
\end{bmatrix} \succeq 0, \,\, \forall \ell,\\
&\frac{M_T}{P}{\bf R}_{{\bf V}_k} \text{ is an orthogonal projection matrix}, \\
&\rank({\bf R}_{{\bf V}_k}) = M_T,\,\, {\bf R}_{{\bf V}_k} \succeq 0.
\end{align}
\end{subequations}
Hence, the optimization problem $\mathsf{(P_2)}$ can be approached by iterating between solving~(\ref{eqn:bcrb-corr-W}) and solving~(\ref{eqn:bcrb-var-V}). In fact, the two optimization problems have very similar forms, and hence,~(\ref{eqn:bcrb-var-V}) can be solved almost identically to the solution described in Section~\ref{sec:dual-W}. In particular, the following theorem and corollary hold, which can be seen as the recast of Theorem~\ref{thm:dual-W} and Corollary~\ref{cor:dual-W} for solving~(\ref{eqn:bcrb-var-V}).
\begin{theorem} \label{thm:dual-var-V}
The dual problem of~(\ref{eqn:bcrb-var-V}) can be expressed as
\begin{equation} \label{eqn:bcrb-var-dual-V}
\underset{{\bf \Lambda}}{\mathrm{max}} \quad 2\trace({\bf \Lambda}{\bf Q}^{1/2}) - \trace\left({\bf \Lambda}^{\mathsf T}{\bf J}_{k-1}^{({\mathsf P})}{\bf \Lambda}\right) - \frac{2P}{M_T}\sum_{i=1}^{M_T} \mu_{i}\left(\tilde{{\bf P}}_{{\bf \Lambda}  {\bf \Lambda}^{\mathsf T}}\right),
\end{equation}
where $\mu_i(\cdot)$ denotes the $i$-th largest eigenvalue of the matrix in the argument, and $\tilde{{\bf P}}_{{\bf \Lambda}  {\bf \Lambda}^{\mathsf T}}$ is as given in~(\ref{eqn:P_Lambda-V}). The dual problem~(\ref{eqn:bcrb-var-dual-V}) is unconstrained and convex. Let $\tilde{{\bf \Lambda}}^{*}$ denote the solution of~(\ref{eqn:bcrb-var-dual-V}). If
\begin{equation} \label{eqn:spectral-var-V}
\mu_{M_T}\left(\tilde{{\bf P}}_{\tilde{{\bf \Lambda}}^{*}(\tilde{{\bf \Lambda}}^{*})^{\mathsf T}}\right) > \mu_{M_T+1}\left(\tilde{{\bf P}}_{\tilde{{\bf \Lambda}}^{*}(\tilde{{\bf \Lambda}}^{*})^{\mathsf T}}\right),
\end{equation}
then
\begin{equation}  \label{eqn:R_star-var-V}
{\bf R}_{{\bf V}_k}^{*} = {\bf V}_k^{*}({\bf V}_k^{*})^{\mathsf H},
\end{equation}
is the globally optimal solution of~(\ref{eqn:bcrb-var-V}), where
\begin{equation} \label{eqn:V_star-var}
{\bf V}_k^{*} = \begin{bmatrix}
	{\bf v}_{k,1}^{*} & \cdots & {\bf v}_{k,M_T}^{*}
\end{bmatrix}
\end{equation}
and ${\bf v}_{k,1}^{*},\ldots, {\bf v}_{k,M_T}^{*}$ are the eigenvectors of $\tilde{{\bf P}}_{\tilde{{\bf \Lambda}}^{*}(\tilde{{\bf \Lambda}}^{*})^{\mathsf T}}$ (with unit norm) corresponding to the $M_T$ largest eigenvalues.
\end{theorem}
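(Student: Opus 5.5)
Since Theorem~\ref{thm:dual-var-V} is the transmit-side analogue of Theorem~\ref{thm:dual-W}, the plan is to repeat that proof with the roles of transmit and receive swapped. We dualize only the LMI constraints of~(\ref{eqn:bcrb-var-V}) and keep the nonconvex set $\mathcal{S}_T=\{{\bf R}:\tfrac{M_T}{P}{\bf R}\text{ is an orthogonal projection of rank }M_T\}$ as the domain of the Lagrangian.

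\textbf{Step 1 (BFIM is linear in ${\bf R}_{{\bf V}_k}$).} With ${\bf R}_{{\bf W}_k}=\sum_m{\bf w}_{k,m}{\bf w}_{k,m}^\sfH$ for orthonormal receive beamformers, rewrite~(\ref{eqn:fisher_data}) as
\[
\left[{\bf J}_k^{(\mathsf D)}\right]_{ij}=2\Re\,\trace\Big({\bf R}_{{\bf V}_k}\sum_m\E\big[\dot{{\bf H}}_j^\sfH{\bf w}_{k,m}{\bf w}_{k,m}^\sfH\dot{{\bf H}}_i\big]\Big).
\]
Hence, for any ${\bf \Lambda}$, the quadratic form $\trace({\bf \Lambda}^{\sfT}{\bf J}_k^{(\mathsf D)}{\bf \Lambda})$ equals $2\Re\,\trace({\bf R}_{{\bf V}_k}\tilde{{\bf P}}_{{\bf \Lambda}{\bf \Lambda}^\sfT})$ (up to conjugation conventions). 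Here $\tilde{{\bf P}}$ is the Hermitian PSD matrix in~(\ref{eqn:P_Lambda-V}), which is linear in ${\bf \Lambda}{\bf \Lambda}^\sfT$, so the real part can be dropped.

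\textbf{Step 2 (Lagrangian).} Introduce PSD multipliers $\begin{bmatrix}{\bf Z}_\ell&{\bf z}_\ell\\{\bf z}_\ell^\sfH&z_\ell\end{bmatrix}$ for each LMI, exactly as in Theorem~\ref{thm:dual-W}. Minimizing over $d_\ell$ forces $z_\ell=1$. Minimizing over the Schur structure yields ${\bf Z}_\ell\succeq{\bf z}_\ell{\bf z}_\ell^\sfH$, and at the optimum ${\bf Z}_\ell={\bf z}_\ell{\bf z}_\ell^\sfH$. Stacking ${\bf \Lambda}=[{\bf z}_1,\dots,{\bf z}_{3L}]$ gives the dual function
\[
2\trace({\bf \Lambda}{\bf Q}^{1/2})-\trace({\bf \Lambda}^\sfT{\bf J}_{k-1}^{(\mathsf P)}{\bf \Lambda})-\max_{{\bf R}\in\mathcal{S}_T}2\trace({\bf R}\tilde{{\bf P}}_{{\bf \Lambda}{\bf \Lambda}^\sfT}).
\]
By Ky Fan's maximum principle, the last term is $\frac{2P}{M_T}\sum_{i\le M_T}\mu_i(\tilde{{\bf P}})$. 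This establishes~(\ref{eqn:bcrb-var-dual-V}).

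\textbf{Step 3 (convexity).} The first term is linear. The second is a convex quadratic because ${\bf J}^{(\mathsf P)}_{k-1}\succeq0$. The Ky Fan sum is a convex function of $\tilde{{\bf P}}$, and $\tilde{{\bf P}}$ is a PSD quadratic map of ${\bf \Lambda}$. So the objective is concave in ${\bf \Lambda}$, which makes the maximization a convex problem, and it is unconstrained.

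\textbf{Step 4 (strong duality and recovery; the main obstacle).} The primal feasible set is nonconvex, so strong duality cannot be quoted directly. The route is the same as for Theorem~\ref{thm:dual-W}:
\begin{itemize}
\item The convex hull of $\mathcal{S}_T$ is $\{0\preceq{\bf R}\preceq\tfrac{P}{M_T}{\bf I},\ \trace{\bf R}=P\}$, and the Ky Fan maximum over this hull equals the maximum over $\mathcal{S}_T$. Hence the relaxed problem has the same dual.
\item Slater's condition holds for the relaxation by identifiability, so the relaxation has zero duality gap. Its KKT conditions give the optimal ${\bf \Lambda}^*={\bf J}_k^{-1}{\bf Q}^{1/2}$, and any relaxed optimum $\tilde{{\bf R}}$ must maximize $\trace({\bf R}\tilde{{\bf P}}_{{\bf \Lambda}^*{\bf \Lambda}^{*\sfT}})$ over the hull.
\item Under the gap~(\ref{eqn:spectral-var-V}), that maximizer is unique. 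It is $\tfrac{P}{M_T}$ times the projector onto the top-$M_T$ eigenspace, i.e., the matrix in~(\ref{eqn:R_star-var-V}) with the eigenvectors scaled by $\sqrt{P/M_T}$. It lies in $\mathcal{S}_T$, so the relaxation is tight and this ${\bf R}^*$ attains the dual value, hence is globally optimal.
\end{itemize}
The delicate point is justifying that every relaxed optimum is a Lagrangian minimizer at ${\bf \Lambda}^*$, which is the saddle-point property. I would establish it through complementary slackness of the LMI multipliers, as was done in Appendix~\ref{appx:dual-proof-W}.
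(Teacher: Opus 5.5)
Your Steps 1--3 follow the paper's argument (the paper proves this theorem by transplanting the proof of Theorem~\ref{thm:dual-W}). You are also right that the eigenvectors in~(\ref{eqn:V_star-var}) must be scaled by $\sqrt{P/M_T}$ for ${\bf R}_{{\bf V}_k}^*$ to be feasible; the statement as printed drops this factor.

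The genuine difference is Step~4. The paper never passes through the relaxation. It checks the four sufficient conditions for an optimal primal--dual pair directly on the nonconvex problem, and gets complementary slackness from stationarity of the concave dual objective at $\tilde{\bf \Lambda}^*$. There the eigengap is the condition under which $\sum_{i\le M_T}\mu_i(\tilde{\bf P}_{{\bf \Lambda}{\bf \Lambda}^\sfT})$ is differentiable. Its gradient produces ${\bf R}_{{\bf V}_k}^*$, which gives ${\bf J}_k({\bf R}_{{\bf V}_k}^*,{\bf R}_{{\bf W}_k})\bd{\lambda}_\ell^*=\sqrt{q_\ell}{\bf e}_\ell$.

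You use the eigengap instead as a uniqueness condition for a linear maximization over the hull $\{{\bf 0}\preceq{\bf R}\preceq\frac{P}{M_T}{\bf I},\ \trace({\bf R})=P\}$. Strong duality for the relaxation forces every relaxed optimizer to maximize $\trace({\bf R}\tilde{\bf P}_{\tilde{\bf \Lambda}^*(\tilde{\bf \Lambda}^*)^\sfT})$ over the hull; under the gap that maximizer is unique and lies in $\mathcal{S}_T$.

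This is correct and avoids eigenvalue perturbation calculus. It proves the theorem and Corollary~\ref{cor:dual-var-V} in one stroke. It also gives slightly more than the paper states: under the gap the relaxation has a \emph{unique} optimizer, already of rank $M_T$, so the recovery step via~(\ref{eqn:opt-dual-variable-V}) is needed only when the gap fails.

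The price is extra hypotheses. You need Slater's condition for the relaxation, i.e., identifiability for the fixed ${\bf R}_{{\bf W}_k}$; by monotonicity of ${\bf R}\mapsto{\bf J}_k$ it transfers to the point $\frac{P}{N_T}{\bf I}$ of the hull. You also need attainment of the relaxed primal. The paper's certificate needs only the existence of a dual maximizer.

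Two small corrections. First, the step you call delicate is routine, and complementary slackness is a consequence of it, not the tool. Let $p^\star$ be the relaxed value, $(\tilde{\bf R},\tilde d)$ a relaxed optimizer, and $\Omega^\star$ the multipliers built from $\tilde{\bf \Lambda}^*$. These multipliers are dual optimal, because the reduction ${\bf Z}_\ell={\bf z}_\ell{\bf z}_\ell^\sfH$ never lowers the dual value. Then $p^\star=\inf L(\cdot,\Omega^\star)\le L(\tilde{\bf R},\tilde d,\Omega^\star)\le\sum_\ell\tilde d_\ell=p^\star$, so $\tilde{\bf R}$ minimizes the Lagrangian.

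Second, in Step~3, ``convex function of a quadratic map'' is not enough by itself. You also need the Ky Fan sum to be nondecreasing in the Loewner order, and ${\bf \Lambda}\mapsto\tilde{\bf P}_{{\bf \Lambda}{\bf \Lambda}^\sfT}$ to be Loewner-convex (for real ${\bf \Lambda}$). Alternatively, as the paper does, write the last term as $\max_{{\bf R}\in\mathcal{S}_T}2\trace({\bf R}\tilde{\bf P}_{{\bf \Lambda}{\bf \Lambda}^\sfT})$, a pointwise maximum of convex quadratics in ${\bf \Lambda}$.
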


\begin{proof}
The proof is analogous to the proof of Theorem~\ref{thm:dual-W}.
\end{proof}

\begin{corollary} \label{cor:dual-var-V}
If condition~(\ref{eqn:spectral-var-V}) holds, the semidefinite relaxation of the optimization problem~(\ref{eqn:bcrb-var-V}), given by
\begin{subequations} \label{eqn:bcrb-var-V-sdr}
\begin{align}
	\underset{{\bf R}_{{\bf V}_k}, d_1, \ldots, d_{3L}}{\text{minimize}} \quad &\sum_{\ell = 1}^{3L} d_\ell\\
	\text{subject to} \,\,\,\,\quad &\begin{bmatrix}
		{\bf J}_k({\bf R}_{{\bf V}_k},{\bf R}_{{\bf W}_k}) & \sqrt{q_\ell}{\bf e}_\ell \\
		\sqrt{q_\ell}{\bf e}_\ell^\sfT &  d_\ell
	\end{bmatrix} \succeq 0, \,\, \forall \ell, \\
	&{\bf 0} \preceq {\bf R}_{{\bf V}_k} \preceq \frac{P}{M_T}{\bf I}_{N_T},\\
	&\trace({\bf R}_{{\bf V}_k}) \leq P,
\end{align}
\end{subequations}
is tight.
\end{corollary}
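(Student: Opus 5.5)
The argument mirrors the proof of Corollary~\ref{cor:dual-W}, with ${\bf R}_{{\bf W}_k}$ replaced by $\frac{M_T}{P}{\bf R}_{{\bf V}_k}$. There are three steps: (i) show that the dual of the relaxation~(\ref{eqn:bcrb-var-V-sdr}) coincides with~(\ref{eqn:bcrb-var-dual-V}); (ii) show that strong duality holds for the relaxation; (iii) combine this with the strong duality for the non-convex problem~(\ref{eqn:bcrb-var-V}) that Theorem~\ref{thm:dual-var-V} gives under condition~(\ref{eqn:spectral-var-V}).

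\textbf{Step (i): the dual of the relaxation.} Dualize only the LMI constraints. For each $\ell$, attach a PSD multiplier to the block constraint. Minimizing over $d_\ell$ normalizes the corner entry of that multiplier, and minimizing over the off-diagonal part yields the standard terms $2\trace({\bf \Lambda}{\bf Q}^{1/2}) - \trace({\bf \Lambda}^{\mathsf T}{\bf J}_{k-1}^{({\mathsf P})}{\bf \Lambda})$, exactly as in Theorem~\ref{thm:dual-W}. Since ${\bf J}_k$ is affine in ${\bf R}_{{\bf V}_k}$, the remaining term is $-2\max_{{\bf R}}\trace(\tilde{{\bf P}}_{{\bf \Lambda}{\bf \Lambda}^{\mathsf T}}{\bf R})$, taken over the set ${\bf 0}\preceq{\bf R}\preceq\frac{P}{M_T}{\bf I}_{N_T}$ with $\trace({\bf R})\le P$.
\begin{itemize}
\item By Ky Fan's maximum principle, this maximum equals $\frac{P}{M_T}\sum_{i=1}^{M_T}\mu_i(\tilde{{\bf P}}_{{\bf \Lambda}{\bf \Lambda}^{\mathsf T}})$.
\item This uses $\tilde{{\bf P}}\succeq 0$, so the bound $\trace({\bf R})\le P$ is active and the maximizer loads the top $M_T$ eigenvectors with weight $P/M_T$ each.
\item The maximum therefore coincides with the value obtained over the non-convex set of scaled rank-$M_T$ projections.
\end{itemize}
Hence the relaxation and~(\ref{eqn:bcrb-var-V}) have the same Lagrangian dual~(\ref{eqn:bcrb-var-dual-V}).

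\textbf{Step (ii): strong duality for the relaxation.} The relaxation is a convex SDP. The feasible set for ${\bf R}_{{\bf V}_k}$ is compact and convex. By the identifiability assumption, some feasible ${\bf R}_{{\bf V}_k}$ makes ${\bf J}_k$ invertible. Choosing $d_\ell$ large enough then makes each block LMI strictly feasible, so Slater's condition holds with respect to the dualized constraints, and the relaxation's optimal value equals the dual optimum $D^*$.

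\textbf{Step (iii): conclude tightness.} Under~(\ref{eqn:spectral-var-V}), Theorem~\ref{thm:dual-var-V} shows that the candidate built from the top-$M_T$ eigenvectors attains $D^*$ in~(\ref{eqn:bcrb-var-V}). This gives the chain
\begin{equation*}
D^* \le \mathrm{opt}(\text{relaxation}) \le \mathrm{opt}(\ref{eqn:bcrb-var-V}) = D^*.
\end{equation*}
The first inequality is weak duality together with Step (i). The second holds because the relaxation enlarges the feasible set. The final equality is strong duality for~(\ref{eqn:bcrb-var-V}) from Theorem~\ref{thm:dual-var-V}. Equality throughout gives tightness.

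The main obstacle is Step (i). One must verify carefully that the inner maximization over the convex hull equals the maximization over the extreme points. This relies on $\tilde{{\bf P}}_{{\bf \Lambda}{\bf \Lambda}^{\mathsf T}}\succeq 0$, which holds because it is an expectation of Gram-type terms. It also relies on the constraint set in~(\ref{eqn:bcrb-var-V-sdr}) being exactly the convex hull of the set $\{\frac{P}{M_T}\times\text{rank-}M_T\text{ projections}\}$, up to the inactive slack in the trace bound.
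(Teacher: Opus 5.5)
Your proposal is correct and follows the paper's route: the paper proves this corollary by analogy with Corollary~\ref{cor:dual-W}, identifying the dual of the relaxation with~(\ref{eqn:bcrb-var-dual-V}), invoking Slater's condition via identifiability to get strong duality for the relaxation, and combining this with the strong duality for~(\ref{eqn:bcrb-var-V}) that Theorem~\ref{thm:dual-var-V} gives under~(\ref{eqn:spectral-var-V}). Your explicit check that the inner maximization over $\{{\bf 0}\preceq {\bf R}\preceq \frac{P}{M_T}{\bf I}_{N_T},\ \trace({\bf R})\le P\}$ matches the non-convex one (using $\tilde{{\bf P}}_{\bf A}\succeq 0$) fills in a step the paper leaves implicit, and your chain in Step~(iii) needs only weak duality for the relaxation, so Step~(ii) is harmless but unnecessary.
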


\begin{proof}
The proof is analogous to the proof of Corollary~\ref{cor:dual-W}.
\end{proof}


Theorem~\ref{thm:dual-var-V} and Corollary~\ref{cor:dual-var-V} imply that, to solve the optimization problem~(\ref{eqn:bcrb-var-V}), one can find the solution $\tilde{{\bf R}}_{{\bf V}_k}^*$ of the semidefinite relaxation~(\ref{eqn:bcrb-var-V-sdr}), recover the optimal solution $\tilde{{\bf \Lambda}}^*$ of the dual problem as in~(\ref{eqn:opt-dual-variable-V}), and then compute the solution ${\bf R}_{{\bf V}_k}^{*}$ according to~(\ref{eqn:R_star-var-V}). This solution is globally optimal provided that condition~(\ref{eqn:spectral-var-V}) holds.

\section{Proposed Active Sensing Strategy} \label{sec:proposed}
In this section, we describe how the two variants of the BCRB optimization problem can be combined in designing an active sensing strategy for MIMO radar systems. To utilize either of these variants, it is crucial to compute the posterior distribution of the sensing parameters $\bd{\theta}$ given the observed measurements. Hence, we first start by showing how this posterior distribution can be efficiently tracked across the sensing stages, before we proceed to describe the proposed active sensing algorithm.

\subsection{Tracking of the Posterior Distribution}
Solving the BCRB optimization problems~(\ref{eqn:bcrb-corr-W}) and~(\ref{eqn:bcrb-corr-V}) requires the computation of the BFIM ${\bf J}_k({\bf R}_{{\bf V}_k},{\bf R}_{{\bf W}_k})$ given in equations~(\ref{eqn:fisher})--(\ref{eqn:fisher_prior}). The expectation in these equations is taken over the posterior distribution of $\bd{\theta}$ given the previous measurements ${\bf Y}_{1:k-1}$. To approximate this posterior distribution, one can consider a discretized grid of size $K$ for each of the sensing parameters of $\bd{\theta} = [\bd{\phi}^\sfT, \Re\{\bd{\alpha}^\sfT\}, \Im\{\bd{\alpha}^\sfT\}]^\sfT \in \bR^{3L}$ and track the evaluations of the posterior distribution of each sensing parameter over each grid point. However, this approach requires $K^{3L}$ posterior updates in each sensing stage, which is infeasible in practice, even when sensing a single target (i.e., $L=1$). Instead, this paper makes the observation that the conditional distribution of the fading coefficients $\bd{\alpha}$ given the AoA's $\bd{\phi}$ and the previous measurements ${\bf Y}_{1:k-1}$ is complex Gaussian, for which only the mean and the covariance matrix need to be tracked. This follows from the linearity of the measurement model~(\ref{eqn:model}) with respect to $\bd{\alpha}$. Hence, a discretized grid is needed only for the posterior distribution of the AoA's $\bd{\phi}$, thus requiring $K^L$ posterior updates, which can be done efficiently for up to $L=2$ sensing targets.

The mean and the covariance matrix of the conditional distribution of $\bd{\alpha}$ given $\bd{\phi}$ and the previous measurements ${\bf Y}_{1:k}$ can be tracked using Kalman filtering~\cite{Ghaddar2025ICC,Sohrabi2021}. In particular, let $\bd{\mu}_k^{(\bd{\alpha} \cond \bd{\phi})}$ and $\bd{\Sigma}_k^{(\bd{\alpha}\cond \bd{\phi})}$ denote the corresponding mean and covariance matrix respectively. Then, we can write
\begin{align} \label{eqn:kalman}
\bd{\mu}_k^{(\boldsymbol{\alpha} \cond \boldsymbol{\phi})} &= \bd{\mu}_{k-1}^{(\boldsymbol{\alpha} \cond \boldsymbol{\phi})} + \bd{\Sigma}_{k-1}^{(\boldsymbol{\alpha}{\bf y}\cond \boldsymbol{\phi})} 
\big(\bd{\Sigma}_{k-1}^{({\bf y}\cond \boldsymbol{\phi})}\big)^{-1}\left(\mathrm{vec}({\bf Y}_k) - \bd{\mu}_{k-1}^{({\bf y} \cond \boldsymbol{\phi})}\right), \nonumber\\
\bd{\Sigma}_k^{(\boldsymbol{\alpha} \cond \boldsymbol{\phi})} &= \bd{\Sigma}_{k-1}^{(\boldsymbol{\alpha} \cond \boldsymbol{\phi})} - \bd{\Sigma}_{k-1}^{(\boldsymbol{\alpha}{\bf y}\cond \boldsymbol{\phi})} 
\big(\bd{\Sigma}_{k-1}^{({\bf y}\cond \boldsymbol{\phi})}\big)^{-1}\big(\bd{\Sigma}_{k-1}^{(\boldsymbol{\alpha}{\bf y}\cond \boldsymbol{\phi})}\big)^\sfH,
\end{align}
where we have used
\begin{align} \label{eqn:mean_y}
\bd{\mu}_{k-1}^{({\bf y} \cond \boldsymbol{\phi})} &= {\bf C}_{\boldsymbol{\phi},k}\bd{\mu}_{k-1}^{(\boldsymbol{\alpha} \cond \boldsymbol{\phi})},\nonumber \\
\bd{\Sigma}_{k-1}^{(\boldsymbol{\alpha}{\bf y}\cond \boldsymbol{\phi})} &= \bd{\Sigma}_{k-1}^{(\boldsymbol{\alpha}\cond \boldsymbol{\phi})} {\bf C}_{\boldsymbol{\phi},k}^\sfH,  \\
\bd{\Sigma}_{k-1}^{({\bf y}\cond \boldsymbol{\phi})} &= {\bf C}_{\boldsymbol{\phi},k}\bd{\Sigma}_{k-1}^{(\boldsymbol{\alpha}\cond \boldsymbol{\phi})}{\bf C}_{\boldsymbol{\phi},k}^\sfH + {\bf W}_k^\sfH{\bf W}_k, \nonumber
\end{align}
and
\begin{equation}
{\bf C}_{\boldsymbol{\phi},k} = \left({\bf V}_k^\sfT \otimes {\bf W}_k^\sfH\right)\left(\overline{{\bf A}}_{\rm T}(\bd{\phi}) * {\bf A}_{\rm R}(\bd{\phi})\right),
\end{equation}
in which ${\bf A}_{\rm S}(\bd{\phi}) = \begin{bmatrix}
{\bf a}_{\rm S}(\phi_1) & \cdots & {\bf a}_{\rm S}(\phi_L)
\end{bmatrix}$ for ${\rm S} \in \{{\rm R}, {\rm T}\}$, $\otimes$ and $*$ denote the Kronecker product and column-wise Khatri--Rao product of two matrices respectively, and $\overline{{\bf A}}_{\rm T}(\bd{\phi})$ denotes the element-wise conjugate of ${\bf A}_{\rm T}(\bd{\phi})$. Since the fading coefficients are initially assumed to have a zero-mean complex Gaussian distribution with unit variance, the tracking  of the mean and the covariance matrix is initialized with $\bd{\mu}_0^{(\boldsymbol{\alpha} \cond \bd{\phi})} = {\bf 0}_{L\times 1}$ and $\bd{\Sigma}_0^{(\boldsymbol{\alpha} \cond \boldsymbol{\phi})} = {\bf I}_L$ for each $\boldsymbol{\phi}$.

It remains to track the posterior distribution of $\bd{\phi}$ given the previous measurements ${\bf Y}_{1:k-1}$. This can be done by a standard application of Bayes' rule. In particular, if $\pi_k^{(\boldsymbol{\phi})}$ denotes the posterior probability density function of $\boldsymbol{\phi}$ given the first $k$ measurements ${\bf Y}_{1:k}$, then $\pi_k^{(\boldsymbol{\phi})}$ can be computed recursively as follows:
\begin{equation} \label{eqn:posterior}
\pi_k^{(\boldsymbol{\phi})} = \frac{\pi_{k-1}^{(\boldsymbol{\phi})}f({\bf Y}_k \cond \boldsymbol{\phi}, {\bf Y}_{1:k-1})}{\int_{\boldsymbol{\phi}'}\pi_{k-1}^{(\boldsymbol{\phi}')}f({\bf Y}_k \cond \boldsymbol{\phi}', {\bf Y}_{1:k-1}) \mathrm{d} \boldsymbol{\phi}'},
\end{equation}
where the denominator in~(\ref{eqn:posterior}) is a normalization term that does not depend on $\boldsymbol{\phi}$. Note that $f({\bf Y}_k \cond \boldsymbol{\phi}, {\bf Y}_{1:k-1})$ is $\mathcal{CN}\left(\bd{\mu}_{k-1}^{({\bf y} \cond \boldsymbol{\phi})}, \bd{\Sigma}_{k-1}^{({\bf y}\cond \boldsymbol{\phi})}\right)$, where $\bd{\mu}_{k-1}^{({\bf y} \cond \boldsymbol{\phi})}$ and $\bd{\Sigma}_{k-1}^{({\bf y}\cond \boldsymbol{\phi})}$ are as computed in~(\ref{eqn:mean_y}).

\subsection{Proposed Active Sensing Strategy}
The proposed active sensing algorithm uses a total of $T$ sensing stages to estimate the desired sensing parameters $\bd{\theta}$. The algorithm takes as input a parameter $T_{\mathrm{explore}}$, such that $0\leq T_{\mathrm{explore}} \leq T$, which indicates the number of sensing stages that are used for exploration. In particular, in the first $T_{\mathrm{explore}}$ sensing stages, the beamforming matrices are designed by solving the exploration-centric variant $\mathsf{(P_2)}$ of the BCRB optimization problem. As described in Section~\ref{sec:proposed-P2}, this variant can be approached via the alternating optimization of~(\ref{eqn:bcrb-corr-W-sdr}) and~(\ref{eqn:bcrb-var-V-sdr}), which correspond to the semidefinite relaxations of the two problems~(\ref{eqn:bcrb-corr-W}) and~(\ref{eqn:bcrb-var-V}). To get the low-rank solutions, the optimal dual variables are computed according to~(\ref{eqn:opt-dual-variable-W}) and~(\ref{eqn:opt-dual-variable-V}), and the beamforming matrices are set according to~(\ref{eqn:W_star}) and~(\ref{eqn:V_star-var}) respectively. These solutions are globally optimal to each sub-problem provided that the sufficient conditions~(\ref{eqn:spectral-W}) and~(\ref{eqn:spectral-var-V}) hold.


\begin{algorithm}[t]
\caption{Proposed Active Sensing Strategy}
\label{alg:active-sensing}
\begin{algorithmic}[1] 
\REQUIRE Inputs $(N_R, N_T, M_R, M_T, T, T_{\mathrm{explore}}, I_\mathrm{max})$
\STATE Initialize $\pi_0^{(\boldsymbol{\phi})}$, $\bd{\mu}_0^{(\bd{\alpha} \cond \bd{\phi})}$ and $\bd{\Sigma}_0^{(\bd{\alpha} \cond \bd{\phi})}$ for each $\bd{\phi}$
\FOR{$k = 1,\ldots,T$}
\STATE Initialize ${\bf V}_k^{(0)}$ to a random beamforming matrix
\FOR{$i = 1,\,\ldots,\, I_\mathrm{max}$}
\STATE Solve~(\ref{eqn:bcrb-corr-W-sdr}) using $({\bf V}_k^{(i-1)}, \pi_{k-1}^{(\bd{\phi})}, \bd{\mu}_{k-1}^{(\bd{\alpha} \cond \bd{\phi})}, \bd{\Sigma}_{k-1}^{(\bd{\alpha} \cond \bd{\phi})})$
\STATE Compute ${\bf \Lambda}^*$ using~(\ref{eqn:opt-dual-variable-W})
\STATE Set ${\bf W}_k^{(i)}$ according to~(\ref{eqn:W_star})
\IF{$k \leq T_{\mathrm{explore}}$}
\STATE Solve~(\ref{eqn:bcrb-var-V-sdr}) using $({\bf W}_k^{(i)}, \pi_{k-1}^{(\bd{\phi})}, \bd{\mu}_{k-1}^{(\bd{\alpha} \cond \bd{\phi})}, \bd{\Sigma}_{k-1}^{(\bd{\alpha} \cond \bd{\phi})})$
\STATE Compute $\tilde{{\bf \Lambda}}^{*}$ using~(\ref{eqn:opt-dual-variable-V})
\STATE Set ${\bf V}_k^{(i)}$ according to~(\ref{eqn:V_star-var})
\ELSE
\STATE Solve~(\ref{eqn:bcrb-corr-V-sdr}) using $({\bf W}_k^{(i)}, \pi_{k-1}^{(\bd{\phi})}, \bd{\mu}_{k-1}^{(\bd{\alpha} \cond \bd{\phi})}, \bd{\Sigma}_{k-1}^{(\bd{\alpha} \cond \bd{\phi})})$
\STATE Compute $\tilde{{\bf \Lambda}}^{*}$ using~(\ref{eqn:opt-dual-variable-V})
\STATE Set ${\bf V}_k^{(i)}$ according to~(\ref{eqn:V_star})
\ENDIF
\ENDFOR
\STATE Make measurement ${\bf Y}_k$
\STATE Update posterior distribution $\pi_k^{(\bd{\phi})}$ according to~(\ref{eqn:posterior})
\STATE Update $\bd{\mu}_k^{(\bd{\alpha} \cond \bd{\phi})}$ and $\bd{\Sigma}_k^{(\bd{\alpha} \cond \bd{\phi})}$ according to~(\ref{eqn:kalman})
\ENDFOR
\RETURN $\bd{\hat\theta} = \E\left[\bd{\theta} \cond {\bf Y}_{1:T}\right]$.
\end{algorithmic}
\end{algorithm}

In the remaining $T - T_{\mathrm{explore}}$ sensing stages, the beamformers are designed by solving the exploitation-centric variant $\mathsf{(P_1)}$ of the BCRB optimization problem. As described in Section~\ref{sec:proposed-P1}, this can be done via the alternating optimization of~(\ref{eqn:bcrb-corr-W-sdr}) and~(\ref{eqn:bcrb-corr-V-sdr}), which are the semidefinite relaxations of the two problems~(\ref{eqn:bcrb-corr-W}) and~(\ref{eqn:bcrb-corr-V}). Then, the optimal dual variables are computed according to~(\ref{eqn:opt-dual-variable-W}) and~(\ref{eqn:opt-dual-variable-V}), and the beamforming matrices are set according to~(\ref{eqn:W_star}) and~(\ref{eqn:V_star}) respectively. This choice of beamforming matrices is globally optimal provided that the corresponding sufficient conditions~(\ref{eqn:spectral-W}) and~(\ref{eqn:spectral-V}) hold. For both variants of the optimization, if the sufficient conditions do not hold, the beamformers are set to random vectors in the optimal eigenspace, as discussed in Sections~\ref{sec:proposed-P1} and~\ref{sec:proposed-P2}. The proposed active sensing strategy is summarized in Algorithm~\ref{alg:active-sensing}.

\section{Numerical Experiments} \label{sec:simulations}
In this section, we evaluate the performance of the proposed active sensing strategy as compared to several existing beamforming strategies in the literature. We consider the following baseline schemes:

\textit{-- Random orthogonal beamforming:} In this baseline, the transmit and receive beamforming matrices for all $T$ sensing stages are generated uniformly at random over the set of matrices with orthogonal columns, with the transmit beamformers being scaled to satisfy the power constraint. As in the proposed sensing strategy, the posterior distribution of the sensing parameters is tracked across the sensing stages, and their final estimates are computed via the MMSE estimator~(\ref{eqn:estimator}). 

\textit{-- Steering beamformers at the MMSE estimate of sensing parameters:} In this baseline scheme, the transmit and receive beamformers are steered based on the MMSE estimates of the sensing parameters. In particular, if $\bd{\hat{\theta}}_k = \E\left[\bd{\theta} \cond {\bf Y}_{1:k-1}\right]$ is the MMSE estimate of the sensing parameters in the $k$-th sensing stage, the scheme computes an estimate of the target response matrix ${\bf H}(\bd{\hat{\theta}}_k)$, sets the receive beamforming matrix ${\bf W}_k$ to its left singular vectors corresponding to the largest $M_R$ singular values, and sets the transmit beamforming matrix ${\bf V}_k$ to the right singular vector corresponding to the largest singular value, scaled appropriately to satisfy the power constraint. Note that this choice of the beamforming matrices maximizes the beamforming gain if the target response matrix were to be exactly equal to ${\bf H}(\bd{\hat{\theta}}_k)$.


\textit{-- Deep learning solutions:} Several deep-learning-based solutions have been proposed in the literature in order to adaptively design the beamforming matrices for radar sensing. In particular, the approach taken in~\cite{Sohrabi2022} employs a long short-term memory (LSTM) to model the temporal correlation between the received symbols. At each sensing stage, a feedforward neural network is used to map the LSTM hidden state to the set of beamformers that are used in the next sensing stage. The model is trained \emph{end-to-end} for a predetermined number of sensing stages. We note that, while the LSTM model in~\cite{Sohrabi2022} was used to design the receive beamformers only (i.e., a single-antenna transmitter was assumed), a similar approach can be followed to design both the transmit and receive beamformers.


In the following, we conduct several experiments to compare these baseline schemes with the proposed active sensing strategy. For all the experiments, the AoAs are assumed to be uniformly distributed over $[\phi_{\mathrm{min}}, \phi_{\mathrm{max}}] = [-\frac{\pi}{3}, \frac{\pi}{3}]$, and a grid set of size $K=1024$ is used for the computation of the posterior distribution. The goal in these experiments is to estimate the AoAs only (not the fading coefficients), and hence, the weighting matrix is set to ${\bf Q} = \frac{1}{L}\mathrm{diag}\left(\begin{bmatrix}
{\bf 1}_{1\times L}, {\bf 0}_{1\times 2L}
\end{bmatrix}\right)$. In this case, the performance of the beamforming strategies is measured by the
\begin{equation}
\mathrm{WMSE} \triangleq \frac{1}{L}\E\left[\|\bd{\phi} - \bd{\hat \phi}\|^2\right],
\end{equation}
where $\bd{\phi} = (\phi_1, \ldots, \phi_L)$ are the AoAs.

\subsection{Single Target -- Optimization of Receive Beamformers}
In the first experiment, we compare the performance of the beamforming strategies for a MIMO radar that is equipped with a single transmit antenna and $N_R = 8$ receive antennas. The radar aims to sense a single target (i.e., $L=1$) in a total of $T=8$ sensing stages and uses $M_R = 4$ receive beamformers in each sensing stage. We note that since the radar is equipped with a single transmit antenna (i.e., $N_T=M_T=1$), the two variants of the BCRB optimization problems $\mathsf{(P_1)}$ and $\mathsf{(P_2)}$ are equivalent, and hence, the parameter $T_{\mathrm{explore}}$ does not affect the optimization in this case. Moreover, since the optimization is only over receive beamformers, we can use a single iteration of optimization (i.e., $I_{\mathrm{max}} = 1$).

Fig.~\ref{fig:mse_1} shows the plot of the WMSE for the different beamforming strategies as a function of the received $\mathrm{SNR} \triangleq \abs{\alpha}^2 P$. The proposed active sensing strategy achieves better AoA estimation performance compared to the baseline schemes, especially in the high SNR regime. Indeed, the BCRB metric provides a tighter bound on the WMSE at high SNRs compared to the low SNR regime.

To further compare the performance of the proposed sensing strategy with the LSTM approach, Fig.~\ref{fig:mse_vs_T} shows the performance of the strategies across the sensing stages. Two cases of the LSTM design are considered: 1) the model is trained for the exact number of sensing stages that are used in testing (i.e., a new model needs to be trained whenever the number of sensing stages changes), and 2) a single model is trained for $T=14$ sensing stages and tested for different numbers of sensing stages (i.e., there is a mismatch between training and testing). We can see that the perfectly matched LSTM approach 
improves upon BCRB optimization in the low SNR regime and when the number of sensing stages is small. Nonetheless, this design requires training a new model for varying channel conditions and sensing times, which is infeasible for practical deployment. On the other hand, when a single model is trained, 
the performance of the LSTM approach degrades significantly compared to BCRB optimization, which highlights the drawback of the learning-based solution for active sensing.



\begin{figure}[tbp]
\centering
\hspace{-1em}
\includegraphics[width=0.9\columnwidth]{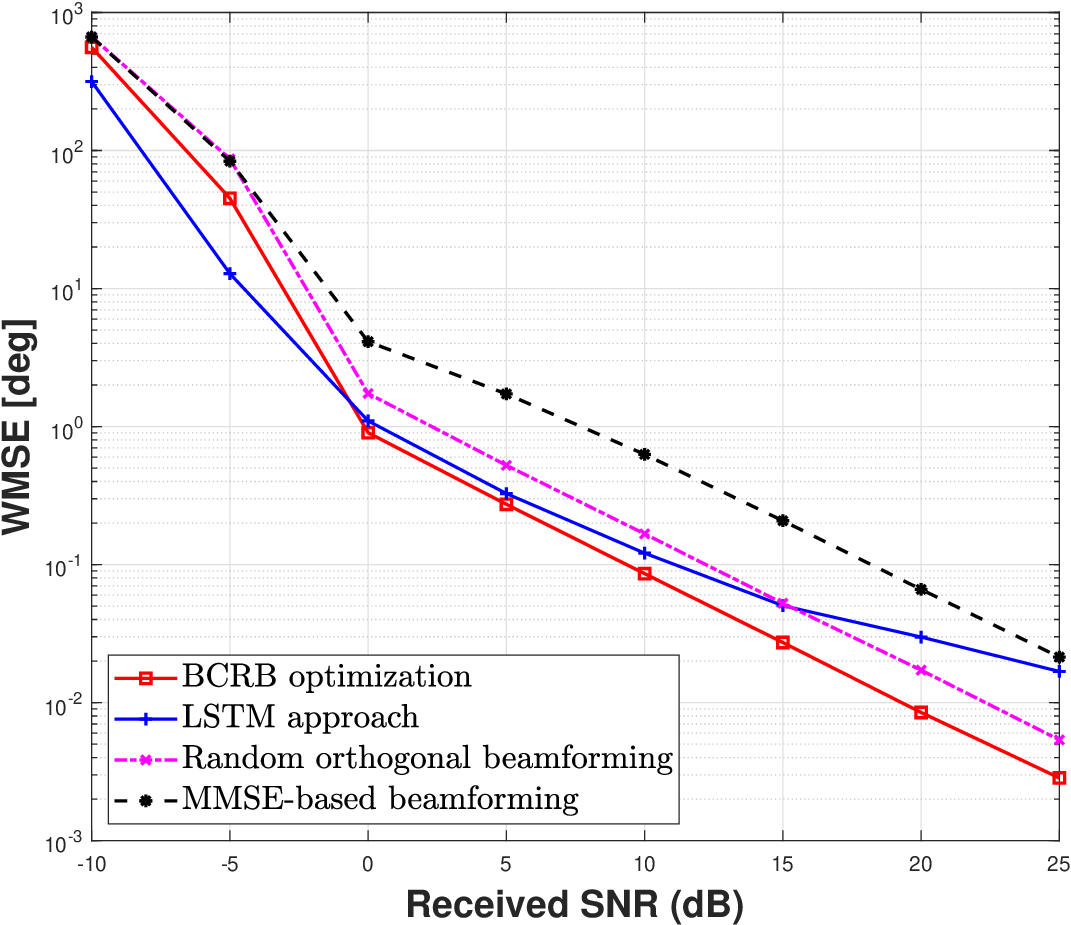}
\caption{Average WMSE versus received SNR for a MIMO sensing system with $N_T=1$ transmit antenna and $N_R=8$ receive antennas that aims to sense a single target using $M_R = 4$ receive beamformers and $T = 8$ sensing stages. Here, the two variants of the BCRB optimization problem are equivalent.}
\label{fig:mse_1}
\end{figure}

\begin{figure}[tbp]
\centering
\hspace{-1em}
\includegraphics[width=0.9\columnwidth]{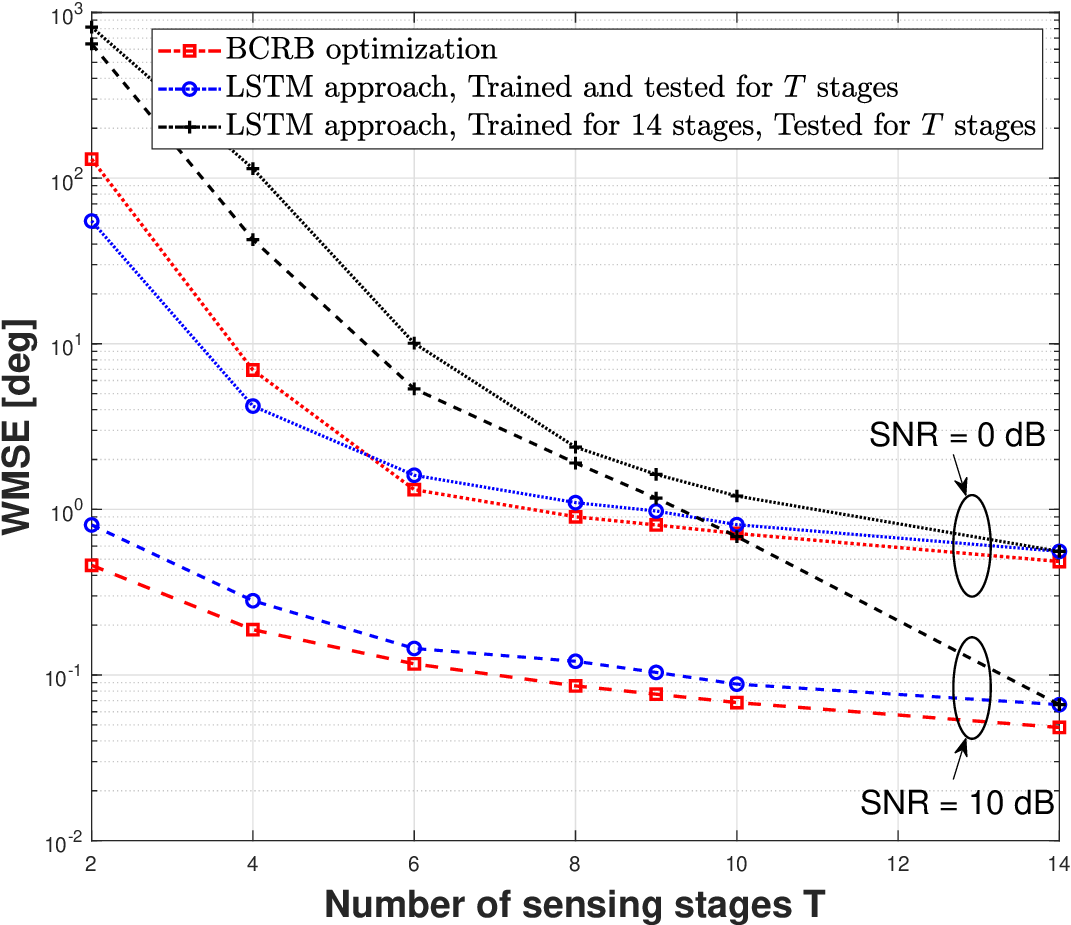}
\caption{Average WMSE versus number of sensing stages for a MIMO sensing system with $N_T=1$ transmit antenna and $N_R=8$ receive antennas that aims to sense a single target using $M_R = 4$ receive beamformers, for two received $\mathrm{SNR}$ levels: $\mathrm{SNR} = 0$ dB, and $\mathrm{SNR} = 10$ dB. Here, the two variants of the BCRB optimization problem are equivalent.}
\label{fig:mse_vs_T}
\end{figure}

\subsection{Single Target -- Optimization of Transmit Beamformers}
In a second experiment, the beamforming strategies are compared for a MIMO radar that is equipped with $N_T = 8$ transmit antennas and a single receive antenna. The radar aims to sense a single target in $T=8$ sensing stages, and uses $M_T = 4$ transmit beamformers in each sensing stage. In this case, since $M_T > 1$, the optimization problems $\mathsf{(P_1)}$ and $\mathsf{(P_2)}$ are different, and the choice of the parameter $T_{\mathrm{explore}}$ determines the number of sensing stages in which orthogonal directions are \emph{explored} through transmit beamforming. We note that since the optimization is only over the transmit beamformers, a single iteration of optimization is used (i.e., $I_{\mathrm{max}} = 1$).

We compare three instances of the proposed sensing strategy: 1) $T_{\mathrm{explore}} = 0$, in which case the exploitation-centric optimization problem $\mathsf{(P_1)}$ is considered in all sensing stages, 2) $T_{\mathrm{explore}} = 4$, in which case the exploration-centric optimization $\mathsf{(P_2)}$ is optimized in the first 4 sensing stages and the exploitation-centric $\mathsf{(P_1)}$ is optimized in the remaining 4 sensing stages, and 3) $T_{\mathrm{explore}} = 8$, in which case $\mathsf{(P_2)}$ is optimized in all sensing stages. Fig.~\ref{fig:mse_2} shows the performance of the different beamforming strategies as a function of the received SNR.

Several observations can be made. First, the optimal exploration-exploitation tradeoff in the proposed 
BCRB optimization based active sensing strategy depends on the SNR. In the low-SNR regime, an exploration-only strategy shows the best sensing performance.
In the moderate SNR regime, exploration-followed-by-exploitation performs the best. 
In the high-SNR regime, an exploitation-only strategy has the superior performance, which is closely approached by an exploration-followed-by-exploitation strategy. 
Thus, exploring multiple orthogonal directions through solving the exploration-centric optimization problem $\mathsf{(P_2)}$, particularly in the earlier sensing stages, can improve the sensing performance in the low and moderate SNR regimes. 



\begin{figure}[tbp]
\centering
\hspace{-1em}
\includegraphics[width=0.9\columnwidth]{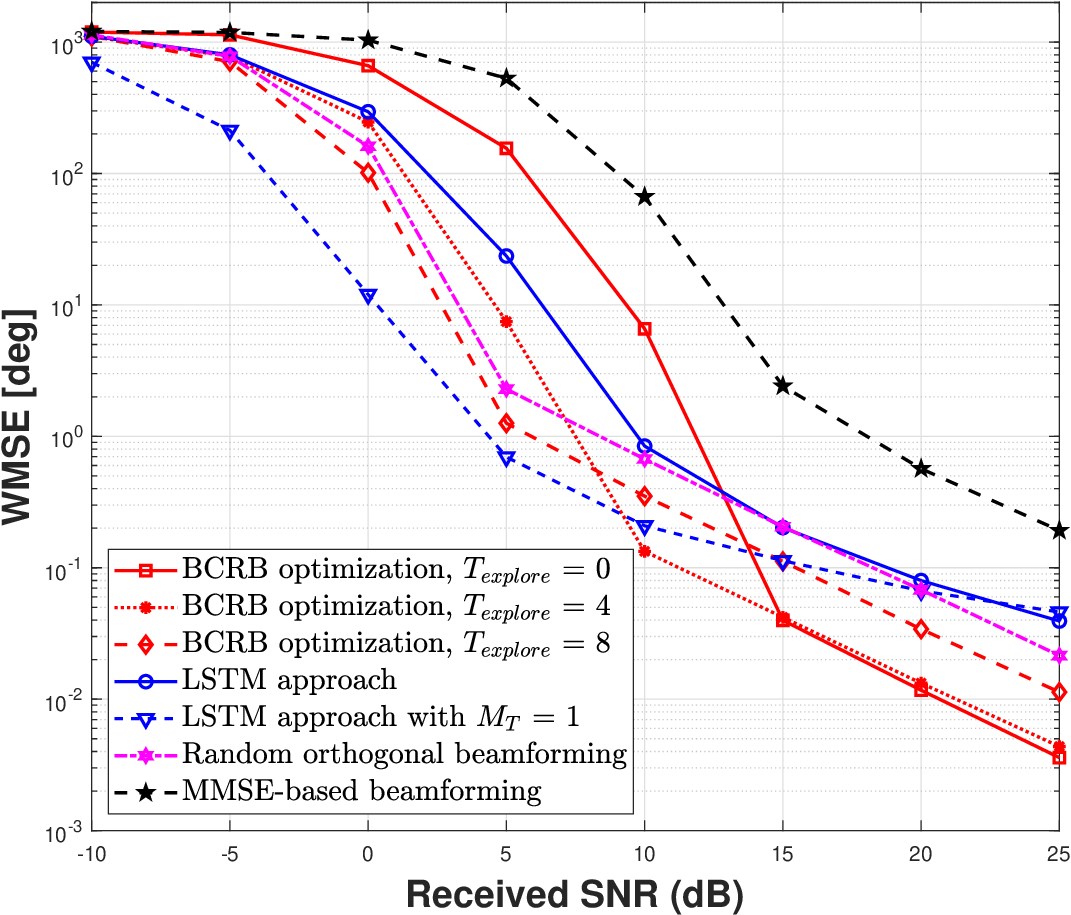}
\caption{Average WMSE versus received SNR for a MIMO sensing system with $N_T=8$ transmit antennas and $N_R=1$ receive antenna that aims to sense a single target using $M_T = 4$ transmit beamformers and $T = 8$ sensing stages. The parameter $T_{\mathrm{explore}}$ determines the number of sensing stages used for exploration in BCRB optimization.}
\label{fig:mse_2}
\end{figure}

Second, designing multiple orthogonal directions by solving $\mathsf{(P_2)}$ is not the only way to encourage exploration. In fact, a learning-based LSTM model that uses a single transmit beamformer achieves better sensing performance at low SNR compared to both the BCRB-based optimization with orthogonal transmit directions and an LSTM model that designs $M_T=4$ transmit beamformers. 
The LSTM model outperforms BCRB optimization because the BCRB tends to be a looser bound on the WMSE at low SNRs, thus an optimal BCRB metric does not necessarily translate to optimal WMSE performance. 
Furthermore, in contrast to BCRB optimization, the LSTM model is trained end-to-end, and hence it can learn to explore spatial directions in earlier sensing stages, even with a single beamformer. In other words, using additional beamformers (with less power allocated to each beamformer) does not necessarily help the exploration in an end-to-end trained LSTM model. On the other hand, the analytic approach of optimizing the BCRB is naturally greedy and exploitative (i.e., each sensing stage focuses on optimizing a lower bound of the immediate WMSE), and hence exploration needs to be imposed by using multiple beamformers in the optimization formulation $\mathsf{(P_2)}$. 

These observations highlight the gain from considering an exploration phase in multi-stage active sensing, particularly in the low and moderate SNR regimes, and the different ways that such an exploration can be conducted within the active sensing strategy.

\subsection{Single Target -- Alternating Optimization}
To test the alternating optimization algorithm for designing the transmit and receive beamformers, we simulate the performance of the proposed active sensing strategy for a MIMO radar that is equipped with $N_T = 4$ transmit antennas and $N_R = 4$ receive antennas, and aims to sense a single target in $T=8$ sensing stages using a single transmit beamformer ($M_T=1$) and a single receive beamformer ($M_R=1$). We run the alternating optimization algorithm for different values of the number of iterations: $I_{\mathrm{max}} = 1$, $I_{\mathrm{max}} = 2$, and $I_{\mathrm{max}} = 3$. We note that since $M_T=1$, the two optimization problems $\mathsf{(P_1)}$ and $\mathsf{(P_2)}$ are equivalent, and hence, the parameter $T_{\mathrm{explore}}$ does not affect the optimization. Fig.~\ref{fig:mse_3} shows the performance as a function of the received SNR. We can see that the sensing performance improves as the number of iterations is increased. This shows the gain from alternating optimization and highlights the fact that the transmit and receive beamformers should be designed jointly for optimal sensing performance.

\begin{figure}[tbp]
\centering
\hspace{-1em}
\includegraphics[width=0.9\columnwidth]{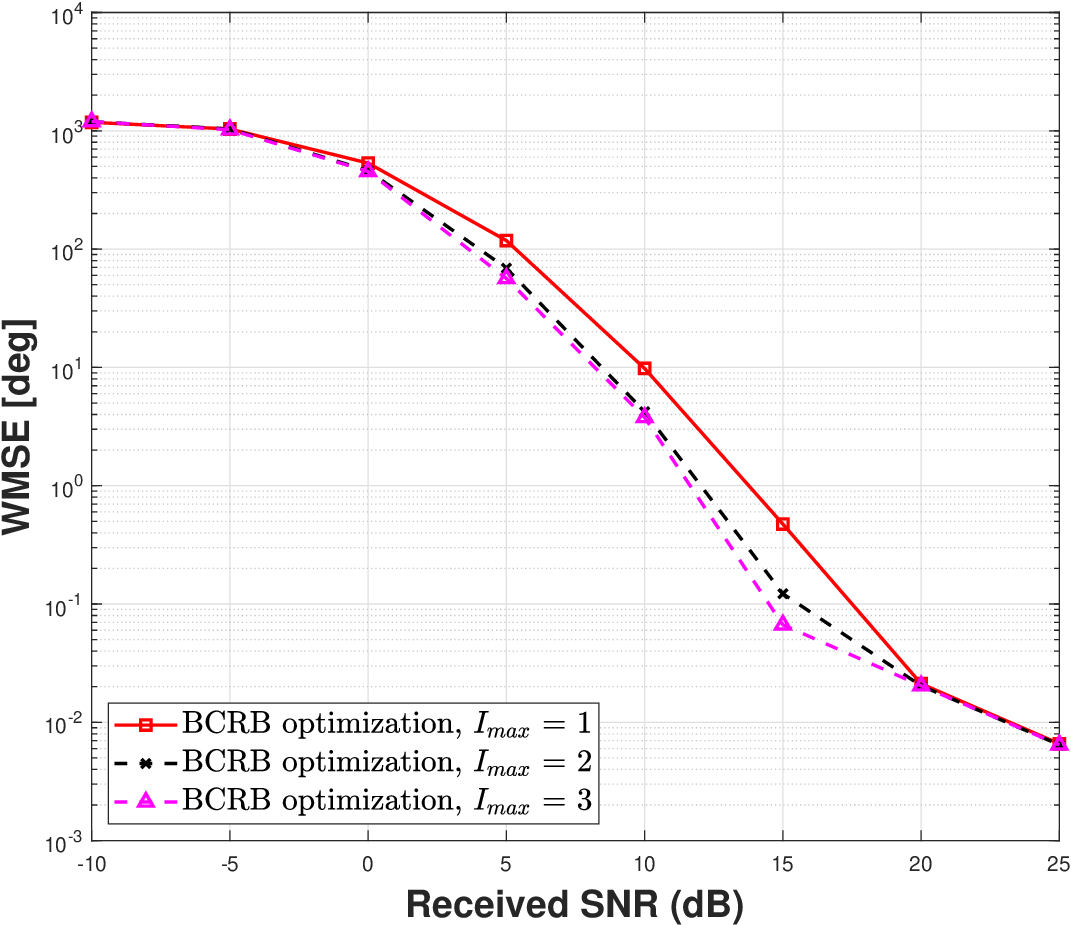}
\caption{Average WMSE versus received SNR for a MIMO sensing system with $N_T=4$ transmit antennas and $N_R=4$ receive antennas that aims to sense a single target using a single transmit beamformer, a single receive beamformer and $T = 8$ sensing stages. The parameter $I_{\mathrm{max}}$ determines the number of iterations used in the alternating optimization algorithm.}
\label{fig:mse_3}
\end{figure}

\subsection{Multiple Targets}
To test the performance of the proposed active sensing strategy when there are multiple targets, we consider a MIMO radar that is equipped with a single transmit antenna and $N_R = 32$ receive antennas, and aims to sense two targets (i.e., $L=2$) in $T=5$ sensing stages. The radar uses $M_R=3$ receive beamformers in each sensing stage. Fig.~\ref{fig:beampattern-posterior} plots the beamforming pattern as well as the AoA posterior distribution across the sensing stages, when the true AoAs are $\phi_1 = -30^{\circ}$ and $\phi_2 = 30^{\circ}$. We can see that as more measurements are made, the marginal posterior distributions converge to highly concentrated distributions with peaks at the true AoAs. Moreover, the designed beamformers gradually focus the energy in the direction of the true AoAs. Notice that the designed beampattern sometimes suppresses the interference from one of the targets, thus facilitating the estimation of the sensing parameters corresponding to the other target. These results demonstrate that the proposed active sensing strategy can be used to sense multiple targets. We note that this requires an additional computational complexity in order to track the two-dimensional posterior distribution of $\phi_1$ and $\phi_2$.

\begin{figure*}[t]
\centering
\begin{subfigure}{.65\columnwidth}
	\centering
	\hspace*{-1em}
	\includegraphics[scale=0.38]{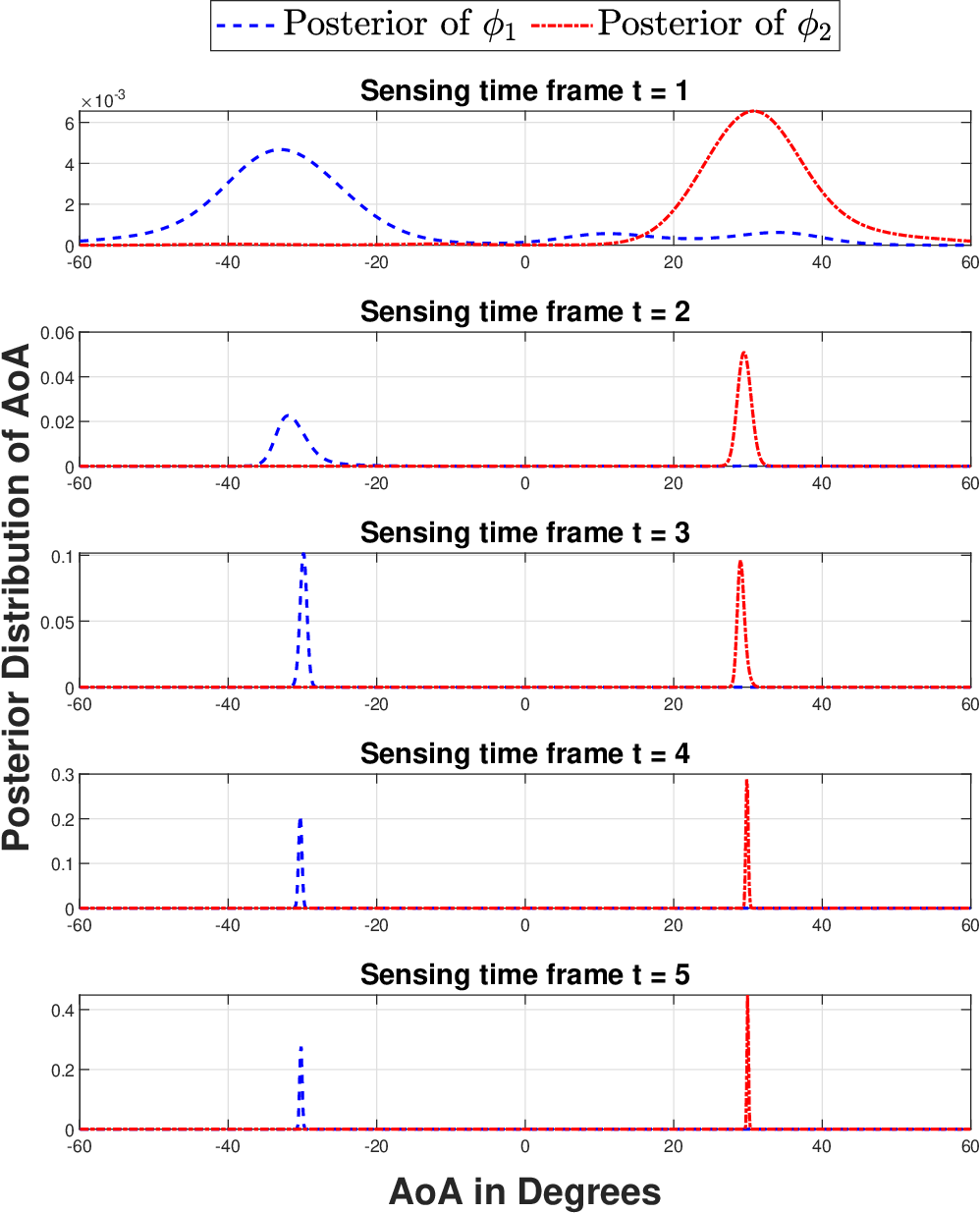}
	\caption{}
	\label{fig:posterior}
\end{subfigure}%
\hspace{0.2\columnwidth}%
\begin{subfigure}{.65\columnwidth}
	\centering
	\hspace*{-1em}
	\includegraphics[scale=0.38]{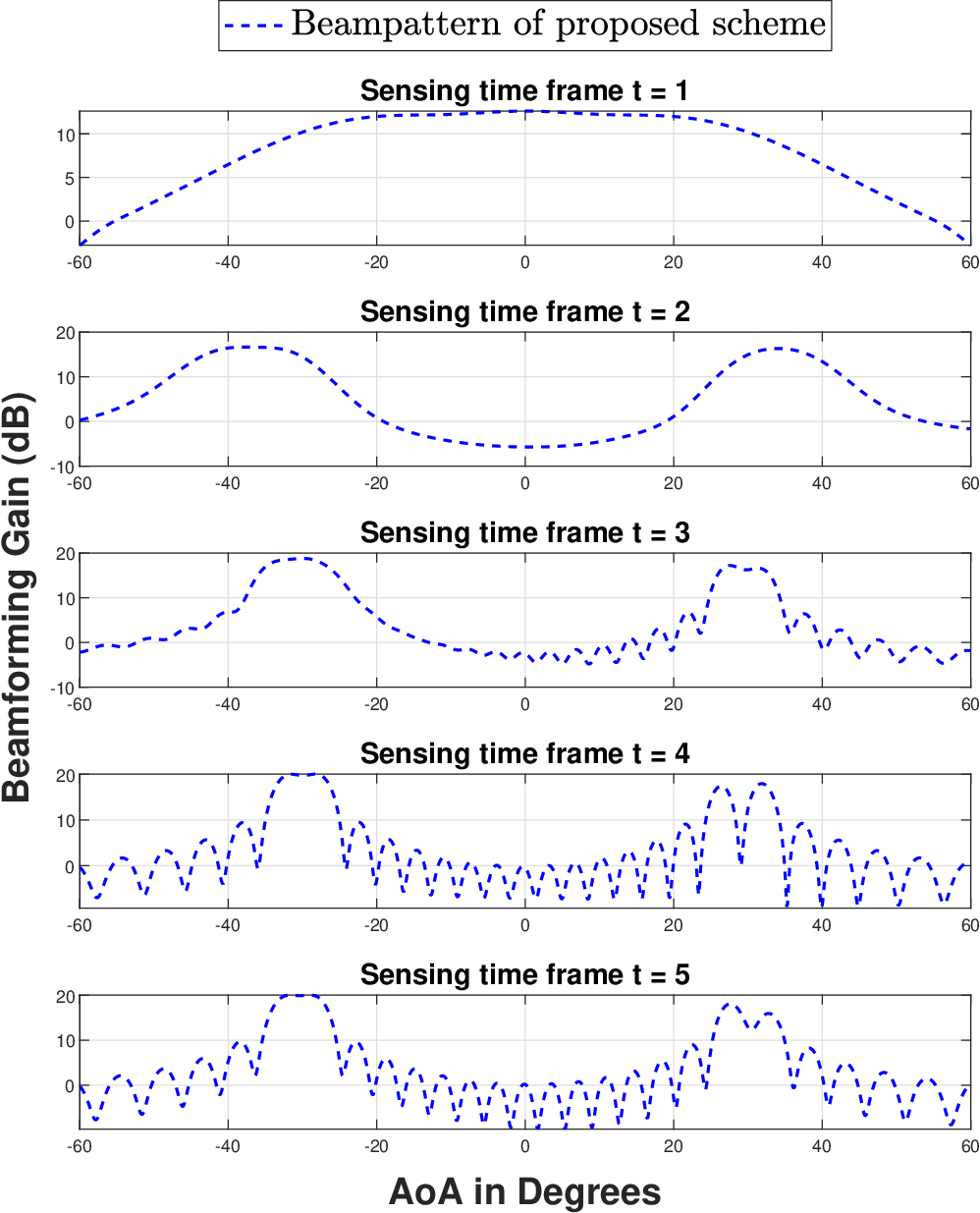}
	\caption{}
	\label{fig:beampattern}
\end{subfigure}%
\caption{(a) Marginal posterior distributions of the AoAs, and (b) beamforming patterns of the beamformers when using the proposed active sensing strategy for a MIMO sensing system model that aims to sense $L = 2$ targets when equipped with a single transmit antenna and $N_R=32$ receive antennas. The system uses $M_R=3$ receive beamformers, $T = 5$ sensing stages and a power level $P = 5$ dB. The true AoAs are $\phi_1 = -30^{\circ}$ and $\phi_2 = 30^{\circ}$, and the fading coefficients have $\abs{\alpha_1}^2 = 0.32$ and $\abs{\alpha_2}^2 = 1.25$.}
\label{fig:beampattern-posterior}
\end{figure*}


\section{Conclusion} \label{sec:conclusion}
This paper presents an active sensing framework for designing the transmit and receive beamformers of a MIMO radar system using BCRB optimization. It is shown that the sensing performance of a BCRB-based beamforming strategy is governed by an exploration-exploitation tradeoff, in which the radar has to balance between the use of many transmit beamformers in order to explore the sensing environment and the use of a few transmit beamformers in order to exploit the accumulated knowledge about the targets. This work also provides new insights about BCRB optimization. In particular, it is shown that, despite its non-convexity, the BCRB optimization problem can be solved to global optimality provided that certain sufficient conditions (pertaining to the multiplicity of the eigenvalues of a specific direction matrix) hold. These conditions are derived in terms of the optimal solution of the dual problem, which can be efficiently computed. These theoretical insights provide a comprehensive understanding of BCRB-based MIMO radar sensing, particularly in the active setting.

{\appendices
\section{Derivation of BFIM} \label{appx:BFIM}
In this appendix, we derive equations~(\ref{eqn:fisher_data}) and~(\ref{eqn:fisher_prior}). Recall that the $(i,j)$-th entry of the BFIM corresponding to the data measurement model can be expressed as
\begin{equation} 
	\left[{\bf J}_k^{({\mathsf D})}({\bf V}_k,{\bf W}_k)\right]_{i,j} = -\E\left[\frac{\partial^2\log f({\bf Y}_k\cond \bd{\theta}, \bold{Y}_{1:k-1})}{\partial\theta_i \partial\theta_j}\Big| {\bf Y}_{1:k-1}\right].
\end{equation}
From the channel model~(\ref{eqn:model}), it is clear that, conditioned on $(\bd{\theta}, \bold{Y}_{1:k-1})$, the $m$-th column of ${\bf Y}_k$ is distributed as
\begin{equation}
	{\bf y}_{k,m} \cond \bd{\theta}, \bold{Y}_{1:k-1} \, \sim \, \mathcal{CN}\left({\bf W}_{k}^\sfH{\bf H}(\bd{\theta}){\bf v}_{k,m}, \, {\bf W}_{k}^\sfH{\bf W}_{k}\right),
\end{equation}
where ${\bf v}_{k,m}$ is the $m$-th column of ${\bf V}_k$. Hence,
\begin{align}
	&\log f({\bf Y}_k\cond \bd{\theta}, \bold{Y}_{1:k-1}) = \sum_{m=1}^{M_T} \log f({\bf y}_{k,m}\cond \bd{\theta}, \bold{Y}_{1:k-1}) \nonumber \\
	&= c - \sum_{m=1}^{M_T} \Big\|{\bf y}_{k,m} - {\bf W}_{k}^\sfH{\bf H}(\bd{\theta}){\bf v}_{k,m}\Big\|^2_{\left({\bf W}_{k}^\sfH{\bf W}_{k}\right)^{-1}},
\end{align}
where we have used the shorthand notation $\|{\bf x}\|_{{\bf A}}^2 \triangleq {\bf x}^\sfH {\bf A} {\bf x}$, and the constant $c \triangleq -M_R\log \pi - \log\det({\bf W}_{k}^\sfH{\bf W}_{k})$ is independent of $\bd{\theta}$. Thus, the first and second derivatives of the log-density can be expressed as in~(\ref{eqn:logdensity-derivative1}) and~(\ref{eqn:logdensity-derivative2}) at the bottom of the page, in which we have used $\dot{{\bf H}}_i(\bd{\theta})$ and $\ddot{{\bf H}}_{ij}(\bd{\theta})$ as defined in equations~(\ref{eqn:channel-derivative-1}) and~(\ref{eqn:channel-derivative-2}) respectively. By replacing ${\bf y}_{k,m} = {\bf W}_{k}^{\mathsf H}{\bf H}(\bd{\theta}){\bf v}_{k,m} + {\bf W}_k^{\mathsf H}{\bf z}_{k,m}$ in~(\ref{eqn:logdensity-derivative2}) and taking the expectation, we get that the $(i,j)$-th entry of the BFIM corresponding to the data measurement model can be expressed as in~(\ref{eqn:fisher_data}). As a side note, we remark that the BFIM corresponding to the data measurement model can be also expressed in matrix form as
\begin{align}
	&{\bf J}_k^{(\mathsf{D})}({\bf V}_k,{\bf W}_k) \nonumber \\
	&= \sum_{m=1}^{M_T}2\Re\left\{\E\left[\dot{{\bf G}}_{k,m}^\sfH(\bd{\theta}) {\bf R}_{{\bf W}_k} \dot{{\bf G}}_{k,m}(\bd{\theta}) \,\big|\, {\bf Y}_{1:k-1} \right] \right\}, \label{eqn:fisher_data-W-1}
\end{align}
where $\dot{{\bf G}}_{k,m}(\bd{\theta}) = \begin{bmatrix}
	\dot{{\bf H}}_{1}(\bd{\theta}){\bf v}_{k,m} & \cdots & \dot{{\bf H}}_{3L}(\bd{\theta}){\bf v}_{k,m}
\end{bmatrix}$, ${\bf v}_{k,m}$ is the $m$-th beamformer of ${\bf V}_k$, and ${\bf R}_{{\bf W}_k}$ is given in~(\ref{eqn:R_W}). That is, if we evaluate the $(i,j)$-th entry of~(\ref{eqn:fisher_data-W-1}), we get back the expression given in~(\ref{eqn:fisher_data}). This alternative expression of the BFIM will be useful in the proof of Theorem~\ref{thm:dual-W}.
\begin{figure*}[b]
	\hrulefill
	\begin{align}
		\frac{\partial\log f({\bf Y}_k\cond \bd{\theta}, \bold{Y}_{1:k-1})}{\partial\theta_i} &= \sum_{m=1}^{M_T}\Big( {\bf y}_{k,m}^{\mathsf H}\left({\bf W}_{k}^{\mathsf H}{\bf W}_{k}\right)^{-1}{\bf W}_{k}^{\mathsf H}\dot{{\bf H}}_i(\bd{\theta}){\bf v}_{k,m} + {\bf v}_{k,m}^{\mathsf H}\dot{{\bf H}}_i^{\mathsf H}(\bd{\theta}){\bf W}_k\left({\bf W}_{k}^{\mathsf H}{\bf W}_{k}\right)^{-1}{\bf y}_{k,m}  \label{eqn:logdensity-derivative1} \\
		&-{\bf v}_{k,m}^{\mathsf H}\dot{{\bf H}}_i^{\mathsf H}(\bd{\theta}){\bf W}_k\left({\bf W}_{k}^{\mathsf H}{\bf W}_{k}\right)^{-1}{\bf W}_{k}^{\mathsf H}{\bf H}(\bd{\theta}){\bf v}_{k,m} - {\bf v}_{k,m}^{\mathsf H}{\bf H}^{\mathsf H}(\bd{\theta}){\bf W}_k\left({\bf W}_{k}^{\mathsf H}{\bf W}_{k}\right)^{-1}{\bf W}_{k}^{\mathsf H}\dot{{\bf H}}_i(\bd{\theta}){\bf v}_{k,m}\Big) \nonumber
	\end{align}
	\begin{align}
		\frac{\partial^2\log f({\bf Y}_k\cond \bd{\theta}, \bold{Y}_{1:k-1})}{\partial\theta_i\partial \theta_j} &= \sum_{m=1}^{M_T}\Big( {\bf y}_{k,m}^{\mathsf H}\left({\bf W}_{k}^{\mathsf H}{\bf W}_{k}\right)^{-1}{\bf W}_{k}^{\mathsf H}\ddot{{\bf H}}_{ij}(\bd{\theta}){\bf v}_{k,m} + {\bf v}_{k,m}^{\mathsf H}\ddot{{\bf H}}_{ij}^{\mathsf H}(\bd{\theta}){\bf W}_k\left({\bf W}_{k}^{\mathsf H}{\bf W}_{k}\right)^{-1}{\bf y}_{k,m} \nonumber \\
		&-{\bf v}_{k,m}^{\mathsf H}\ddot{{\bf H}}_{ij}^{\mathsf H}(\bd{\theta}){\bf W}_k\left({\bf W}_{k}^{\mathsf H}{\bf W}_{k}\right)^{-1}{\bf W}_{k}^{\mathsf H}{\bf H}(\bd{\theta}){\bf v}_{k,m} - {\bf v}_{k,m}^{\mathsf H}\dot{{\bf H}}_i^{\mathsf H}(\bd{\theta}){\bf W}_k\left({\bf W}_{k}^{\mathsf H}{\bf W}_{k}\right)^{-1}{\bf W}_{k}^{\mathsf H}\dot{{\bf H}}_j(\bd{\theta}){\bf v}_{k,m}  \label{eqn:logdensity-derivative2} \\
		&-{\bf v}_{k,m}^{\mathsf H}\dot{{\bf H}}_j^{\mathsf H}(\bd{\theta}){\bf W}_k\left({\bf W}_{k}^{\mathsf H}{\bf W}_{k}\right)^{-1}{\bf W}_{k}^{\mathsf H}\dot{{\bf H}}_i(\bd{\theta}){\bf v}_{k,m} - {\bf v}_{k,m}^{\mathsf H}{\bf H}^{\mathsf H}(\bd{\theta}){\bf W}_k\left({\bf W}_{k}^{\mathsf H}{\bf W}_{k}\right)^{-1}{\bf W}_{k}^{\mathsf H}\ddot{{\bf H}}_{ij}(\bd{\theta}){\bf v}_{k,m}\Big) \nonumber
	\end{align}
\end{figure*}

Next, we derive the BFIM corresponding to the prior distribution, ${\bf J}_{k-1}^{({\mathsf P})}$. Recall that the $(i,j)$-th entry of this matrix can be expressed as
\begin{equation}
	\left[{\bf J}_{k-1}^{({\mathsf P})}\right]_{i,j} = -\E\left[\frac{\partial^2\log f(\bd{\theta}\cond  \bold{Y}_{1:k-1})}{\partial\theta_i \partial\theta_j}\Big| {\bf Y}_{1:k-1}\right].
\end{equation}
Using Bayes' rule, the posterior distribution of $\bd{\theta}$ given the previous measurements can be written as
\begin{equation}
	f(\bd{\theta}\cond  \bold{Y}_{1:k-1}) = \frac{f(\bd{\theta})\prod\limits_{\ell=1}^{k-1}f({\bf Y}_\ell \cond {\bf Y}_{1:\ell-1}, \bd{\theta})}{f({\bf Y}_{1:k-1})}.
\end{equation}
Therefore, we have
\begin{align}
	\left[{\bf J}_{k-1}^{({\mathsf P})}\right]_{i,j} = &-\sum_{\ell=1}^{k-1}\E\left[\frac{\partial^2\log f({\bf Y}_\ell\cond \bd{\theta}, \bold{Y}_{1:\ell-1})}{\partial\theta_i \partial\theta_j}\Big| {\bf Y}_{1:k-1}\right] \nonumber \\
	&- \E\left[\frac{\partial^2\log f(\bd{\theta})}{\partial\theta_i \partial\theta_j}\Big| {\bf Y}_{1:k-1}\right] \label{eqn:fisher_prior_simplified}
\end{align}
The expectation in the summand of~(\ref{eqn:fisher_prior_simplified}) can be computed by replacing ${\bf Y}_k$ in~(\ref{eqn:logdensity-derivative2})  by ${\bf Y}_\ell$ and taking the expectation. Note that the conditioning in the expectation is on all previous measurements ${\bf Y}_{1:k-1}$, and thus, ${\bf Y}_\ell$, for $\ell = 1,\ldots, k-1$, can be treated as a constant inside such an expectation. Using this observation along with simple algebraic steps, it can be shown that the BFIM corresponding to the prior distribution can be expressed as in~(\ref{eqn:fisher_prior}), which completes the derivation.

	\section{Proof of Theorem~\ref{thm:dual-W}} \label{appx:dual-proof-W}
	The proof of Theorem~\ref{thm:dual-W} is inspired by the proof of~\cite[Theorem 1]{Kareem2025_2}, in which the transmit beamformers of a downlink ISAC system are optimized. Here, the technique is adapted for the optimization of the receive beamformers in a sensing-only radar system. At a technical level, the approach in~\cite{Kareem2025_2} is to drop the non-convex constraints in the BCRB optimization problem and derive the Lagrangian dual of the resulting relaxed problem. This approach works since the relaxation is tight when a full set of $N_T$ sensing beamformers is used (which is the running assumption in~\cite{Kareem2025_2}). However, that is not necessarily the case when a limited number of $M_T$ sensing beamformers ($M_T < N_T$) is used. Therefore, here we derive the Lagrangian dual problem without dropping the non-convex constraints, and the analysis of this problem allows to derive sufficient conditions under which global optimality is guaranteed (i.e., condition~(\ref{eqn:spectral-W})).

	Towards this end, consider the primal problem~(\ref{eqn:bcrb-corr-W}). We derive the Lagrangian dual with respect to the constraint~(\ref{eqn:bcrb-corr-W-1}). Let ${\bf \tilde{\Lambda}}_1, \ldots, {\bf \tilde{\Lambda}}_{3L}$ denote the dual variables defined as
	\begin{equation} \label{eqn:dual-variables}
		{\bf \tilde{\Lambda}}_\ell = \begin{bmatrix}
			{\bf \Gamma}_\ell & -\bd{\lambda}_\ell\\
			-\bd{\lambda}_\ell^\sfT & \nu_\ell
		\end{bmatrix}.
	\end{equation} 
	Then, the dual problem can be written as
	\begin{equation}
		\underset{{\bf \tilde{\Lambda}}_\ell \succeq 0 \, \forall \ell}{\text{max}} \,\, \underset{\substack{{\bf R}_{{\bf W}_k} \in \cR,\\ d_1, \ldots, d_{3L}}}{\min} \sum_{\ell = 1}^{3L}\Big(d_\ell(1-\nu_\ell) +2\sqrt{q_\ell}{\bf e}_\ell^\sfT\bd{\lambda}_\ell -\trace({\bf \Gamma}_\ell {\bf J}_k)\Big),
	\end{equation}
	where $\cR$ denotes the non-convex constraints~(\ref{eqn:bcrb-corr-W-2}) and~(\ref{eqn:bcrb-corr-W-3}), and ${\bf J}_k \triangleq {\bf J}_k({\bf R}_{{\bf V}_k},{\bf R}_{{\bf W}_k})$ is used for notational convenience. Optimizing over $d_1,\ldots, d_{3L}$, we conclude that $\nu_\ell^{*} = 1$ for all $\ell$. Then, the dual problem becomes
	\begin{equation} \label{eqn:bcrb-dual-2}
		\underset{{\bf \Gamma}_\ell \succeq \bd{\lambda}_\ell\bd{\lambda}_\ell^\sfT \,\forall \ell}{\text{max}} \quad \underset{{\bf R}_{{\bf W}_k} \in \cR}{\text{min}} \quad \sum_{\ell = 1}^{3L}\Big(2\sqrt{q_\ell}{\bf e}_\ell^\sfT\bd{\lambda}_\ell -\trace({\bf \Gamma}_\ell {\bf J}_k)\Big),
	\end{equation}
	where the condition ${\bf \Gamma}_\ell \succeq \bd{\lambda}_\ell\bd{\lambda}_\ell^\sfT$ follows from the Schur complement of the constraint ${\bf \tilde{\Lambda}}_\ell \succeq 0$. Recall that
	\begin{equation} 
		{\bf J}_k({\bf R}_{{\bf V}_k},{\bf R}_{{\bf W}_k}) = {\bf J}_k^{(\mathsf{D})}({\bf R}_{{\bf V}_k},{\bf R}_{{\bf W}_k}) + {\bf J}_{k-1}^{(\mathsf{P})},
	\end{equation}
	hence, solving the inner minimization in~(\ref{eqn:bcrb-dual-2}) is equivalent to solving the following problem
	\begin{equation} \label{eqn:bcrb-inner-min-W}
		\underset{{\bf R}_{{\bf W}_k} \in \cR}{\text{max}} \qquad \trace\left({\bf A}{\bf J}_k^{(\mathsf{D})}({\bf R}_{{\bf V}_k},{\bf R}_{{\bf W}_k})\right),
	\end{equation}
	where ${\bf A} \triangleq \sum_{\ell = 1}^{L} {\bf \Gamma}_\ell \succeq 0$. Note that, for any ${\bf A} \succeq 0$, we have
	\begin{align}
		&\trace\left({\bf A}{\bf J}_k^{(\mathsf{D})}({\bf R}_{{\bf V}_k},{\bf R}_{{\bf W}_k})\right) = \sum_{i,j=1}^{3L}[{\bf A}]_{i,j}\left[{\bf J}_k^{(\mathsf{D})}({\bf R}_{{\bf V}_k},{\bf R}_{{\bf W}_k})\right]_{i,j} \nonumber \\
		&= 2\Re \hspace*{-0.12em} \left\{  \hspace*{-0.12em} \trace  \hspace*{-0.12em} \left( \hspace*{-0.12em} {\bf R}_{{\bf W}_k}  \hspace*{-0.12em} \sum_{i,j=1}^{3L}[{\bf A}]_{i,j}\E\left[\dot{{\bf H}}_i(\bd{\theta}) {\bf R}_{{\bf V}_k} \dot{{\bf H}}_j^\sfH(\bd{\theta}) \big| {\bf Y}_{1:k-1}\right]\hspace*{-0.12em} \right) \hspace*{-0.12em}\right\} \nonumber \\
		&= 2\trace\left({\bf R}_{{\bf W}_k}{\bf P}_{\bf A}\right),
	\end{align}
	where
	\begin{align} \label{eqn:G_A}
		{\bf P}_{{\bf A}} &\triangleq \sum_{i,j=1}^{3L}[{\bf A}]_{i,j}\E\left[\dot{{\bf H}}_i(\bd{\theta}) {\bf R}_{{\bf V}_k} \dot{{\bf H}}_j^\sfH(\bd{\theta}) \big| {\bf Y}_{1:k-1}\right] \\
		&=\sum_{m=1}^{M_T}\E\left[\dot{{\bf G}}_{k,m}(\bd{\theta}){\bf A} \dot{{\bf G}}_{k,m}^{\mathsf H}(\bd{\theta}) \,\big|\, {\bf Y}_{1:k-1}\right],
	\end{align}
	in which $\dot{{\bf G}}_{k,m}(\bd{\theta}) = \begin{bmatrix}
		\dot{{\bf H}}_{1}(\bd{\theta}){\bf v}_{k,m} & \cdots & \dot{{\bf H}}_{3L}(\bd{\theta}){\bf v}_{k,m}
	\end{bmatrix}$ and ${\bf v}_{k,m}$ is the $m$-th beamformer of ${\bf V}_k$. Since ${\bf A} \succeq 0$, we have ${\bf P}_{{\bf A}} \succeq 0$. Thus, the inner minimization in~(\ref{eqn:bcrb-dual-2}) is equivalent to
	\begin{subequations} \label{eqn:bcrb-inner-min-W-2}
		\begin{align}
			\underset{{\bf R}_{{\bf W}_k}}{\text{maximize}} \qquad &\trace\left({\bf R}_{{\bf W}_k}{\bf P}_{\bf A}\right)\\
			\text{subject to} \qquad &{\bf R}_{{\bf W}_k} \text{ is an orthogonal projection matrix},\\
			&\rank({\bf R}_{{\bf W}_k}) = M_R, \,\, {\bf R}_{{\bf W}_k} \succeq 0.
		\end{align}
	\end{subequations}
	The key observation is that (\ref{eqn:bcrb-inner-min-W-2}) has the following analytic solution based on the eigenvalue decomposition of ${\bf P}_{\bf A}$:
	\begin{equation} \label{eqn:R_tilde-W}
		\tilde{{\bf R}}_{\tilde{{\bf W}}_k} = \tilde{{\bf W}}_k\tilde{{\bf W}}_k^\sfH,
	\end{equation}
	where $\tilde{{\bf W}}_k = \begin{bmatrix}
		\tilde{{\bf w}}_1 & \ldots & \tilde{{\bf w}}_{M_R}
	\end{bmatrix}$,
	and $\tilde{{\bf w}}_j$ is the eigenvector of ${\bf P}_{{\bf A}}$ corresponding to the $j$-th largest eigenvalue. The solution~(\ref{eqn:R_tilde-W}) follows from the Ky Fan variational characterization for the sum of the $M_R$ largest eigenvalues of a Hermitian matrix~\cite{Fan1949}. In particular, this characterization states that, for any Hermitian matrix ${\bf C} \in \mathbb{C}^{N_R\times N_R}$, the sum of its largest $M_R$ largest eigenvalues can be expressed as
	\begin{equation} \label{eqn:ky-fan}
		\sum_{i=1}^{M_R} \mu_i({\bf C}) = \underset{\substack{{\bf U} \in \bC^{N_R\times M_R}: \\ {\bf U}^\sfH {\bf U} = {\bf I}}}{\max} \,\, \trace\left({\bf U}^\sfH{\bf C} {\bf U} \right),
	\end{equation}
	where the maximizer ${\bf U}^*$ consists of the eigenvectors corresponding to the $M_R$ largest eigenvalues of ${\bf C}$. Note that if we denote ${\bf R}_{{\bf U}} = {\bf UU}^\sfH$, the objective function in~(\ref{eqn:ky-fan}) can be written as $\trace({\bf R}_{{\bf U}}{\bf C})$, and the constraint set in~(\ref{eqn:ky-fan}) can be equivalently expressed as the conditions that: 1) ${\bf R}_{{\bf U}}$ is an orthogonal projection matrix, and 2) $\rank({\bf R}_{{\bf U}}) = M_R$. Hence, this characterization gives the solution of the optimization problem~(\ref{eqn:bcrb-inner-min-W-2}). 
	
	The dual problem~(\ref{eqn:bcrb-dual-2}) can then be written as
	\begin{equation}\label{eqn:bcrb-dual-3}
		\underset{{\bf \Gamma}_\ell \succeq \bd{\lambda}_\ell\bd{\lambda}_\ell^\sfT \,\forall \ell}{\text{maximize}}  \quad \sum_{\ell = 1}^{3L}\Big(2\sqrt{q_\ell}{\bf e}_\ell^\sfT \bd{\lambda}_\ell - \trace({\bf \Gamma}_\ell {\bf J}_{k-1}^{(\mathsf{P})})\Big) - 2 \sum_{i=1}^{M_R} \mu_i\left({\bf P}_{{\bf A}}\right),
	\end{equation}
	where ${\bf A} = \sum_{\ell = 1}^{L} {\bf \Gamma}_\ell \succeq 0$, and $\mu_i(\cdot)$ denotes the $i$-th largest eigenvalue of a matrix. 
	
	Observe that, for any ${\bf \Gamma}_\ell \succeq \bd{\lambda}_\ell\bd{\lambda}_\ell^\sfT$, we have that $\trace({\bf \Gamma}_\ell{\bf J}_{k-1}^{(\mathsf{P})}) \geq \trace(\bd{\lambda}_\ell\bd{\lambda}_\ell^\sfT {\bf J}_{k-1}^{(\mathsf{P})})$ (since ${\bf J}_{k-1}^{(\mathsf{P})}$ is a positive semidefinite matrix). Also, we have ${\bf A} = \sum_{\ell = 1}^{L} {\bf \Gamma}_\ell \succeq \sum_{\ell=1}^{L} \bd{\lambda}_\ell\bd{\lambda}_\ell^\sfT \triangleq {\bf B}$, and hence,
	\begin{equation}
		\sum_{i=1}^{M_R} \mu_i\left({\bf P}_{{\bf A}}\right) \geq \sum_{i=1}^{M_R} \mu_i({\bf P}_{{\bf B}}).
	\end{equation}
	It follows that the optimal solution of~(\ref{eqn:bcrb-dual-3}) should satisfy $\mathbf A = \mathbf B$, so that ${\bf \Gamma}_\ell^{*} = \bd{\lambda}_\ell^{*}(\bd{\lambda}_\ell^{*})^\sfT$ for each $\ell$. Therefore, by denoting ${\bf \Lambda} \triangleq \begin{bmatrix}
		\bd{\lambda}_1 & \cdots & \bd{\lambda}_{3L} 
	\end{bmatrix} \in \mathbb{C}^{3L\times 3L}$, we can write ${\bf B} = {\bf \Lambda} {\bf \Lambda}^\sfT$, and the optimization problem~(\ref{eqn:bcrb-dual-3}) becomes
	\begin{equation} \label{eqn:bcrb-dual-4}
		\underset{{\bf \Lambda} \in \bC^{3L\times 3L}}{\text{maximize}} \quad 2\trace({\bf \Lambda}{\bf Q}^{1/2}) - \trace\left({\bf \Lambda}^\sfT{\bf J}_{k-1}^{(\mathsf{P})}{\bf \Lambda}\right) - 2\sum_{i=1}^{M_R} \mu_i\left({\bf P}_{{\bf \Lambda\Lambda}^\sfT}\right).
	\end{equation}
	Clearly, the optimization problem~(\ref{eqn:bcrb-dual-4}) is unconstrained, and since it is equivalent to the Lagrangian dual of the primal problem~(\ref{eqn:bcrb-corr-W}), the problem is also convex. To verify the convexity, we can check that the objective function in~(\ref{eqn:bcrb-dual-4}) is concave. To see this, notice that the function
	\begin{equation}
		f_1({\bf \Lambda})  \triangleq \trace\left({\bf \Lambda}^\sfT{\bf J}_{k-1}^{(\mathsf{P})}{\bf \Lambda}\right) = \sum_{\ell = 1}^{3L} \bd{\lambda}_\ell^\sfT {\bf J}_{k-1}^{(\mathsf{P})} \bd{\lambda}_\ell
	\end{equation}
	is a convex function (since ${\bf J}_{k-1}^{(\mathsf{P})}$ is a positive semidefinite matrix). Also, consider the function
	\begin{align} \label{eqn:f_2}
		&f_2({\bf \Lambda}) \triangleq  \sum_{i=1}^{M_R} \mu_i\left({\bf P}_{{\bf \Lambda\Lambda}^\sfT}\right) \nonumber \\
		&\stackrel{(a)}{=} \underset{\substack{{\bf U} \in \bC^{N_R\times M_R}: \\ {\bf U}^\sfH {\bf U} = {\bf I}}}{\max} \,\, \trace\left({\bf U}^\sfH{\bf P}_{{\bf \Lambda}{\bf \Lambda}^\sfT} {\bf U} \right) \nonumber \\
		&\stackrel{(b)}{=} \underset{\substack{{\bf U} \in \bC^{N_R\times M_R}: \\ {\bf U}^\sfH {\bf U} = {\bf I}}}{\max} \,\, \sum_{m=1}^{M_T}\E\left[\left\|{\bf U}^\sfH \dot{{\bf G}}_{k,m}(\bd{\theta}) {\bf \Lambda} \right\|_{\mathrm{F}}^2 \,\big|\, {\bf Y}_{1:k-1} \right]
	\end{align}
	where we used in~$(a)$ the Ky Fan variational characterization of the sum of the largest $M_R$ eigenvalues~\cite{Fan1949}, and we used in $(b)$ the definition of ${\bf P}_{{\bf \Lambda}{\bf \Lambda}^\sfT}$. Since the function inside the expectation in~(\ref{eqn:f_2}) is the composition of a convex function and a linear function, and since the pointwise maximum of a family of convex functions is convex, it follows that the function $f_2$ is convex. This confirms that the objective function in~(\ref{eqn:bcrb-dual-4}) is concave, and thus, that the optimization problem is convex.
	
	Now, we proceed to show that ${\bf R}_{{\bf W}_k}^{*}$ given in~(\ref{eqn:R_star-W}) is the optimal primal solution, provided that condition~(\ref{eqn:spectral-W}) holds. Let ${\bf \Lambda}^{*}$ denote the optimal solution of the dual problem~(\ref{eqn:bcrb-dual-W}). Hence, we need to show that $({\bf R}_{{\bf W}_k}^{*}, {\bf \Lambda}^{*})$ is an optimal primal-dual pair of the optimization problem~(\ref{eqn:bcrb-corr-W}). To show this, it is sufficient to show that
	\begin{enumerate}
		\item[1)] ${\bf R}_{{\bf W}_k}^{*}$ is feasible for the primal problem,
		\item[2)] ${\bf \Lambda}^{*}$ is feasible for the dual problem,
		\item[3)] ${\bf R}_{{\bf W}_k}^{*}$ is the minimizer of the Lagrangian function at ${\bf \Lambda}^{*}$,
		\item[4)] $({\bf R}_{{\bf W}_k}^{*}, {\bf \Lambda}^{*})$ satisfy the complementary slackness condition.
	\end{enumerate}
	The key here is that these conditions are sufficient for optimality even when the optimization problem~(\ref{eqn:bcrb-corr-W}) is non-convex~\cite[Proposition 5.3.2]{Bertsekas2009}. 
	
	In the following, we verify that these conditions are satisfied for the pair $({\bf R}_{{\bf W}_k}^{*}, {\bf \Lambda}^{*})$. First, condition~(1) holds since ${\bf R}_{{\bf W}_k}^{*}$, as constructed in~(\ref{eqn:R_star-W}), is an orthogonal projection matrix of rank $M_R$, and hence, is feasible for the primal problem~(\ref{eqn:bcrb-corr-W}). Condition~(2) clearly holds since ${\bf \Lambda}^{*}$ is the solution of the dual problem. Condition~(3) holds by following a similar derivation as the one leading to~(\ref{eqn:R_tilde-W}) while replacing ${\bf A}$ with ${\bf \Lambda}^{*}({\bf \Lambda}^{*})^\sfT$. It remains to show the complementary slackness condition~(4). Recall that at the optimal ${\bf \Lambda}^{*}$, the dual variables in~(\ref{eqn:dual-variables}) can be written as
	\begin{equation}
		{\bf \tilde{\Lambda}}_\ell^{*} = \begin{bmatrix}
			\bd{\lambda}_\ell^* (\bd{\lambda}_\ell^*)^\sfT & -\bd{\lambda}_\ell^*\\
			-(\bd{\lambda}_\ell^*)^\sfT & 1
		\end{bmatrix}, \quad \forall \ell,
	\end{equation}
	where $\bd{\lambda}_\ell^*$ is the $\ell$-th column of ${\bf \Lambda}^*$. Moreover, consider the slack variables
	\begin{equation} \label{eqn:d_star}
		d_\ell^* = q_\ell {\bf e}_\ell^\sfT {\bf J}_k^{-1}( {\bf R}_{{\bf V}_k},{\bf R}_{{\bf W}_k}^*) {\bf e}_\ell,
	\end{equation}
	which satisfy the constraint~(\ref{eqn:bcrb-corr-W-1}) at ${\bf R}_{{\bf W}_k}^*$. Hence, we need to show the complementary slackness condition:
	\begin{equation} \label{eqn:cs}
		\trace\left(\begin{bmatrix}
			{\bf J}_k({\bf R}_{{\bf V}_k},{\bf R}_{{\bf W}_k}^*) & \sqrt{q_\ell}{\bf e}_\ell \\
			\sqrt{q_\ell}{\bf e}_\ell^\sfT &  d_\ell^*
		\end{bmatrix}\begin{bmatrix}
			\bd{\lambda}_\ell^*(\bd{\lambda}_\ell^*)^\sfT & -\bd{\lambda}_\ell^*\\
			-(\bd{\lambda}_\ell^*)^\sfT & 1
		\end{bmatrix}\right)= 0, \,\, \forall \, \ell,
	\end{equation}
	which can be rewritten as the following condition:
	\begin{equation} \label{eqn:cs-2}
		(\bd{\lambda}_\ell^*)^\sfT{\bf J}_k({\bf R}_{{\bf V}_k},{\bf R}_{{\bf W}_k}^*)\bd{\lambda}_\ell^* - 2\sqrt{q_\ell}{\bf e}_\ell^\sfT\bd{\lambda}_\ell^* + d_\ell^* = 0, \quad \forall \, \ell.
	\end{equation}
	This condition holds if we can show that
	\begin{equation} \label{eqn:lambda_star}
		\bd{\lambda}_\ell^* = \sqrt{q_\ell}{\bf J}_k^{-1}({\bf R}_{{\bf V}_k},{\bf R}_{{\bf W}_k}^*){\bf e}_\ell, \quad \forall \, \ell.
	\end{equation}
	In the following, we show that, when condition~(\ref{eqn:spectral-W}) holds, the equality~(\ref{eqn:lambda_star}) follows from the first-order optimality of $\bd{\lambda}_\ell^*$. 
	
	Towards this end, let
	\begin{equation} \label{eqn:f_obj}
		f({\bf \Lambda}) = 2\trace({\bf \Lambda}{\bf Q}^{1/2}) - \trace\left({\bf \Lambda}^{\mathsf T}{\bf J}_{k-1}^{({\mathsf P})}{\bf \Lambda}\right) - 2\sum_{i=1}^{M_R} \mu_{i}\left({\bf P}_{{\bf \Lambda}  {\bf \Lambda}^{\mathsf T}}\right)
	\end{equation}
	denote the objective function in~(\ref{eqn:bcrb-dual-W}). Since ${\bf \Lambda}^*$ is the maximizer of $f$, and since $f$ is a concave function, it follows that
	\begin{equation} \label{eqn:condition_1}
		\nabla_{\bd{\lambda}_\ell} f({\bf \Lambda}) \Big|_{{\bf \Lambda}  = {\bf \Lambda}^*} = 0, \qquad \forall \, \ell=1,\ldots, 3L.
	\end{equation}
	To compute the partial derivative of $f({\bf \Lambda})$ with respect to $\Lambda_{j\ell}$, let us first focus on the last term in $f$, denoted by
	\begin{equation}
		\tilde{f}({\bf \Lambda}) = \sum_{i=1}^{M_R} \mu_{i}\left({\bf P}_{{\bf \Lambda}  {\bf \Lambda}^{\mathsf T}}\right).
	\end{equation}
	We have
	\begin{equation} \label{eqn:derivative_1}
		\frac{\partial \tilde{f}({\bf \Lambda})}{\partial \Lambda_{j\ell}} = \trace \left(\left(\frac{\partial \tilde{f}({\bf \Lambda})}{\partial {\bf P}_{{\bf \Lambda\Lambda}^\sfT}}\right)^\sfH \frac{\partial {\bf P}_{{\bf \Lambda\Lambda}^\sfT}}{\partial \Lambda_{j\ell}}\right),
	\end{equation}
	which follows from the chain rule for matrix-valued functions. The key observation is that the function $\tilde{f}$ is differentiable if and only if there is strict gap between the $M_R$-th and $(M_R+1)$-th largest eigenvalues, i.e., if and only if $\mu_{M_R}\left({\bf P}_{{\bf \Lambda}  {\bf \Lambda}^{\mathsf T}}\right) > \mu_{M_R+1}\left({\bf P}_{{\bf \Lambda}  {\bf \Lambda}^{\mathsf T}}\right)$. In this case, the gradient is smooth and unique, with
	\begin{equation}
		\frac{\partial \tilde{f}({\bf \Lambda})}{\partial {\bf P}_{{\bf \Lambda\Lambda}^\sfT}} = \tilde{{\bf U}}\tilde{{\bf U}}^\sfH,
	\end{equation}
	where $\tilde{{\bf U}} = \begin{bmatrix}
		\tilde{{\bf u}}_1 & \cdots & \tilde{{\bf u}}_{M_R}
	\end{bmatrix}$, and $\tilde{{\bf u}}_i$ is the eigenvector of ${\bf P}_{{\bf \Lambda\Lambda}^\sfT}$ corresponding to the $i$-th largest eigenvalue. Hence, at the optimal ${\bf \Lambda}^*$, provided that condition~(\ref{eqn:spectral-W}) holds, this gradient is equal to ${\bf R}_{{\bf W}_k}^*$ given in~(\ref{eqn:R_star-W}). Moreover, using standard vector calculus, it can be shown that
	\begin{align}
		\frac{\partial {\bf P}_{{\bf \Lambda\Lambda}^\sfT}}{\partial \Lambda_{j\ell}} = \sum_{m=1}^{M_T} \E\big[&\dot{{\bf G}}_{k,m}(\bd{\theta}) \bd{\lambda}_\ell {\bf e}_{j}^\sfT \dot{{\bf G}}_{k,m}^\sfH(\bd{\theta}) \nonumber \\
		&+ \dot{{\bf G}}_{k,m}(\bd{\theta}){\bf e}_{j}\bd{\lambda}_\ell^\sfT \dot{{\bf G}}_{k,m}^\sfH(\bd{\theta}) \,\big|\, {\bf Y}_{1:k-1} \big].
	\end{align}
	This implies that, at the optimal ${\bf \Lambda}^*$,
	\begin{align}
		&\frac{\partial \tilde{f}({\bf \Lambda})}{\partial \Lambda_{j\ell}} \Bigg|_{{\bf \Lambda}  = {\bf \Lambda}^*} \hspace*{-0.25em}  \nonumber \\
		&= \sum_{m=1}^{M_T}2\Re\left\{{\bf e}_j^\sfT \E\left[\dot{{\bf G}}_{k,m}^\sfH(\bd{\theta}) {\bf R}_{{\bf W}_k}^* \dot{{\bf G}}_{k,m}(\bd{\theta}) \,\big|\, {\bf Y}_{1:k-1} \right] \bd{\lambda}_\ell^*\right\} \nonumber \\
		&= {\bf e}_j^\sfT {\bf J}_k^{(\mathsf{D})}({\bf R}_{{\bf V}_k},{\bf R}_{{\bf W}_k}^*) \bd{\lambda}_\ell^*,
	\end{align}
	where the last equality follows from~(\ref{eqn:fisher_data-W-1}).
	It follows that, for each $\ell = 1,\ldots, 3L$,
	\begin{align}
		\nabla_{\bd{\lambda}_\ell} f({\bf \Lambda}) \Big|_{{\bf \Lambda} = {\bf \Lambda}^*} \hspace*{-0.25em} &= 2\sqrt{q_\ell}{\bf e}_\ell - 2{\bf J}_{k-1}^{({\mathsf P})}\bd{\lambda}_\ell^* - 2{\bf J}_k^{(\mathsf{D})}({\bf R}_{{\bf V}_k},{\bf R}_{{\bf W}_k}^*) \bd{\lambda}_\ell^* \nonumber \\
		&= 2\sqrt{q_\ell}{\bf e}_\ell  - 2{\bf J}_k({\bf R}_{{\bf V}_k},{\bf R}_{{\bf W}_k}^*) \bd{\lambda}_\ell^*.
	\end{align} 
	Using the first-order optimality condition~(\ref{eqn:condition_1}), the equality~(\ref{eqn:lambda_star}) follows, which is sufficient to show that the pair $({\bf R}_{{\bf W}_k}^{*}, {\bf \Lambda}^{*})$ satisfies the complementary slackness condition. This shows that the pair $({\bf R}_{{\bf W}_k}^{*}, {\bf \Lambda}^{*})$ satisfies all conditions~(1)--(4). Hence, strong duality holds for the optimization problem~(\ref{eqn:bcrb-corr-W}) provided that condition~(\ref{eqn:spectral-W}) holds, with $({\bf R}_{{\bf W}_k}^{*}, {\bf \Lambda}^{*})$ being an optimal primal-dual pair.
	


	\section{Proof of Theorem~\ref{thm:dual-V}} \label{appx:dual-proof-V}
	The proof of Theorem~\ref{thm:dual-V} follows the same steps as that of Theorem~\ref{thm:dual-W}, with one key distinction. While the optimization problem~(\ref{eqn:bcrb-corr-W}) imposes the orthogonal projection constraint~(\ref{eqn:bcrb-corr-W-2}), the optimization~(\ref{eqn:bcrb-corr-V}) imposes a trace constraint. Following the same steps that lead to~(\ref{eqn:bcrb-inner-min-W}) in Appendix~\ref{appx:dual-proof-W}, it can be shown that the minimization of the Lagrangian function for the optimization problem~(\ref{eqn:bcrb-corr-V}) is equivalent to solving the following optimization problem
	\begin{subequations} \label{eqn:bcrb-inner-min-V}
		\begin{align}
			\underset{{\bf R}_{{\bf V}_k}}{\text{maximize}} \qquad &\trace\left({\bf R}_{{\bf V}_k}\tilde{{\bf P}}_{\bf A}\right) \\ 
			\text{subject to} \qquad &\trace({\bf R}_{{\bf V}_k}) \leq P,\\
			&\rank({\bf R}_{{\bf V}_k}) \leq M_T,\,\, {\bf R}_{{\bf V}_k} \succeq 0,
		\end{align}
	\end{subequations}
	where
	\begin{equation}
		\tilde{{\bf P}}_{{\bf A}} = \sum_{m=1}^{M_R}\E\left[\dot{\tilde{{\bf G}}}_{k,m}(\bd{\theta}){\bf A} \dot{\tilde{{\bf G}}}_{k,m}^{\mathsf H}(\bd{\theta}) \,\big|\, {\bf Y}_{1:k-1}\right],
	\end{equation}
	and $\dot{\tilde{{\bf G}}}_{k,m}(\bd{\theta})$ is as given in Theorem~\ref{thm:dual-V}. The solution to the minimization of the Lagrangian function is thus a rank-1 matrix given by the largest eigenvector of $\tilde{{\bf P}}_{\bf A}$, i.e.,
	\begin{equation} \label{eqn:R_tilde-V}
		\tilde{{\bf R}}_{\tilde{{\bf V}}_k} = P\tilde{{\bf v}}_1\tilde{{\bf v}}_1^\sfH,
	\end{equation}
	where $\tilde{{\bf v}}_1$ is the eigenvector (with unit norm) corresponding to the largest eigenvalue of $\tilde{{\bf P}}_{\bf A}$. The remaining steps to characterize the dual problem follow similarly as in Appendix~\ref{appx:dual-proof-W}, while replacing the sum of eigenvalues in~(\ref{eqn:bcrb-dual-3}) by the single largest eigenvalue of $\tilde{{\bf P}}_{{\bf \Lambda}  {\bf \Lambda}^{\mathsf T}}$, multiplied by $P$. To prove that condition~(\ref{eqn:spectral-V}) is sufficient for finding a globally optimal solution, a similar procedure as in Appendix~\ref{appx:dual-proof-W} is also followed. The key is to show the differentiability of the largest eigenvalue of the direction matrix, which holds if and only if its multiplicity is one (i.e., condition~(\ref{eqn:spectral-V}) holds).

}


\bibliographystyle{IEEEtran}
\bibliography{IEEEabrv,bibliography}

\end{document}